\documentclass{article}



\usepackage[final]{neurips_2025}



\usepackage[utf8]{inputenc} 
\usepackage[T1]{fontenc}    
\usepackage{hyperref}       
\usepackage{url}            
\usepackage{booktabs}       
\usepackage{amsfonts}       
\usepackage{nicefrac}       
\usepackage{microtype}      
\usepackage{xcolor}         

\usepackage{graphicx}

\usepackage{amsmath}
\usepackage{amssymb}
\usepackage{mathtools}
\usepackage{amsthm}

\usepackage{amsthm,amsfonts,amsmath,amssymb}
\usepackage{thmtools}
\usepackage{dsfont}
\usepackage{bbm,bm}
\usepackage{float}
\usepackage{color}
\usepackage{graphicx} 
\usepackage{todonotes}
\usepackage{xspace}
\usepackage{multirow}

\usepackage{bbm}
\usepackage{algorithm}
\usepackage{algpseudocode}
\usepackage{nicefrac}
\usepackage{dashbox}

\usepackage{xfrac}
\usepackage{subcaption}  
\usepackage{array}  
\usepackage{booktabs}  

\usepackage{tikz}
\usetikzlibrary{arrows.meta,positioning,fit,calc}




\definecolor{toc}{RGB}{13,55,174}	
\definecolor{myPurple}{RGB}{175,0,124}

\usepackage{enumitem}		
\usepackage{fontawesome5}   
\usepackage{subcaption}     
\usepackage{mathtools}
\usepackage{multirow}



\renewcommand{\Pr}{\mathop{\bf Pr\/}}
\newcommand{\E}{\mathop{\mathbb{E}\/}}
\newcommand{\Ex}{\mathop{\mathbb{ E}\/}}


\newcommand{\rank}{\textnormal{rank}}

\newcommand{\Regret}{\ensuremath{\mathcal{R}}}

\newcommand{\WTP}{\texttt{WTP}\xspace}


\newcommand{\reals}{\mathbb R}

\newcommand{\nats}{\mathbb N}
\newcommand{\naturals}{\nats}


\newcommand{\calC}{\mathcal{C}}
\newcommand{\calD}{\mathcal{D}}

\newcommand{\calF}{\mathcal{F}}

\newcommand{\calH}{\mathcal{H}}
\newcommand{\calI}{\mathcal{I}}

\newcommand{\calK}{\mathcal{K}}
\newcommand{\calL}{\mathcal{L}}
\newcommand{\calM}{\mathcal{M}}

\newcommand{\calO}{\mathcal{O}}
\newcommand{\calP}{\mathcal{P}}

\newcommand{\calR}{\mathcal{R}}
\newcommand{\calS}{\mathcal{S}}

\newcommand{\calV}{\mathcal{V}}

\newcommand{\calX}{\mathcal{X}}
\newcommand{\calY}{\mathcal{Y}}

\newcommand{\tildecalX}{\widetilde{\calX}}
\newcommand{\tildecalY}{\widetilde{\calY}}



\newcommand{\bone}{\boldsymbol{1}}
\newcommand{\bzero}{\boldsymbol{0}}


\newcommand{\bmu}{\bm{\mu}}
\newcommand{\blambda}{\boldsymbol{\lambda}}
\newcommand{\bnu}{\boldsymbol{\nu}}


\newcommand{\bp}{\bm{p}}
\newcommand{\bg}{\bm{g}}
\newcommand{\bx}{\bm{x}}
\newcommand{\by}{\bm{y}}
\newcommand{\bz}{\bm{z}}
\newcommand{\be}{\bm{e}}
\newcommand{\bw}{\bm{w}}
\newcommand{\bh}{\bm{h}}
\newcommand{\bu}{\bm{u}}
\newcommand{\bH}{\bm{H}}


\def\<{\langle}
\def\>{\rangle}

\newcommand{\Ber}{\textit{Ber}}

\DeclareMathOperator*{\argmax}{argmax}

\newcommand{\norm}[2]{|| #2 ||_{#1}}

\newcommand{\Wikipedia}{\texttt{Wikipedia}\xspace}
\newcommand{\Images}{\texttt{Images}\xspace}
\newcommand{\MovieRec}{\texttt{MovieRec}\xspace}
\newcommand{\Teamformation}{\texttt{TeamFormation}\xspace}
\newcommand{\Influence}{\texttt{Influence}\xspace}
\newcommand{\Coverage}{\texttt{Coverage}\xspace}
\newcommand{\HotpotQA}{\texttt{HotpotQA}\xspace}

\newcommand{\Opt}{\texttt{OPT}}

\makeatletter
\newcommand{\RAOCO}[1][]{%
  \texttt{RAOCO\@ifnotempty{#1}{-\texttt{#1}}}%
}
\makeatother

\newcommand{\Random}{\texttt{Random}\xspace}

\newcommand{\ContOpt}{\texttt{OFLN-CO}\xspace}
\newcommand{\ContOptlong}{\texttt{Continuous-Optimization}\xspace}

\newcommand{\RepGreedy}{\texttt{OFLN-RGR}\xspace}
\newcommand{\RepGreedylong}{\texttt{Replacement-Greedy}\xspace}

\newcommand{\Onestage}{\texttt{1S-OGA}}
\newcommand{\Onestagelong}{\texttt{One-stage-OGA}\xspace}

\newcommand{\OSM}{\texttt{OSM}\xspace}
\newcommand{\OCO}{\texttt{OCO}\xspace}

\newcommand{\indexsetbound}{G}
\newcommand{\parameterbound}{M}
\newcommand{\boundedWTP}{\mathcal{F}_{\indexsetbound,\parameterbound}}

\newcommand{\problemname}{\texttt{\problemnameNoTT}\xspace}
\newcommand{\problemnameNoTT}{O2SSM\xspace}
\newcommand{\offlineproblemname}{\texttt{2SSM}\xspace}

\theoremstyle{plain}
\newtheorem{theorem}{Theorem}[section]

\newtheorem{lemma}[theorem]{Lemma}
\newtheorem{corollary}[theorem]{Corollary}
\theoremstyle{definition}
\newtheorem{definition}[theorem]{Definition}

\theoremstyle{remark}

\newcommand{\spara}[1]{\smallbreak\noindent{\bf{#1}}}

\title{Online Two-Stage Submodular Maximization}

\definecolor{toc}{RGB}{13,55,174}
\definecolor{myPurple}{RGB}{175,0,124}

\hypersetup{
    colorlinks=true,
    citecolor=myPurple,
    linkcolor=toc,
    urlcolor=toc
}

%

\author{%
  Iasonas Nikolaou\thanks{Equal Contribution, \texttt{nikolaou@bu.edu}}\\
  Boston University\\
   \And
   Miltiadis Stouras\thanks{Equal Contribution, \texttt{miltiadis.stouras@epfl.ch}}\\
   EPFL\\
   \And
   Stratis Ioannidis\thanks{\texttt{ioannidis@northeastern.edu}}\\
   Northeastern University\\
   \And
   Evimaria Terzi\thanks{\texttt{evimaria@bu.edu}}\\
   Boston University\\
}


\begin{document}

\maketitle

\begin{abstract}
  Given a collection of monotone submodular functions, the goal of Two-Stage Submodular Maximization (\offlineproblemname) \citep{balkanski2016learning} is to restrict the ground set so an objective selected u.a.r.~from the collection attains a high maximal value, on average, when optimized over the restricted ground set. We introduce the Online Two-Stage Submodular Maximization (\problemname) problem, in which the submodular objectives are revealed in an online fashion. 
  We study this problem for weighted threshold potential functions, a large and important subclass of monotone submodular functions that includes influence maximization, data summarization, and facility location, to name a few.
  We design an algorithm that achieves sublinear $(1 - 1/e)^2$-regret under general matroid constraints and $(1 - 1/e)(1-e^{-k}k^k/k!)$-regret in the case of uniform matroids of rank $k$; the latter also yields a state-of-the-art bound for the (offline) \offlineproblemname\  problem. 
  We empirically validate the performance of our online algorithm with experiments on real datasets.
\end{abstract}

\section{Introduction}

Motivated by applications such as dictionary learning \citep{maas2011learning}, data sumarization~\citep{Lin2012LearningMO}, and recommender systems~\citep{yue2011linear,el2009turning,mirzasoleiman2016fast}, \citet{balkanski2016learning} introduced the following  
\emph{Two-stage Submodular Maximization} problem (\offlineproblemname). Given a ground set $V$ of $n\in\naturals$ elements, in the first stage, an algorithm selects a set $S\subseteq V$ of size $|S|=\ell\in \naturals$. In the second stage, a reward function $f:2^V\to \reals_+$ is sampled u.a.r.~from a collection $\calF$ of monotone submodular functions; the function $f$ is then optimized over subsets of $S$ with size $k< \ell$. The goal of \offlineproblemname{} is to select the restricted ground set $S$ so that the expected second-stage reward is maximized; formally:
\begin{subequations}
\label{eq:mainprob}
\begin{align}
\text{Maximize:}&\quad F(S)\equiv \Ex \left[\max_{S'\subseteq S: |S'|\leq k} f(S) \right],\label{eq:objective}\\
\text{subject~to:}&\quad S\subseteq V, |S|\leq \ell,
\end{align}
\end{subequations}
where the expectation is over the random function $f\in \calF$.  \citet{balkanski2016learning} proposed an approximation algorithm for \offlineproblemname{}. Subsequently, \citet{krause2017} provided a generalization when the second stage optimization is over an arbitrary matroid (rather than a cardinality) constraint.

The \offlineproblemname{} problem is of significant interest when the second-stage optimization needs to happen in near real-time. For example, if the original ground set $V$ is quite large,  restricting the ground truth set prior to the revelation of the objective can yield significant computational dividends. 
Though applications are numerous (see Appendix~\ref{app: applications}), a canonical example is a recommender system~\citep{krause2017,liu2022neural}. In this scenario, $V$ is catalog of items to be recommended to a user cohort. Each user is described by a mononote submodular utility function $f:2^V\to \reals_+$, sampled from a known distribution, that captures the  value of a recommended set of items to the user. To speed-up recommendations and/or item delivery, the recommender a priori restricts recommendations to occur within a set $S\subset V$ of size $|S|=\ell$ items, where $k<\ell < n$. When users arrive, their utility is revealed, and the recommender suggests $k$ items by solving the corresponding cardinality-constrained optimization problem. Restricting recommendations to $S$ speeds up this computation; moreover, items in $S$ can be, e.g., prefetched (non-realtime) to a cache, thereby also accelerating their delivery. 

This offline version of \offlineproblemname, studied by \citet{balkanski2016learning} and~\citet{krause2017}, assumes prior knowledge of the candidate objectives set $\calF$ (and the distribution over it).  When  set $\calF$ is unknown, it is natural to consider an \emph{online} version, in which functions  $f\in\calF$ are revealed sequentially, and the restricted ground set is adapted online. This \emph{Online Two-Stage Submodular Optimization}  (\problemname) problem poses significant challenges: even though functions $f\in \calF$ are submodular, the first-stage objective $F$ in \eqref{eq:objective} is not. Hence, \problemname cannot be solved with existing online submodular maximization techniques \citep{niazadeh2021online, matsuoka2021tracking, salem2024online}. To make matters worse, even when the set $\calF$ is known, computing the objective $F$ is itself NP-hard. As a result, any online algorithm solving \problemname \emph{cannot even assume oracle access to the first-stage objective $F(\cdot)$, even after a sequence of second-stage objectives $f$ has been revealed.} 

Our main contribution is to overcome these challenges: we construct a no-regret online algorithm for \problemname when objectives belong to a specific subclass of submodular functions, namely, Weighted Threshold Potential (\WTP) functions \citep{karimi2017stochastic,stobbe2010efficient}. This subclass strictly generalizes weighted coverage functions and appears in  applications such as recommender systems, team formation, and influence maximization. When $\calF$ is a subset of this class, we can construct a polynomial-time online algorithm that attains sublinear regret with respect to an offline approximation algorithm for \offlineproblemname (with hindsight). We consider both  general matroid and cardinality constraints for the second stage. 
In the latter case,
our online algorithm yields also an improved approximation guarantee for the offline problem, when the objective is a \WTP function (see Table~\ref{tab:approx_comparison}). 

From a technical standpoint, we resolve the challenge of making online decisions without oracle access to the first-stage objective
$F(\cdot)$.
We do so by introducing a fractional relaxation of the objective and a corresponding rounding scheme. 
For Weighted Threshold Potential (\WTP) objectives, we prove that this relaxation is concave and that its supergradients are efficiently computable. 
This enables us to use  Online Convex Optimization (\OCO) methods \citep{hazan2016introduction}, such as Follow the Regularized Leader (FTRL) and Online Gradient Ascent (OGA), to compute fractional restricted ground sets with sublinear regret. 
We then seek a rounding scheme that  converts fractional solutions to integral ones while preserving their expected reward up to a constant factor. We prove that \emph{Randomized Pipage Rounding} \citep{chekuri2009randomized_pipage} has this property through a contention-resolution–based analysis that exploits submodular dominance, as exhibited the distribution induced by this rounding.

\section{Related work}\label{sec:related}

\begin{table*}[!t]
    \centering
    \resizebox{\textwidth}{!}{
    \begin{tabular}{cccccc}
        \toprule
        \multirow{2}{*}{\textbf{Reference}} & \offlineproblemname & \problemname  
        &  \multirow{2}{*}{\textbf{Setting}} &  \multirow{2}{*}{\textbf{Objective Function}} &  \multirow{2}{*}{\textbf{Constraints}} \\
        &  
        \textbf{Offline Approx. Ratio} & \textbf{Online $\alpha$-Regret} & & \\
        \hline
        \citet{balkanski2016learning} & $1 - \sfrac{1}{e} - \sfrac{1}{k^{1/2-\epsilon}} - e^{-\Omega(k^{2\epsilon})}$ & -- & Offline & Submodular & Cardinality \\
        \citet{balkanski2016learning} & $\sfrac{1}{2} (1 - \sfrac{1}{e})$ & -- & Offline & Coverage & Cardinality \\
        \citet{krause2017} & $\sfrac{1}{2} (1 - \sfrac{1}{e^2})$ & -- & Offline & Submodular & Matroid \\
        \citet{p_matroid_2stage} & $\sfrac{1}{(p+1)}(1 - e^{-(p+1)})$ & -- & Offline & Submodular & $p$-Exchange System \\
        \hline
        \citet{mitrovic2018data} & $\sfrac{1}{7}$ & -- & Streaming & Submodular & Matroid \\
        \citet{mitrovic2018data} & $\sfrac{1}{4} (1 - \sfrac{1}{e^2})$ & -- & Distributed & Submodular & Matroid \\
        \hline
        This paper & $(1 - \sfrac{1}{e})(1 - e^{-k} \sfrac{k^k}{k!})$ & $\alpha = (1 - \sfrac{1}{e})(1 - e^{-k} \sfrac{k^k}{k!})$ & Online+Offline & \WTP & Cardinality \\
        This paper & $(1 - \sfrac{1}{e})^2$ & $\alpha = (1 - \sfrac{1}{e})^2$ & Online+Offline & \WTP & Matroid \\
        \bottomrule
    \end{tabular}
    }
    \caption{Summary of known results for different settings of the Two-Stage Submodular Maximization (\offlineproblemname) problem and its online version (\problemname). Our work is the first to consider (and provide guarantees for) \problemname. Our online algorithm  can be converted to a new offline algorithm for \offlineproblemname under \WTP objectives (see Appendix~\ref{appendix-offline}).
    For cardinality constraints, the guarantees of both our (offline) algorithm and the one by \citet{balkanski2016learning} converge to $1 - 1/e\approx 0.63$ as $k \to \infty$, but our algorithm has a strictly better guarantee for finite $k\in \naturals$. For general matroids, our guarantee for \offlineproblemname ($(1-1/e)^2\approx0.40$) is slightly worse than the one by \cite{krause2017}   ($\sfrac{1}{2}(1-1/e^2)\approx 0.43$).}
    \label{tab:approx_comparison}
\end{table*}

\spara{Two-stage Submodular Maximization.}
\citet{balkanski2016learning} introduced the
\emph{Two-stage Submodular Maximization} problem (\offlineproblemname) under cardinality constraints
and proposed an $(1 - 1/e - \Omega(1/\sqrt{k}))$-approximation algorithm. Their algorithm also optimizes a continuous relaxation and applies a basic randomized rounding scheme twice: first, to select the restricted ground set, and then to generate second-stage solutions. Our work departs from \citep{balkanski2016learning} in three ways: (a) we introduce the online variant of the problem, where the objectives arrive sequantially, (b) we generalize the second stage to general matroids and (c) improve both rounding stages: we use randomized pipage rounding \citep{chekuri2009randomized_pipage} in the first stage (which incurs no loss) and state-of-the-art contention resolution schemes in the second stage \citep{dughmi2020}; this yields improved approximation guarantees when applied to the offline setting (see Table~\ref{tab:approx_comparison} and Appendix~\ref{appendix-offline}).

\citet{balkanski2016learning} also propose a local search algorithm
for coverage functions under cardinality constraints that achieves
an $(\sfrac{1}{2}(1-\sfrac{1}{e}))$-approximation. 
Building on this,
\citet{krause2017} design an $(\sfrac{1}{2}(1-\sfrac{1}{e^2}))$-approximation algorithm that works for arbitrary monotone submodular functions under general matroid constraints. 
The latter is the state of the art in the offline setting, though a gap persists between its guarantee and the hardness threshold of $(1-\sfrac{1}{e})$.
Several works have since extended this algorithm to settings with weakly submodular functions \citep{gamma_submodular_2stage}, bounded curvature submodular functions \citep{curvature_2stage}, or knapsack constraints in the first stage \citep{knapsack_2stage}.
\citet{krause2017} leave the \emph{online} version of the problem, which we address in this paper, as an open question.
Our algorithm achieves a $(1-\sfrac{1}{e})^2 \approx 0.40$ approximation under general matroid constraints and {\WTP} functions, which is close to the state-of-the-art $(\sfrac{1}{2}(1-\sfrac{1}{e^2}))\approx 0.43$ approximation for the offline problem. Notably, our online algorithm  is competitive with the best-known approximation algorithm in hindsight,  without accessing  the   objective sequence in advance.

\spara{Streaming Two-Stage Submodular Maximization.}
\citet{mitrovic2018data} study the two-stage submodular maximization problem in the \emph{streaming} setting, where the algorithm makes a single sequential pass over the ground set and must irrevocably decide whether to include or discard each element, while using limited memory. They design algorithms that achieve constant-factor approximations under matroid constraints. In contrast, we consider an \emph{online} setting in which the objective functions, not the elements, are revealed sequentially.
Note that, in the streaming setting, the algorithm has access to all
functions while making its decisions, while the online setting requires
making decisions without knowledge of the objectives.

\spara{(One-stage) Online Submodular Maximization.}
Several \emph{online} algorithms have been proposed for maximizing
submodular functions under various matroid constraints~\citep{streeter2009online, harvey2020improved, niazadeh2021online, matsuoka2021tracking, salem2024online}.
There has also been recent work on the online maximization of continuous DR-submodular functions~\citep{chen2018online, zhang2019online, zhang2022stochastic}.
Our work builds on the recent results of~\citet{salem2024online}, who  reduce Online Submodular Maximization (\OSM) to Online Convex Optimization (\OCO) for the class of {\WTP}  functions. \WTP{} functions,  introduced by~\citet{stobbe2010efficient}, form a broad and important subclass of monotone submodular functions, with applications including influence maximization~\citep{kempe2003maximizing}, facility location~\citep{krause2014submodular}, caching networks~\citep{ioannidis2016adaptive, li2021online}, similarity caching~\citep{salem2022ascent}, recommender systems~\citep{krause2014submodular}, and weighted coverage functions~\citep{karimi2017stochastic}; we review some of these in Appendix~\ref{app: applications}. Our main point of departure from \citet{salem2024online} and, more broadly,  \OSM, is that the objective in {\problemname} is not submodular; in fact, we cannot even assume that we have polytime oracle access to it. Overcoming this obstacle is one of our main contributions.

\section{Technical Preliminary}

\spara{Submodular Functions and Matroids.}
Given a ground set $\calV =[n]\equiv  \{1, \dots, n\} $, a set function $f: 2^\calV \to \reals_+$ is 
\emph{submodular} if  $f(A \cup \{v\}) - f(A)  \geq  f(B \cup \{v\}) - f(B)$ for all $A \subseteq B \subseteq \calV$ and $v \in \calV \setminus B$, and \emph{monotone} if $f(A) \leq f(B)$  for all $A \subseteq B \subseteq 2^\calV$.
A \emph{matroid} is a pair 
$\calM = (\calV, \calI)$, where  $\calI \subseteq  2 ^{\calV}$, 
for which the following holds:  (\emph{i}) if $B \in \calI$ and $A \subseteq B$, then $A \in \calI$,
(\emph{ii}) if $A, B \in \calI$ and $|A| < |B|$, then there exists a $v\in B \setminus A$ s.t. $A \cup \{v\} \in \calI$.
The elements of $\calI$ are called independent sets. The cardinality of the largest
independent set is called the \textit{rank} of the matroid $\calM$.

The \emph{characteristic vector} $\bx \in \{0,1\}^n$  of a set $S \subseteq [n]$  is the vector whose $i$-th coordinate $x_i$ is $1$ if and only if $i \in S$.
The \emph{matroid polytope} $\calP(\calM)$ is 
the convex hull of the characteristic vectors of $\calM$'s independent sets.
For the remainder of the paper, we will identify sets $S \subseteq \calV = [n]$ with their characteristic vectors $\bx \in \{0,1\}^n$. Hence, we treat matroids $\calM$ as subsets of $\{0,1\}^n$, while $\calP(\calM)\subseteq [0,1]^n$. Similarly, with a slight abuse of notation, we  refer to functions $f: 2^{\calV} \to \reals_+$ via their equivalent representations as functions of binary vectors $f:\{0,1\}^n\to \reals_+$.

\spara{Two-Stage Submodular Maximization.}
In the two-stage submodular maximization setting of~\citet{balkanski2016learning}, we are given a collection  $\calF = \{ f_t \}_{t=1}^T$ of monotone and submodular objective functions over the ground set  $\calV = [n]$. 
The goal is to select a \emph{smaller ground set}  that yields a high objective value in expectation, when the function to be maximized is selected u.a.r.~from $\calF$. 
 
Formally, let $\calX_\ell\subseteq \{0,1\}^n$ be the uniform matroid of rank $\ell$ over the ground set $\calV=[n]$, i.e.,
$$\calX_\ell = \big\{\bx \in \{0, 1\}^n: \ \textstyle\sum_{i =1}^n x_i \leq \ell\big\}.$$
  For  $\bx\in \calX_\ell$ and  matroid $\calM$, let $\calY_\calM(\bx)\subseteq\{0,1\}^n$  be all independent sets in $\calM$ that are subsets of $\bx$, i.e.:
  
$$\calY_\calM(\bx) = \left \{ \by \in \{0, 1\}^n:\  \by \in \calM, \ \by \leq \bx  \right \}.$$ 

 Given $f_t\in \calF$, we define:
\begin{align}
F_t(\bx)  =  \max_{\by \in \calY_\calM(\bx)} f_t(\by),\quad \text{for}~\bx \in \calX_\ell,
\end{align}
to be the objective attained by maximizing $f_t$ over the restriction of $\calM$ to subsets of $\bx$. Then, the Two-Stage Submodular Maximization (\offlineproblemname) problem can be stated as:
\begin{subequations}

\begin{align}  
\text{Maximize}:\quad &     F(\bx) = \textstyle \sum_{t=1}^T F_t(\bx),\\
\text{subj. to:}\quad &   \bx\in \calX_\ell.
\end{align}  
\end{subequations}
In the recommender system example from the introduction, $\calM=\calX_k$, i.e., the uniform matroid of rank $k\leq \ell$, and $\calF$ is a set of user utility functions $f_t$, capturing the value of a recommended set to user $t$. Notice that {\offlineproblemname} generalizes (one-stage) submodular maximization
subject to a cardinality constraint, if we set $\ell = k$,
and is therefore NP-hard to approximate within a factor better than $1-1/e$ \citep{feige1998threshold}.
Moreover, even computing  $F_t(\bx)$ for a given $\bx$ is NP-hard, so a polynomial-time algorithm for  \offlineproblemname\ cannot assume oracle access to $F(\cdot)$.\footnote{It is worth noting that \offlineproblemname\ remains NP-hard even if one has oracle access to $F(\cdot)$.} 
Finally,  the function $F(\cdot)$ is \emph{not} submodular \citep{balkanski2016learning}, and approximation algorithms cannot readily rely on this property.
Nevertheless, as discussed in Sec.~\ref{sec:related}, there exist polytime approximation algorithms for \offlineproblemname.

\spara{Online Convex Optimization (OCO).}  
In Online Convex Optimization (\OCO), an algorithm makes sequential decisions in order to maximize concave reward functions that are revealed in an online fashion. Formally, at each time step $t = 1, \dots, T$, a decision maker selects an $\bx_t$ from a compact and convex set $\calK$, and then observes a concave reward function $g_t : \calK \to \reals$. Decision $\bx_t$ is made via a (possibly randomized) online algorithm  $\calP_{\calX} (\cdot) $, which takes as input the history of reward functions and decisions before time $t$; that is, 

 $   \bx_t = \calP_\calX(\{ (g_\tau, \bx_\tau)\}_{\tau=1}^{t-1}).$

The goal of the decision maker is to minimize the regret, defined as:  
\begin{align*}  
    \mathrm{\calR}_T = \max_{\bx \in \calK} \sum_{t=1}^T g_t(\bx) -  \sum_{t=1}^T g_t(\bx_t).  
\end{align*}
Intuitively, sublinear regret (a.k.a.~no-regret) implies that the cummulative reward of the online algorithm is comparable to the one attained by the optimal algorithm in hindsight.
Standard  algorithms, such as \emph{Follow-the-Regularized-Leader} (FTRL)~\citep{shalev2012online} and \emph{Online Gradient Ascent} (OGA)~\citep{zinkevich2003online} achieve sublinear regret under mild assumptions:

\begin{theorem}\label{thm:oco-no-regret}
\citep{hazan2016introduction}
    Let $\{g_t\}_{t=1}^T$ be a sequence of concave functions, each $L$-Lipschitz 
    (with respect to the $\ell_2$ norm) over a convex and compact domain $\calK$ with $\ell_2$-diameter $D$. Then, FTRL and OGA achieve a regret bound of $\calO\left(DL\sqrt{T}\right)$.
\end{theorem}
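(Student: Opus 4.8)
The plan is to prove the $\calO(DL\sqrt T)$ bound for Online Gradient Ascent (OGA) in full and then indicate the parallel argument for FTRL. Recall that OGA with step size $\eta>0$ produces iterates $\bx_{t+1}=\Pi_\calK(\bx_t+\eta\bg_t)$, where $\bg_t$ is any supergradient of the concave function $g_t$ at $\bx_t$ and $\Pi_\calK$ denotes Euclidean projection onto $\calK$. I would rely on two elementary facts: since $g_t$ is $L$-Lipschitz, $\|\bg_t\|_2\le L$; and since $g_t$ is concave, for every fixed comparator $\bx\in\calK$,
\[
 g_t(\bx)-g_t(\bx_t)\;\le\;\langle\bg_t,\,\bx-\bx_t\rangle .
\]
Consequently $\calR_T=\max_{\bx\in\calK}\sum_{t=1}^T\big(g_t(\bx)-g_t(\bx_t)\big)\le\max_{\bx\in\calK}\sum_{t=1}^T\langle\bg_t,\bx-\bx_t\rangle$, so it suffices to bound the linearized regret on the right.

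The core step is a one-step ``potential'' inequality obtained from the non-expansiveness of projection onto a convex set: for any $\bx\in\calK$,
\[
 \|\bx_{t+1}-\bx\|_2^2\;\le\;\|\bx_t+\eta\bg_t-\bx\|_2^2\;=\;\|\bx_t-\bx\|_2^2+2\eta\,\langle\bg_t,\bx_t-\bx\rangle+\eta^2\|\bg_t\|_2^2 ,
\]
which I would rearrange into $\langle\bg_t,\bx-\bx_t\rangle\le\tfrac{1}{2\eta}\big(\|\bx_t-\bx\|_2^2-\|\bx_{t+1}-\bx\|_2^2\big)+\tfrac{\eta}{2}\|\bg_t\|_2^2$. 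Summing over $t=1,\dots,T$ telescopes the first term to at most $\tfrac{1}{2\eta}\|\bx_1-\bx\|_2^2\le\tfrac{D^2}{2\eta}$, while the second is at most $\tfrac{\eta L^2T}{2}$; hence $\calR_T\le\tfrac{D^2}{2\eta}+\tfrac{\eta L^2T}{2}$, and setting $\eta=D/(L\sqrt T)$ gives $\calR_T\le DL\sqrt T$.

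For FTRL I would instead play $\bx_t\in\argmax_{\bx\in\calK}\big(\sum_{\tau=1}^{t-1}g_\tau(\bx)-\tfrac1\eta R(\bx)\big)$ with a $1$-strongly convex regularizer $R$ (e.g.\ $R(\bx)=\tfrac12\|\bx\|_2^2$), which is well-defined by compactness and convexity of $\calK$. The standard ``be-the-leader'' decomposition bounds the regret by $\tfrac1\eta\big(\max_{\bx\in\calK}R(\bx)-\min_{\bx\in\calK}R(\bx)\big)+\sum_{t=1}^T\big(g_t(\bx_{t+1})-g_t(\bx_t)\big)$; the first term is $O(D^2/\eta)$, and $(1/\eta)$-strong convexity of the regularized objective forces $\|\bx_{t+1}-\bx_t\|_2\le\eta L$, so each stability term is at most $L\|\bx_{t+1}-\bx_t\|_2\le\eta L^2$, for a total of $O(\eta L^2T)$. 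Balancing $\eta$ as before again yields $\calO(DL\sqrt T)$.

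I expect no serious obstacle: this is a textbook argument, and the write-up is mainly about bookkeeping — working with supergradients (so that only $\|\bg_t\|_2\le L$ is used, not smoothness), invoking non-expansiveness of the Euclidean projection, and noting that the step size depends on the horizon $T$. If $T$ is unknown, a time-varying step size $\eta_t\asymp D/(L\sqrt t)$ or a doubling trick preserves the $\calO(DL\sqrt T)$ rate up to a constant factor; compactness of $\calK$ ensures that $D$, and hence the bound, is finite.
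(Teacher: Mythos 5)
This theorem is not proved in the paper at all --- it is quoted directly from \citet{hazan2016introduction}, with Appendix~\ref{app: oco} only restating the OGA and FTRL update rules --- and your argument is precisely the standard proof from that reference: the supergradient linearization of regret, the projection non-expansiveness/telescoping potential for OGA with $\eta = D/(L\sqrt{T})$, and the be-the-leader plus stability decomposition for FTRL, all of which are correct. The only nitpick is in the FTRL branch: with the uncentered regularizer $R(\bx)=\tfrac12\|\bx\|_2^2$ the quantity $\max_{\bx\in\calK}R(\bx)-\min_{\bx\in\calK}R(\bx)$ need not be $\calO(D^2)$ when $\calK$ sits far from the origin, so one should center the regularizer at some $\bx_0\in\calK$ (i.e., use $\tfrac12\|\bx-\bx_0\|_2^2$) to get the claimed $\calO(D^2/\eta)$ term.
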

We provide a detailed description of FTRL and OGA in Appendix~\ref{app: oco}.

\spara{$\alpha$- Regret.} 
In many online combinatorial optimization problems (e.g., \citep{streeter2009online,harvey2020improved, gergatsouli2022online, christou2023efficient}), especially those whose offline counterparts are NP-hard, the performance of an online algorithm $\calP_{\calX_\ell}(\cdot)$ is evaluated relative to the best $\alpha$-approximate solution in hindsight. This is measured through  $\alpha$-Regret \citep{kakade2007}: \begin{align} \Regret_{\alpha, T} = \alpha \max_{\bx \in \calX_\ell} \sum_{t=1}^T F_t(\bx) - \sum_{t=1}^T F_t(\bx_t). \label{eq:alpharegret} \end{align} The goal is then to design online algorithms that attain sublinear (e.g., $\calO(\sqrt{T})$) expected $\alpha$-Regret,\footnote{Where the expectation is w.r.t.~any random choices of the online algorithm $\calP_{\calX_\ell}$.} having thus a cumulative reward  competitive with that of an $\alpha$-approximate offline algorithm. This relaxation is necessary in our setting (\problemname), as achieving sublinear regret for $\alpha = 1$ would imply a polynomial-time solution to the (NP-hard) offline problem.

\spara{Weighted Threshold Potentials.}
A \emph{Weighted Threshold Potential (\WTP) function}  \citep{stobbe2010efficient,salem2024online} is a submodular function of the form:
\begin{align}
    \label{eq: WTP}
    f(\by) = \sum_{j \in \mathcal{C}} c_j \min \left\{b_j, \by \cdot \mathbf{w}_j \right\}, \quad\text{for}~\by \in \{ 0, 1\}^n,
\end{align}
where $\mathcal{C}$ is an arbitrary index set, $c_j > 0$, $b_j \in \reals_{\geq 0} \cup \{\infty\}$ and $\mathbf{w}_j \in \reals_{\geq 0}$. For an extensive review of {\WTP}s, we refer to Appendix B of~ \citet{salem2024online}.
Given a {\WTP} function $f$, its \emph{concave relaxation} $\widetilde{f}$ is the concave function we obtain by relaxing the domain of $f$  to be $[0, 1]^n$, i.e., 
\begin{align}
\label{eq: WTP relaxation}
    \widetilde{f}(\by) = \sum_{j \in \mathcal{C}} c_j \min \left\{b_j, \by \cdot \mathbf{w}_j \right\},\quad \text{for}~ \by \in [0, 1]^n.
\end{align}

We will use $\boundedWTP$ to denote the class of {\WTP} functions as defined in Eq.~\eqref{eq: WTP}, that
have at most $\indexsetbound$ terms ($|\calC| \leq \indexsetbound$) and whose parameters are upper bounded by $M$ ($c_j \leq \parameterbound, b_j \leq \parameterbound, \norm{\infty}{w} \leq \parameterbound, \forall j \in \calC$).

\section{Online Two-stage Submodular Maximization (\problemnameNoTT)}\label{sec:o2ssm-definition}
In this section, we define the  \emph{Online Two-stage Submodular Maximization}  (\problemname) problem. Using our running example of a recommender system, we assume that the set $\calF$ \emph{is not known a priori, and the user utilities are revealed sequentially.} In turn, the recommender system maintains and adapts its cache of $\ell\in \naturals$ items as utilities are revealed, via an online algorithm that has access only to the history of utilities and caching decisions.

\spara{The \problemname\ Problem.} 
Formally, at each time step $t\in \naturals$, a decision maker selects a set $\bx_t\in \calX_\ell$ of size at most $\ell\in \naturals$
 to serve as a (restricted) ground set. 
After committing to $\bx_t$,
a monotone submodular  function $f_t:\{0,1\}^n\to\reals_+$ is revealed, and the decision maker accrues reward:
\begin{align}
\label{eq: reward}
    F_t(\bx_t) = \max_{\by_t \in \calY_\calM(\bx_t)} f_t(\by_t), 
\end{align}
where $\calM$ is a matroid (e.g., $\calM = \calX_k$, for $k<\ell$).

The set $\bx_t=P_{\calX_\ell}(\{(f_\tau, \bx_\tau)\}_{\tau=1}^{t-1})$ is selected without knowledge of the function $f_t$, by an online algorithm that has only access to past  history.

\problemname\ poses two significant challenges: 
first, the reward functions $F_t(\cdot)$ are not submodular, and the problem does not reduce to standard Online Submodular Maximization. 
Second, to make matters worse, evaluating functions $F_t(\cdot)$ for a given $\bx_t$ is NP-hard. 
As a result, \emph{the online algorithm  $\calP_{\calX_\ell}(\cdot)$ cannot assume oracle access to reward function $F_t$, even after $f_t$ has been revealed.}
Nevertheless, if functions $f_t$ belong to the \WTP\ class, we can indeed construct a polynomial time online algorithm that attains sublinear $\alpha$-regret, 
as we show in the next section.
\section{Our Algorithm}

In this section, we present \RAOCO\ (Algorithm~\ref{alg: raoco}), an algorithm that achieves sublinear $(1-1/e)^2$-Regret for \WTP\ functions under arbitrary matroid constraints on the second stage. When the second-stage matroid is a uniform matroid of rank $k$ (i.e., cardinality constraints), {\RAOCO} achieves an improved guarantee, namely sublinear $(1-1/e)(1-e^{-k}k^k/k!)$-Regret.

\subsection{Description of the Algorithm}

At a high level, \RAOCO\ uses an {\OCO} policy (such as FTRL or OGA), which outputs a fractional restricted ground set $\widetilde{\bx}_t \in \tildecalX_\ell$. It then rounds the fractional ground set $\widetilde{\bx}_t$ to an integral ground set $\bx_t \in \calX_\ell$ using Randomized Pipage Rounding \citep{chekuri2009randomized_pipage}.
Importantly, we prove that the rounding procedure preserves, in expectation, the quality of the fractional solution produced by the {\OCO} policy. We now describe each step of the algorithm in more detail.

\spara{Fractional Relaxation.}
We begin by introducing a fractional relaxation of the reward function. Let
\[
\tildecalX_\ell = \left\{ \widetilde{\bx} \in [0,1]^n : \textstyle\sum_{i=1}^n \widetilde{x}_i \leq \ell \right\}
\]
denote the convex hull of the feasible restricted ground sets $\calX_\ell$.  
Given $\widetilde{\bx} \in \tildecalX_\ell$ and a matroid $\calM$, we also define the corresponding relaxed set of second-stage feasible solutions as
\[
\tildecalY_\calM(\widetilde{\bx}) = \left\{ \widetilde{\by} \in [0,1]^n : \widetilde{\by} \in \calP(\calM),\ \widetilde{\by} \leq \widetilde{\bx} \right\},
\]
where $\calP(\calM)$ denotes the matroid polytope associated with $\calM$.

Let $f$ be a \WTP\ function, as defined in Eq.~\eqref{eq: WTP}, and let $\widetilde{f}$ be its concave relaxation, as defined in Eq.~\eqref{eq: WTP relaxation}.
We then define the \emph{concave relaxation} $\widetilde{F}$ of the reward function/first-stage objective $F$ as
\begin{align}
\widetilde{F}(\widetilde{\bx}) = \max_{\widetilde{\by} \in \tildecalY_\calM(\widetilde{\bx})} \widetilde{f}(\widetilde{\by}), \quad \text{for}~\widetilde{\bx} \in \tildecalX_\ell.
\label{eq:fractional-reward}
\end{align}
As discussed below, we prove that $\widetilde{F}$ is indeed concave and Lipschitz over $\tildecalX_\ell$, and that both $\widetilde{F}$ and its supergradients can be computed in polynomial time.
These properties are in stark contrast to the original reward function $ \max_{\by \in \calY(\bx)} f(\by)$, whose evaluation is NP-hard.

\spara{Algorithm.}
Our algorithm, \RAOCO, is summarized in Algorithm~\ref{alg: raoco}. It operates using two main subroutines: (a) an {\OCO} policy $\calP_{\tildecalX_\ell}$ that maintains fractional restricted ground sets over $\tildecalX_\ell$, and (b) a rounding scheme $\Xi : \tildecalX_\ell \to \calX_\ell$ that maps fractional solutions to integral ones.

At each time step $t > 1$, the algorithm computes a fractional solution $\widetilde{\bx}_t$ using the {\OCO} policy, based on the history of \emph{concave  relaxations of reward functions revealed so far}; then, it rounds the fractional solution to an integral ground set via the randomized rounding scheme.
Formally, \begin{align}\widetilde{\bx}_t = \calP_{\tildecalX_\ell}\left( \big\{ \big(\widetilde{F}_\tau , \widetilde{\bx}_\tau\big)\big\}_{\tau=1}^{t-1} \right),   \quad \text{and}\quad 
\bx_t = \Xi(\widetilde{\bx}_t),\end{align}

where $\calP_{\tildecalX_\ell}$ is instantiated either as FTRL or OGA (see Appendix~\ref{app: oco}), and that the rounding scheme $\Xi: \tildecalX_\ell \to \calX_\ell$ is  Randomized Pipage Rounding~\citep{chekuri2009randomized_pipage} (see Appendix~\ref{app: pipage}).

Note that, after selecting the restricted ground set $\bx_t$, the algorithm observes the \WTP\ function $f_t$ and constructs the fractional relaxation $\widetilde{F}_t$ of its corresponding reward function $F_t$; the history of these (concave) functions is then passed to {\OCO} policy $\calP_{\tildecalX_\ell}$ to produce $\widetilde{\bx}_{t+1}$.  Moreover, as  $\Xi$  is  randomized, so is {\RAOCO}; our characterization of the expected regret will be w.r.t.~this randomness.

\begin{algorithm}[!t]
\caption{Rounding-Augmented {\OCO} (\RAOCO)}
\label{alg: raoco}
\begin{algorithmic}[1] 
\State \textbf{Uses:}
\State (1) FTRL or OGA  {\OCO} policy $\mathcal{P}_{\tildecalX_\ell}$ (see Appendix~\ref{app: oco})
\State (2) Randomized Pipage Rounding
$\Xi: \tildecalX_\ell \to \calX_\ell$  (see Appendix~\ref{app: pipage})
\For{$t = 1$ to $T$}
    \State $\widetilde{\bx}_t \gets \mathcal{P}_{\tildecalX_\ell} \left( \widetilde{F}_1, \dots, \widetilde{F}_{t-1} \right)$ \Comment{Compute fractional solution}
    \State $\bx_t \gets \Xi(\widetilde{\bx}_t)$ \Comment{Round to obtain the integral solution}
    \State Play action $\bx_t$ \Comment{Accrue reward $F_t(\bx_t)$ (implicitly)}
    \State Construct relaxation $\widetilde{F}_t$ from $f_t$ \Comment{Function $f_t$ has been revealed}
\EndFor
\end{algorithmic}
\end{algorithm}

\subsection{Main Result}\label{sec:main-result}

Our main contribution is the following theoretical guarantee of {\RAOCO}:

\begin{theorem}\label{thm:alg-c-regret}
Fix a matroid $\calM \subseteq [n]$ and $\ell, k \in \nats$. 
For any  sequence of {\WTP} functions $\{f_\tau\}_{\tau = 1}^T \subseteq \boundedWTP$, 
Algorithm~\ref{alg: raoco} runs in  polynomial time in $n, \indexsetbound$ and $\log \parameterbound$ and exhibits sublinear $c_\calM (1-1/e)$-Regret, i.e.,
$$
\E[\Regret_{c_\calM (1-1/e), T}] = \calO\left(\ell \indexsetbound \parameterbound^2 \sqrt{nT}\right),
$$
where $c_\calM = (1 - e^{-k}k^k/k!)$ if $\calM$ is a uniform matroid of rank $k$ and $c_\calM = (1-1/e)$ otherwise.

\end{theorem}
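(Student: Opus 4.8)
The plan is to decompose the $c_{\calM}(1-1/e)$-regret into two parts: (i) the regret of the \OCO\ policy run on the concave relaxations $\widetilde{F}_t$, which is sublinear by Theorem~\ref{thm:oco-no-regret}; and (ii) a multiplicative loss of $c_{\calM}(1-1/e)$ incurred by passing from the integral problem to the fractional relaxation and back via Randomized Pipage Rounding. Concretely, for $\bx^\star = \argmax_{\bx \in \calX_\ell} \sum_t F_t(\bx)$, I would chain the inequalities
\begin{align*}
c_{\calM}(1-1/e)\sum_{t=1}^T F_t(\bx^\star) \;\le\; c_{\calM}(1-1/e)\sum_{t=1}^T \widetilde{F}_t(\bx^\star) \;\le\; (1-1/e)\sum_{t=1}^T \widetilde{F}_t(\widetilde{\bx}_t) + \calR_T \;\le\; \sum_{t=1}^T \E[F_t(\Xi(\widetilde{\bx}_t))] + \calR_T,
\end{align*}
where the first step uses $F_t \le \widetilde{F}_t$ on integral points (the relaxation only enlarges the feasible set and $\widetilde{f}\ge f$ on $\{0,1\}^n$; in fact they agree), the middle step applies the \OCO\ regret guarantee against the fixed comparator $\bx^\star \in \tildecalX_\ell$ together with a scaling by $(1-1/e)$, and the last step is the crucial rounding lemma. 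Rearranging gives $\E[\Regret_{c_{\calM}(1-1/e),T}] \le \calR_T = \calO(DL\sqrt{T})$, and I would then bound the $\ell_2$-diameter $D$ of $\tildecalX_\ell$ by $\calO(\sqrt{n})$ and the Lipschitz constant $L$ of each $\widetilde{F}_t$ by $\calO(\ell \indexsetbound \parameterbound^2)$ (a supergradient of $\widetilde{F}_t$ is obtained from the optimal second-stage $\widetilde{\by}$ and the \WTP\ structure bounds its norm), yielding the stated $\calO(\ell \indexsetbound \parameterbound^2 \sqrt{nT})$ bound. Polynomial running time follows because $\widetilde{F}_t$ and its supergradient reduce to maximizing a concave \WTP\ relaxation over $\tildecalY_\calM(\widetilde{\bx})$ — a polynomial-size convex program — and pipage rounding and the \OCO\ updates are polynomial.

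The heart of the argument, and the main obstacle, is the rounding inequality $\E[F_t(\Xi(\widetilde{\bx}_t))] \ge c_{\calM}(1-1/e)\,\widetilde{F}_t(\widetilde{\bx}_t)$ for every fixed \WTP\ $f_t$ and fractional point $\widetilde{\bx}_t$. I would prove this in two stages. First, since Randomized Pipage Rounding over the polytope $\tildecalX_\ell$ preserves marginals ($\Pr[i \in \bx_t] = \widetilde{x}_{t,i}$) and — being a matroid polytope rounding — yields a distribution with negative correlation / submodular dominance, the characteristic vector $\bx_t$ stochastically dominates an independent Bernoulli($\widetilde{\bx}_t$) sample in the sense that suffices for correlation-gap / contention-resolution arguments. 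Second, I would fix the \emph{fractional} second-stage optimizer $\widetilde{\by}^\star \in \tildecalY_\calM(\widetilde{\bx}_t)$ witnessing $\widetilde{F}_t(\widetilde{\bx}_t)$, and apply a contention resolution scheme (CRS) for the matroid $\calM$ to the scaled point $\widetilde{\by}^\star$ restricted to the support realized by $\bx_t$: this produces a random integral $\by_t \in \calY_\calM(\bx_t)$ with $\Pr[i \in \by_t \mid \text{relevant events}] \ge c_{\calM}\,\widetilde{y}^\star_i$. Feeding this back through the concavity of $\widetilde{f}$ (Jensen) and the fact that $\widetilde{f}$ is monotone and, coordinatewise, essentially a sum of concave functions, gives $\E[f_t(\by_t)] \ge (1-1/e) c_{\calM} \widetilde{f}(\widetilde{\by}^\star)$, where the $(1-1/e)$ is the correlation-gap loss for the outer relaxation and $c_{\calM}$ is the CRS parameter — $(1-1/e)$ for a general matroid and the sharper $1-e^{-k}k^k/k!$ for the uniform matroid of rank $k$ (the success probability of the natural "keep the first $k$" scheme, computable via a balls-in-bins/Poisson estimate). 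Since $f_t(\by_t)$ is a feasible second-stage value for $\bx_t$, we have $F_t(\bx_t) \ge f_t(\by_t)$, and taking expectations closes the loop.

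The delicate points I expect to spend the most care on are: (a) making precise the "submodular dominance" property of the pipage-rounding distribution and verifying it interacts correctly with both the outer correlation gap and the inner CRS — i.e., that one can compose the two rounding steps without the bounds multiplying badly, which is why the final loss is the \emph{product} $c_{\calM}(1-1/e)$ and not something worse; (b) checking that $\widetilde{F}_t$ is genuinely concave — this needs the observation that $\max_{\widetilde{\by} \in \tildecalY_\calM(\widetilde{\bx})} \widetilde{f}(\widetilde{\by})$ is a partial maximization of a jointly concave function over a polytope whose description is linear in $\widetilde{\bx}$, hence concave in $\widetilde{\bx}$; and (c) the uniform-matroid computation giving exactly $1-e^{-k}k^k/k!$, which I would derive by analyzing the probability that a Poisson-like number of "survivors" exceeds $k$ and arguing this is the tight CRS constant for rank-$k$ uniform matroids. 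Everything else — the Lipschitz/diameter estimates, the invocation of Theorem~\ref{thm:oco-no-regret}, and the bookkeeping of the expectation over $\Xi$'s randomness — is routine.
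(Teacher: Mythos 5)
Your proposal follows essentially the same route as the paper: run an OCO policy on the concave relaxations $\widetilde{F}_t$ (concavity via partial maximization of a jointly concave function, Lipschitzness via LP sensitivity, supergradients from the dual of the second-stage LP), then transfer the fractional guarantee to the integral one via the rounding inequality $\E[F_t(\Xi(\widetilde{\bx}_t))] \ge c_\calM(1-1/e)\widetilde{F}_t(\widetilde{\bx}_t)$, itself proved by coupling pipage rounding with an independent thinning of the second-stage optimizer, invoking submodular dominance to get a $c_\calM$-selectable CRS, and paying $(1-1/e)$ for the correlation gap of \WTP functions. Two small slips: (i) in your displayed chain the middle term should read $c_\calM(1-1/e)\sum_t\widetilde{F}_t(\widetilde{\bx}_t)+\calR_T$ (you drop the $c_\calM$ there, though you state the rounding lemma correctly later); (ii) your bookkeeping assigns $D=\calO(\sqrt{n})$ and $L=\calO(\ell\indexsetbound\parameterbound^2)$, but the $\ell_2$-Lipschitz constant of $\widetilde{F}$ is genuinely $\Theta(\indexsetbound\parameterbound^2\sqrt{n})$ in the worst case (e.g.\ a single potential $\min\{b,\by\cdot\mathbf{1}\}$ with large $b$ has supergradient $\mathbf{1}$), so that split would only yield $\calO(\indexsetbound\parameterbound^2 n\sqrt{T})$. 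The paper instead bounds the diameter of $\tildecalX_\ell$ by $2\ell$ using its cardinality constraint and takes $L=\indexsetbound\parameterbound^2\sqrt{n}$, which recovers the stated $\calO(\ell\indexsetbound\parameterbound^2\sqrt{nT})$; with that swap your argument goes through.
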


The proof of Theorem~\ref{thm:alg-c-regret} relies on two key ingredients: 
First, we show that the sequence of fractional restricted ground sets computed by the {\OCO} policy has sublinear regret in the fractional relaxation of the problem (Lemma~\ref{lem:alg-oco-no-regret}).
 Second, we show that applying Randomized Pipage Rounding to a fractional restricted ground set yields an integral ground set that, in expectation, admits second-stage solutions with approximately the same value as those in the original fractional ground set. Consequently, the reward obtained from the rounded ground set is approximately the same as that of the original fractional ground set (Lemma~\ref{lem:alg-rounding-c-approx}).

We now describe each of these components in more detail and use them in order to prove Theorem~\ref{thm:alg-c-regret}.

\spara{Regret of the fractional {\OCO} policy.}
Recall that the concave relaxation
$\widetilde{F}$, defined in Eq.~\eqref{eq:fractional-reward}, is a function that maps fractional 
ground sets to the solution of an optimization problem, whose constraints are dictated by the given fractional 
ground set. Despite the intricate relation between a fractional ground set and its reward, we prove in 
  Appendix~\ref{app:fractional-no-regret}  that $\widetilde{F}$ is concave and Lipschitz over $\tildecalX_\ell$
when the second-stage objective $f$ is a \WTP\ function with bounded coefficients.
These two key properties allow us to employ the standard regret analysis of {\OCO} policies (Theorem~\ref{thm:oco-no-regret}) to show that the sequence of fractional ground sets that the algorithm computes will have no-regret.
We state this formally in Lemma~\ref{lem:alg-oco-no-regret}, whose  proof is in Appendix~\ref{app:fractional-no-regret}:
\begin{restatable}{lemma}{algOCOnoRegret}\label{lem:alg-oco-no-regret}
    Let $\{f_t\}_{t = 1}^T \subseteq \boundedWTP$ be a sequence of \WTP\ functions, let $\{F_t\}_{t = 1}^T$ be
    the corresponding reward functions as defined in Eq.~\eqref{eq: reward} and  $\{\widetilde{F}_t\}_{t = 1}^T$ be the corresponding concave relaxations of the reward functions as defined in Eq.~\eqref{eq:fractional-reward}.
    Suppose that $\{\widetilde{\bx}_t\}_{t = 1}^T$ is the sequence of fractional restricted ground sets computed
    by the {\OCO} policy of Algorithm~\ref{alg: raoco}. Then,
    $$
    \max_{\bx \in \calX_\ell} \sum_{t = 1}^T F_t(\bx) - \sum_{t = 1}^T \widetilde{F}_t(\widetilde{\bx}_t) = \calO\left(\ell \indexsetbound \parameterbound^2 \sqrt{n} \sqrt{T}\right).
    $$
\end{restatable}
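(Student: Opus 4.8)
The plan is to chain two inequalities: (i) the standard $\OCO$ regret bound applied to the sequence of concave relaxations $\widetilde F_\tau$, which gives low regret against the best \emph{fractional} point; and (ii) a domination argument showing that the best fractional point is at least as good as the best integral point in $\calX_\ell$ when measured against the true rewards $F_t$. Concretely, I would first invoke the results proved in Appendix~\ref{app:fractional-no-regret} (promised in the text): that $\widetilde F_\tau$ is concave over $\tildecalX_\ell$ and $L$-Lipschitz with respect to $\|\cdot\|_2$, with Lipschitz constant $L = \calO(\indexsetbound \parameterbound^2 \sqrt n)$ coming from the $\WTP$ structure — each of the at most $\indexsetbound$ terms $c_j\min\{b_j, \widetilde\by\cdot \bw_j\}$ has gradient norm controlled by $c_j\|\bw_j\|_2 \le \parameterbound \cdot \parameterbound\sqrt n$, and the gradient of $\widetilde F_\tau$ is obtained via Danskin's theorem from the inner maximizer $\widetilde\by^\star$, so it inherits this bound. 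The $\ell_2$-diameter of $\tildecalX_\ell$ is $D = \calO(\sqrt \ell)$ (actually $\le 2\sqrt\ell$ since coordinates sum to at most $\ell$ and lie in $[0,1]$). Plugging into Theorem~\ref{thm:oco-no-regret} yields
\[
\max_{\widetilde\bx \in \tildecalX_\ell}\sum_{t=1}^T \widetilde F_t(\widetilde\bx) - \sum_{t=1}^T \widetilde F_t(\widetilde\bx_t) = \calO\!\left(DL\sqrt T\right) = \calO\!\left(\ell \indexsetbound \parameterbound^2 \sqrt{nT}\right),
\]
where I am being a bit generous with the $\ell$ versus $\sqrt\ell$ factor to match the claimed bound (the stated bound has $\ell$, which dominates $\sqrt\ell$, so this is fine).

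The remaining step is to bound $\max_{\bx\in\calX_\ell}\sum_t F_t(\bx)$ by $\max_{\widetilde\bx\in\tildecalX_\ell}\sum_t \widetilde F_t(\widetilde\bx)$. Since $\calX_\ell \subseteq \tildecalX_\ell$ (every integral feasible point is fractionally feasible), it suffices to show that for every \emph{integral} $\bx\in\calX_\ell$ we have $F_t(\bx) \le \widetilde F_t(\bx)$. For such $\bx$, the true reward is $F_t(\bx) = \max_{\by\in\calY_\calM(\bx)} f_t(\by)$, a maximum over integral independent sets $\by \le \bx$. Each such $\by$ is integral, so $f_t(\by) = \widetilde f_t(\by)$ (the concave relaxation agrees with $f_t$ on $\{0,1\}^n$), and $\by$ is a vertex of the matroid polytope $\calP(\calM)$ with $\by\le\bx$, hence $\by\in\tildecalY_\calM(\bx)$. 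Therefore $F_t(\bx) = \max_{\by\in\calY_\calM(\bx)}\widetilde f_t(\by) \le \max_{\widetilde\by\in\tildecalY_\calM(\bx)}\widetilde f_t(\widetilde\by) = \widetilde F_t(\bx)$, where the inequality holds because $\tildecalY_\calM(\bx)\supseteq \calY_\calM(\bx)$ (we are relaxing to a superset of feasible points). Summing over $t$ and taking the max over $\bx\in\calX_\ell$ gives $\max_{\bx\in\calX_\ell}\sum_t F_t(\bx) \le \max_{\widetilde\bx\in\tildecalX_\ell}\sum_t \widetilde F_t(\widetilde\bx)$.

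Combining the two displays:
\[
\max_{\bx\in\calX_\ell}\sum_{t=1}^T F_t(\bx) - \sum_{t=1}^T \widetilde F_t(\widetilde\bx_t) \;\le\; \max_{\widetilde\bx\in\tildecalX_\ell}\sum_{t=1}^T \widetilde F_t(\widetilde\bx) - \sum_{t=1}^T \widetilde F_t(\widetilde\bx_t) \;=\; \calO\!\left(\ell\indexsetbound\parameterbound^2\sqrt{nT}\right),
\]
which is exactly the claim. I expect the genuinely substantive part to be the concavity and Lipschitz-continuity of $\widetilde F_t$ — i.e.\ that the value function of the inner fractional optimization $\max_{\widetilde\by\in\tildecalY_\calM(\widetilde\bx)}\widetilde f_t(\widetilde\by)$ is concave in the constraint parameter $\widetilde\bx$ and has polynomially bounded supergradients — but this is deferred to Appendix~\ref{app:fractional-no-regret} and I am entitled to assume it here; concavity follows because $\widetilde F_t$ is a partial maximization of the jointly concave function $(\widetilde\bx,\widetilde\by)\mapsto \widetilde f_t(\widetilde\by)$ over the convex set $\{(\widetilde\bx,\widetilde\by): \widetilde\by\in\calP(\calM), \widetilde\by\le\widetilde\bx, \widetilde\bx\in\tildecalX_\ell\}$. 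Given that, the only modest care needed in the main argument is bookkeeping the constants so that the final rate reads $\calO(\ell\indexsetbound\parameterbound^2\sqrt{nT})$, and double-checking the direction of the domination inequality $F_t(\bx)\le\widetilde F_t(\bx)$ on integral points, which is the crux of why comparing against the fractional optimum is legitimate.
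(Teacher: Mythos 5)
Your proposal is correct and follows essentially the same route as the paper: invoke concavity, Lipschitzness (constant $\indexsetbound\parameterbound^2\sqrt{n}$), and the diameter bound on $\tildecalX_\ell$ to apply the standard \OCO{} regret theorem to the relaxations $\widetilde F_t$, then dominate $\max_{\bx\in\calX_\ell}\sum_t F_t(\bx)$ by $\max_{\widetilde\bx\in\tildecalX_\ell}\sum_t\widetilde F_t(\widetilde\bx)$ using $F_t\le\widetilde F_t$ on integral points and $\calX_\ell\subset\tildecalX_\ell$. Your one-sided justification of $F_t(\bx)\le\widetilde F_t(\bx)$ via $\calY_\calM(\bx)\subseteq\tildecalY_\calM(\bx)$ is in fact slightly more careful than the paper's assertion of equality, and your diameter bound $2\sqrt{\ell}$ is tighter than the paper's $2\ell$, though neither difference changes the stated rate.
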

Furthermore, we show that the supergradients of the fractional reward function $\widetilde{F}$ can be computed in polynomial time  by solving a linear program. 
This implies that standard {\OCO} policies like FTRL or OGA, which use the supergradient
of the reward function to calculate the next response, have polynomial time implementations.
In particular, the following lemma holds:

\begin{restatable}{lemma}{supergradientsPolyTime}\label{lem:supergradients-poly-time}
    Let $f \in \boundedWTP$ be a {\WTP} function and let $\widetilde{F}$ be the corresponding
    fractional reward function, as defined in Eq.~\eqref{eq:fractional-reward}. For any fractional
    restricted ground set $\widetilde{\bx} \in \tildecalX_\ell$,  both the time to compute $\widetilde{F}$ and the time to compute a supergradient of $\widetilde{F}$
    at $\widetilde{\bx}$ are  polynomial in $n, \indexsetbound$ and $\log \parameterbound$. 
\end{restatable}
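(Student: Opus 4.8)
The plan is to exhibit a single linear program whose optimal value equals $\widetilde{F}(\widetilde{\bx})$ and to read a supergradient off an optimal dual solution; polynomial solvability then follows from the ellipsoid method together with the fact that the matroid polytope $\calP(\calM)$ admits a polynomial-time separation oracle.

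\emph{Step 1: $\widetilde{F}(\widetilde{\bx})$ as an LP.} I would linearize the piecewise-linear concave relaxation $\widetilde{f}$ of Eq.~\eqref{eq: WTP relaxation} by introducing one auxiliary variable $z_j$ per term $j\in\calC$. Definition~\eqref{eq:fractional-reward} then becomes
\begin{align*}
\widetilde{F}(\widetilde{\bx}) \;=\; \max\Big\{ \textstyle\sum_{j\in\calC} c_j z_j \;:\; z_j \le b_j,\ \ z_j \le \bw_j\cdot\widetilde{\by}\ \ (j\in\calC),\ \ \widetilde{\by}\le\widetilde{\bx},\ \ \widetilde{\by}\in\calP(\calM) \Big\},
\end{align*}
since $c_j>0$ forces $z_j=\min\{b_j,\bw_j\cdot\widetilde{\by}\}$ at optimality (terms with $b_j=\infty$ simply drop their first constraint, and for $f\in\boundedWTP$ every $b_j\le\parameterbound$ anyway, and $\widetilde{\by}\ge\bzero$, $\widetilde{\by}\le\bone$ are implied by $\widetilde{\by}\in\calP(\calM)$). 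All coefficients are bounded by $\parameterbound$, so the LP has bit-size polynomial in $n$, $\indexsetbound$ and $\log\parameterbound$. Its only constraints with a potentially exponential description are $\widetilde{\by}\in\calP(\calM)$, for which minimizing $r(S)-\sum_{i\in S}\widetilde{y}_i$ (submodular minimization using the rank oracle of $\calM$) gives a polynomial-time separation oracle — and in the uniform-matroid case $\calP(\calM)$ is a single explicit inequality. Hence $\widetilde{F}(\widetilde{\bx})$ is computable in polynomial time by the ellipsoid method.

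\emph{Step 2: a supergradient from the dual.} I would Lagrangify only the coupling constraint $\widetilde{\by}\le\widetilde{\bx}$, with multipliers $\blambda\ge\bzero$. Partial LP duality gives
\begin{align*}
\widetilde{F}(\widetilde{\bx}) \;=\; \min_{\blambda\ge\bzero}\Big[\ \blambda\cdot\widetilde{\bx} \;+\; h(\blambda)\ \Big],\qquad
h(\blambda) \;\defeq\; \max\Big\{ \textstyle\sum_{j\in\calC} c_j z_j - \blambda\cdot\widetilde{\by} \;:\; z_j\le b_j,\ z_j\le\bw_j\cdot\widetilde{\by},\ \widetilde{\by}\in\calP(\calM) \Big\},
\end{align*}
where $h$ is finite (the feasible region for $(\widetilde{\by},\mathbf z)$ is bounded) and independent of $\widetilde{\bx}$; as an infimum of affine functions of $\widetilde{\bx}$, this representation also re-derives the concavity of $\widetilde{F}$ (established in Appendix~\ref{app:fractional-no-regret}). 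If $\blambda^\star$ attains the minimum at $\widetilde{\bx}$, then for every $\widetilde{\bx}'$ we have $\widetilde{F}(\widetilde{\bx}')\le\blambda^\star\cdot\widetilde{\bx}'+h(\blambda^\star)=\widetilde{F}(\widetilde{\bx})+\blambda^\star\cdot(\widetilde{\bx}'-\widetilde{\bx})$, so $\blambda^\star$ is a supergradient of $\widetilde{F}$ at $\widetilde{\bx}$ (over all of $\tildecalX_\ell$, indeed all of $\reals^n_{\ge 0}$). It remains to compute $\blambda^\star$, i.e.\ optimal dual multipliers for the $n$ box constraints $\widetilde{\by}\le\widetilde{\bx}$ in the LP of Step 1. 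Running the ellipsoid method on that primal LP with the $\calP(\calM)$-separation oracle generates only polynomially many rank inequalities; restricting the LP to those inequalities yields a polynomial-size program with the same optimum, whose (now polynomial-size) LP dual can be solved in polynomial time to recover such a $\blambda^\star$ — equivalently, one extracts a dual certificate directly from the ellipsoid run. The bit complexity is again polynomial in $n$, $\indexsetbound$ and $\log\parameterbound$.

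The main obstacle is the implicit, exponential-size description of $\calP(\calM)$: one must both optimize over it and extract the dual multipliers of the \emph{explicit} constraints in polynomial time, which is precisely what the optimization–separation equivalence delivers. Everything else — the linearization of $\widetilde{f}$, the partial Lagrangian, and the one-line supergradient verification — is routine; and in the cardinality-constrained case the LP is explicitly of polynomial size, so this obstacle disappears entirely.
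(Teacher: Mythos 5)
Your proposal is correct and follows essentially the same route as the paper: the same linearization of $\widetilde{f}$ via auxiliary variables $z_j$, the same ellipsoid method with a separation oracle for $\calP(\calM)$ (plus the reduced-LP trick to recover dual multipliers), and the same identification of the duals of the constraints $\widetilde{\by}\le\widetilde{\bx}$ as a supergradient. The only cosmetic difference is that you verify the supergradient property by writing $\widetilde{F}$ as a pointwise infimum of affine functions via partial Lagrangian duality, whereas the paper uses the equivalent perturbation/sensitivity argument on the full Lagrangian.
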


The proof of Lemma~\ref{lem:supergradients-poly-time} is in Appendix~\ref{app: supergradient}.

\spara{Randomized Rounding Bound.} 
Our next goal is to show that the value of the reward function is preserved, in expectation, when passing from a fractional ground set to an integral one via Randomized Pipage Rounding. Specifically, we aim to prove that optimizing over the rounded ground set yields second-stage solutions whose value closely approximates those obtainable in the fractional domain.

To this end, we design a two-stage randomized rounding procedure, used only in our analysis, that is coupled with the rounding performed by \RAOCO. Given any fractional second-stage solution, this procedure first produces an intermediate integral solution within the rounded ground set, which potentially violates the matroid constraints, and then applies a contention resolution scheme (CRS) to obtain a feasible solution (for more details on CRSs, see Appendix~\ref{appendix:crs}). 
A key technical challenge is that standard CRS guarantees typically rely on the input vector being sampled independently across coordinates, an assumption that does not hold in our setting. However,  when the ground set is obtained via Randomized Pipage Rounding, we prove that the resulting distribution exhibits a submodular dominance property that allows us to apply a CRS with provable approximation guarantees. In particular, the CRS ensures that, in expectation, the quality of the resulting feasible solution remains close to that of the original fractional solution.

We stress that we use this CRS construction purely to characterize the performance of the (first-stage) randomized rounding, and  not in  the online algorithm itself: the only rounding procedure in {\RAOCO} is Randomized Pipage Rounding (see Algorithm~\ref{alg: raoco}).  We leverage the CRS construction to show that the expected reward from the rounded ground set is within a constant factor of the reward from the fractional solution.
This  is formalized in Lemma~\ref{lem:alg-rounding-c-approx} and proved in Appendix~\ref{app:rounding}.

\begin{restatable}{lemma}{algLemRoundingCApprox}\label{lem:alg-rounding-c-approx}
    Let $\widetilde{\bx} \in \tildecalX_\ell$ be a fractional ground set and 
    $\bx =\Xi(\widetilde{\bx})$ be an integral ground set obtained by rounding $\widetilde{\bx}$ using Randomized Pipage Rounding.
    Let also $f$ be any \WTP\ function, and $F,\widetilde{F}$ be the
    associated reward function and its concave relaxation (Eq.~\eqref{eq: reward} and \eqref{eq:fractional-reward}) respectively. 
    Then,
    $$
    \mathbf{E}_\Xi\left[ F(\Xi(\widetilde{\bx}))\right] \geq c_\calM \cdot \left(1-\frac{1}{e}\right) \cdot \widetilde{F}(\widetilde{\bx}),
    $$
    where $c_\calM = 1-e^{-k}k^k/k!$ if the matroid $\calM$ of the second stage is
    uniform with rank $k$, and $c_\calM = (1-1/e)$ otherwise.
\end{restatable}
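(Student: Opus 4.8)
The plan is to exploit that $F(\bx) = \max_{\by\in\calY_\calM(\bx)} f(\by)$ is a maximum: it suffices to design a randomized map that, for the rounded ground set $\bx = \Xi(\widetilde{\bx})$, outputs a \emph{feasible} second-stage solution $\by\in\calY_\calM(\bx)$ with $\mathbf{E}[f(\by)] \ge c_\calM(1-1/e)\,\widetilde{F}(\widetilde{\bx})$, since then $\mathbf{E}_\Xi[F(\Xi(\widetilde{\bx}))] \ge \mathbf{E}[f(\by)]$. I would build this map as a composition used \emph{only in the analysis}: fix a point $\widetilde{\bz}\in\tildecalY_\calM(\widetilde{\bx})$ maximizing the multilinear extension of $f$ over $\tildecalY_\calM(\widetilde{\bx})$, i.e.\ maximizing $\widetilde{\by}\mapsto \mathbf{E}[f(R(\widetilde{\by}))]$ where $R(\cdot)$ is independent coordinatewise rounding; set $q_i := \widetilde{z}_i/\widetilde{x}_i\in[0,1]$ (with $q_i:=0$ where $\widetilde{x}_i=0$, which forces $\widetilde{z}_i=0$); and fix a monotone $(1,c_\calM)$-balanced contention resolution scheme (CRS) $\pi$ for $\calM$. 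Given $\bx$, let $R_q(\bx)$ keep each $i\in\bx$ independently with probability $q_i$, and output $\by := \pi(R_q(\bx))$; since $R_q(\bx)\subseteq\bx$ and $\pi$ outputs an independent set of $\calM$, indeed $\by\in\calY_\calM(\bx)$.

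The crux is to lower bound $\mathbf{E}[f(\pi(R_q(\bx)))]$ despite the fact that, after Randomized Pipage Rounding, the coordinates of $\bx$ — hence of $R_q(\bx)$ — are \emph{not} independent, which is exactly where textbook CRS analyses break. I would handle this via the ``submodular dominance'' of Randomized Pipage Rounding: on the uniform matroid polytope $\tildecalX_\ell$ it satisfies $\mathbf{E}_\Xi[h(\Xi(\widetilde{\bx}))]\ge\mathbf{E}[h(R(\widetilde{\bx}))]$ for every monotone submodular $h$ (its negative-correlation, ``lossless'' property). Applying this to $h(S):=\mathbf{E}_\pi[f(\pi(R_q(S)))]$ — which is monotone submodular, being a composition of independent subsampling (submodularity-preserving), a monotone CRS (for which $S\mapsto\mathbf{E}_\pi[f(\pi(S))]$ is monotone submodular), and the monotone submodular $f$ — and using that $R_q(R(\widetilde{\bx}))$ is independent rounding with product marginals $\widetilde{x}_iq_i=\widetilde{z}_i$, I get
\[
\mathbf{E}_\Xi[F(\Xi(\widetilde{\bx}))] \ \ge\ \mathbf{E}_\Xi\!\left[\mathbf{E}_\pi[f(\pi(R_q(\Xi(\widetilde{\bx}))))]\right] \ \ge\ \mathbf{E}[h(R(\widetilde{\bx}))] \ =\ \mathbf{E}_\pi[f(\pi(R(\widetilde{\bz})))],
\]
where the first inequality is $F(\bx)\ge h(\bx)$ (valid since $\pi(R_q(\bx))\in\calY_\calM(\bx)$ always). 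We have thus reduced to the standard CRS setting with a genuinely independent input whose marginals $\widetilde{\bz}$ lie in $\calP(\calM)$.

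Two classical facts then finish the estimate. First, $(1,c_\calM)$-balancedness of $\pi$ yields $\mathbf{E}_\pi[f(\pi(R(\widetilde{\bz})))] \ge c_\calM\,\mathbf{E}[f(R(\widetilde{\bz}))]$; this is the only place $c_\calM$ enters, taking the optimal monotone CRS, which is $(1,1-1/e)$-balanced for a general matroid and $(1,1-e^{-k}k^k/k!)$-balanced when $\calM$ is a uniform matroid of rank $k$. Second, since $\widetilde{\bz}$ maximizes $\mathbf{E}[f(R(\cdot))]$ over $\tildecalY_\calM(\widetilde{\bx})$, the correlation gap of \WTP\ functions over down-closed polytopes ($\ge 1-1/e$) gives $\mathbf{E}[f(R(\widetilde{\bz}))] = \max_{\widetilde{\by}\in\tildecalY_\calM(\widetilde{\bx})}\mathbf{E}[f(R(\widetilde{\by}))] \ge (1-1/e)\max_{\widetilde{\by}\in\tildecalY_\calM(\widetilde{\bx})}\widetilde{f}(\widetilde{\by}) = (1-1/e)\,\widetilde{F}(\widetilde{\bx})$. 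Chaining the three displays gives the lemma.

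The main obstacle is the second paragraph: standard CRS guarantees assume coordinatewise-independent inputs, which fails here, so the argument rests on (a) verifying that independent subsampling and composition with a monotone CRS preserve monotone submodularity, and (b) the submodular-dominance property of Randomized Pipage Rounding on $\tildecalX_\ell$, which is what lets us swap the correlated input $R_q(\bx)$ for the independent $R(\widetilde{\bz})$ at no loss. A secondary point requiring care is the polytope-level correlation-gap inequality $\max_{\widetilde{\by}\in\tildecalY_\calM(\widetilde{\bx})}\mathbf{E}[f(R(\widetilde{\by}))] \ge (1-1/e)\,\widetilde{F}(\widetilde{\bx})$ for \WTP\ objectives — the reason it is essential to optimize $\widetilde{\bz}$ against the multilinear (not the concave) relaxation — together with invoking, in the uniform-matroid case, a monotone CRS with the sharp balancedness $1-e^{-k}k^k/k!$.
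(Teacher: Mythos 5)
Your overall architecture matches the paper's (an analysis-only coupling of pipage rounding with coordinatewise subsampling and a CRS, submodular dominance of pipage rounding, and the $(1-1/e)$ correlation gap for \WTP{} functions), but the central step has a genuine gap. You apply the submodular-dominance property of Randomized Pipage Rounding to $h(S)=\E_\pi[f(\pi(R_q(S)))]$, justifying this by asserting that for a monotone CRS $\pi$ the map $S\mapsto \E_\pi[f(\pi(S))]$ is monotone submodular. That assertion is false. Take the rank-$1$ uniform matroid on $\{1,2,3\}$, the coverage function $f$ in which elements $1$ and $2$ each cover a single common item and element $3$ covers two other items, and the natural monotone CRS $\pi$ that returns a uniformly random element of its (nonempty) input. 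Then $h(\{1,3\})+h(\{2,3\})=\tfrac32+\tfrac32=3$ while $h(\{1,2,3\})+h(\{3\})=\tfrac43+2=\tfrac{10}{3}>3$, so $h$ is not submodular. (This is also why the classical Chekuri--Vondr\'ak--Zenklusen analysis of balanced CRSs proceeds element-by-element via marginal values rather than through submodularity of the composed function.) Without submodularity of $h$, the inequality $\E_\Xi[h(\Xi(\widetilde{\bx}))]\ge \E[h(R(\widetilde{\bx}))]$ is unjustified, and your chain from the correlated input $R_q(\Xi(\widetilde{\bx}))$ to the independent input $R(\widetilde{\bz})$ breaks.

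The paper circumvents exactly this obstacle by reordering the argument: it establishes submodular dominance only for the \emph{intermediate} distribution $\hat{\by}=R_q(\Xi(\widetilde{\bx}))$ (which requires submodularity only of $g(\bx)=\E_{\bz\sim\mathbf{Ind}(\bp)}[f(\bx\wedge\bz)]$ --- the subsampling step, which genuinely does preserve submodularity, as you correctly note), and then invokes Dughmi's characterization (Lemma~\ref{lem:appendix-dughmi-crs} / Corollary~\ref{cor:appendix-crs-for-sub-dominance}) to conclude that a $c_\calM$-selectable CRS \emph{exists for this correlated distribution itself}. The guarantee $\E_{\hat{\by}\sim\calD,\phi}[f(\phi(\hat{\by}))]\ge c_\calM\,\E_{\mathbf{Ind}(\bmu_\calD)}[f]$ (Lemma~\ref{lem:appendix-crs-approx-chekuri}) then lands directly on the product distribution with marginals $\widetilde{\by}$, so the CRS never has to be commuted past the dominance inequality. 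Your final step (the $(1-1/e)$ gap between the multilinear/independent-rounding value and $\widetilde{f}$, hence $\widetilde{F}(\widetilde{\bx})$) is fine and essentially identical to the paper's use of Lemma~\ref{lem:appendix-wtp-approx-salem}; to repair the proof you should replace the ``push dominance through a product-distribution CRS'' step with the existence of a selectable CRS for the correlated distribution, as the paper does.
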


\noindent Finally, using Lemmas~\ref{lem:alg-oco-no-regret}, ~\ref{lem:supergradients-poly-time} and~\ref{lem:alg-rounding-c-approx}, our main theorem, Theorem~\ref{thm:alg-c-regret}, can be proved as follows.

\begin{proof}[Proof of Theorem~\ref{thm:alg-c-regret}]

\spara{Regret Guarantee.}
    Let $\{F_t\}_{t=1}^T$ be a sequence of reward functions 
    and $\widetilde{F}_1, \dots, \widetilde{F}_T$ be their concave relaxations (see Eq.~\eqref{eq: reward} and \eqref{eq:fractional-reward} respectively). 
    The expected regret of the Algorithm~\ref{alg: raoco} is
    \begin{align*}
    \E[\calR_{c_\calM (1-1/e), T}] &= c_\calM (1-1/e)
    \cdot \max_{\bx \in \calX_\ell} \sum_{t = 1}^T
    F_t(\bx) - \sum_{t=1}^T \mathbf{E}_\Xi[F_t(\Xi(\widetilde{\bx}_t))]\\
    &\stackrel{\text{Lem.}~\ref{lem:alg-rounding-c-approx}}{\leq}c_\calM (1-1/e) \left(\max_{\bx \in \calX_\ell} \sum_{t = 1}^T 
    F_t(\bx) - \sum_{t=1}^T \widetilde{F}_t(\widetilde{\bx}_t) \right)\\
    &\stackrel{\text{Lem.}~\ref{lem:alg-oco-no-regret}}{=}O(\ell \indexsetbound \parameterbound^2 \sqrt{n} \sqrt{T}).
    \end{align*}

\spara{Computational Complexity.}
Each round of {\RAOCO} consists of two  steps: (1) computing a fractional solution
through an {\OCO} policy (either FTRL or OGA, see Appendix~\ref{app: oco} for details), and (2)
rounding this solution to an integral ground set.
In order to run the {\OCO} policy, {\RAOCO} first computes a supergradient
of the concave relaxation function $\widetilde{F}_{t-1}$ at the previous solution $\widetilde{\bx}_{t-1}$. This supergradient can be computed in polynomial time in $n, \indexsetbound$
and $\log \parameterbound$ (Lemma~\ref{lem:supergradients-poly-time})
by solving a linear program (see Appendix~\ref{app: supergradient}).
Next, depending on the instantiated {\OCO} policy, the algorithm either
performs a projection onto $\widetilde{\calX}_\ell$ (for OGA) which can be
done in $\calO(n\log n)$ time \citep{projection}, or solves a strongly convex
optimization problem (for FTRL), which can be done in polynomial time in $n$
and $\log \parameterbound$ using standard first-order or interior point methods \citep{bubeck2015convex}. 
Finally, the fractional point obtained from the {\OCO} policy is rounded via
Randomized Pipage Rounding \citep{chekuri2009randomized_pipage}. The rounding
iteratively selects two fractional coordinates and redistributes their combined mass
in a randomized way (for more details see Appendix~\ref{app: pipage}). Each such iteration
runs in time $\calO(n)$ and the algorithm will perform at most $\calO(n)$ iterations until
all coordinates are integral.
Overall, the per-round complexity of {\RAOCO} is polynomial in $n, \indexsetbound$ and $\log \parameterbound$.\end{proof}

\section{Experiments}\label{sec: experiments}
In this section, we evaluate the performance of our algorithm on both real and synthetic datasets.  Additional details on each dataset, how it maps to instances of \problemname, our choices of $k$, $\ell$, and $T$, as well as detailed descriptions of competitor algorithms, can be found in Appendix~\ref{app: experiments}.\footnote{Our code is available online at \url{https://github.com/jasonNikolaou/online-two-stage-sub-max}.}

\spara{Datasets and Problem Instances.}
We conduct experiments on seven datasets; five real: \Wikipedia, \Images, \MovieRec, \Influence, and \HotpotQA,  and two synthetic: {\Teamformation} and {\Coverage}.
The \Wikipedia, \Images, and \MovieRec\ datasets are drawn from prior work~\citep{balkanski2016learning, krause2017}.
The \Coverage\ dataset is constructed adversarially to highlight the differences between one-stage and two-stage submodular optimization (see Appendix~\ref{app: experiments}). 

Table~\ref{tab:datasets} summarizes the key properties of each dataset.

\begin{table}[!t]
\centering
\setlength{\tabcolsep}{4pt} 
\renewcommand{\arraystretch}{1.05}
\begin{tabular}{@{}lcccccc>{\raggedright\arraybackslash}p{5.00cm}@{}}
\toprule
\textbf{Dataset} & $n$ & $\ell$ & $k$ & $T$ & $m$ & $|C|$ & \textbf{Description} \\ \midrule
\Wikipedia      & 407  & 20 & 5 & 88  & 22  & 27.2 & Wikipedia articles representatives \\
\Images         & 150  & 20 & 5 & 40  & 10  & 20.9 & Image collection summarization \\
\MovieRec       & 1000 & 10 & 4 & 400 & 100 & 2.5  & Movie recommendations \\ 
\Teamformation  & 100  & 10 & 4 & 50  & 50  & 1.0  & Roster selection \\
\Influence      & 34   & 8  & 3 & 100 & 100 & 34.0 & Influence maximization \\
\Coverage       & 100  & 20 & 1 & 500 & 20  & 1.95 & Weighted Coverage \\
\HotpotQA  & 111,140 & 150 & 4 & 500 & 97,852 & 1 & QA Corpus summarization\\
\bottomrule
\end{tabular}
\smallskip
\caption{Dataset summary: $n$ is the number of elements; $\ell$ is the restricted ground set size; $k$ is the second-stage cardinality; $T$ is the number of time steps; $m$ is the number of objective functions; $|C|$ is the average number of Threshold Potentials per function.}
\label{tab:datasets}
\end{table}

\spara{Algorithms.}
We experiment with different variants of 
{\RAOCO}  (Algorithm~\ref{alg: raoco}). The variants differ
in the {\OCO} policy used in the fractional domain; thus, we have
{\RAOCO[OGA]}, {\RAOCO[FTRL-L2]}, {\RAOCO[FTRL-H]}, where we use \emph{Online Gradient Ascent (OGA)}, \emph{Follow the Regularized Leader (FTRL)} with $L_2$ regularization and entropy regularization respectively.

\spara{Competitor Algorithms.} We compare our method with the following online competitors:
(1) {\Random}, which selects a set of $\ell$ elements uniformly at random at each timestep.
(2) {\Onestagelong} (\Onestage), which  runs {\RAOCO[OGA]} as if the second stage cardinality constraint was $k = \ell$, i.e., (one-stage) online {\WTP} maximization.
We also provide as benchmarks the \emph{offline} algorithms proposed by prior work, i.e.,  
{\ContOptlong} (\ContOpt)~\citep{balkanski2016learning} and  {\RepGreedylong} (\RepGreedy)~\citep{krause2017}.
Finally, we report the value of the optimal (integral) solution in hindsight ({\Opt}).

\spara{Metrics.}
For the online algorithms, we measure the cumulative average reward
$C_t = \frac{1}{t} \sum_{\tau = 1}^t F_t(\bx_t)$.
For each dataset, we report the average $C_t$ and its standard deviation over five runs.

\spara{Results.}
Figure~\ref{fig: experiments} shows the average $C_t$ over time $t$, for each algorithm and each dataset except \HotpotQA; the latter, along with additional results with varying $k$ and $\ell$, can be found in Appendix~\ref{app: additional}.
In all cases, two-stage variants of {\RAOCO} ({\RAOCO[OGA]}, {\RAOCO[FTRL-L2]}, {\RAOCO[FTRL-H]}) outperform or match the one-stage baseline (\Onestage).
The performance gap is particularly pronounced on the {\Influence}, {\Images}, and {\Coverage} datasets.
In {\Coverage}, our algorithms achieve almost double the average reward of {\Onestage} (see Appendix~\ref{app: experiments}).
Note that {\ContOpt} does not apply to {\Coverage} since it requires $k > 1$, while $k = 1$ in that case.

\begin{figure*}[!t]
    \centering
    \includegraphics[width=0.9\textwidth]{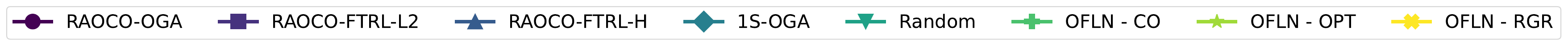}
    \vspace{0.5em}  
    \begin{subfigure}{0.3\textwidth}
        \centering        \includegraphics[width=\textwidth]{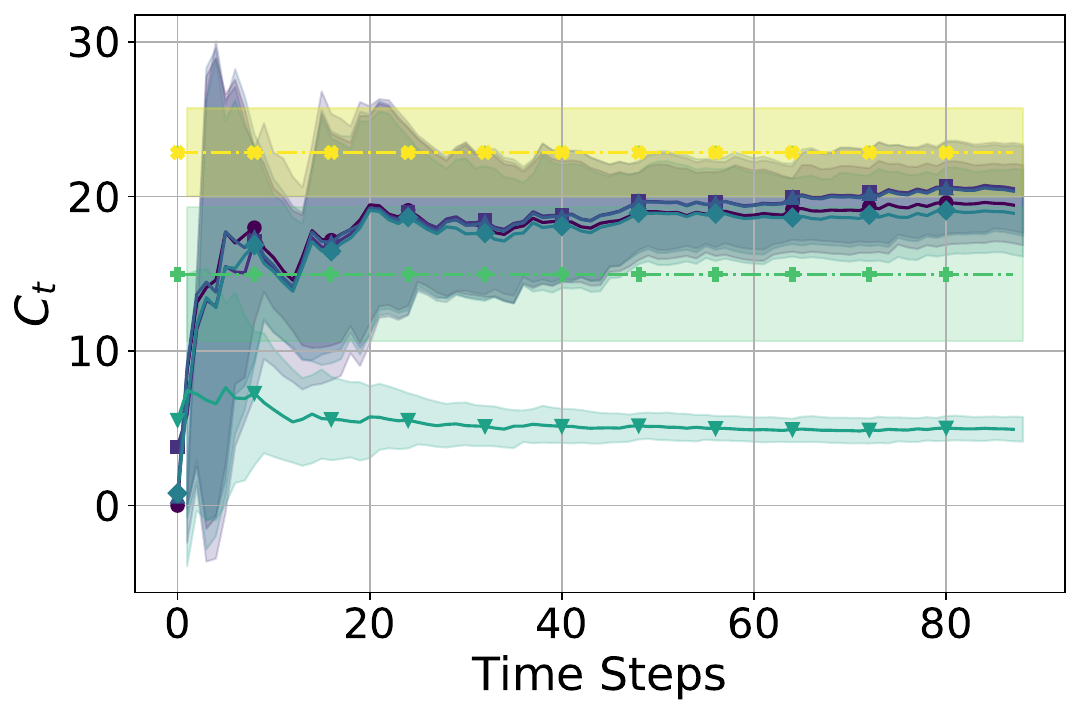}
        \caption{{\Wikipedia}}
        \label{fig: wikipedia}
    \end{subfigure}
    \hfill
    \begin{subfigure}{0.3\textwidth}
        \centering
    \includegraphics[width=\textwidth]{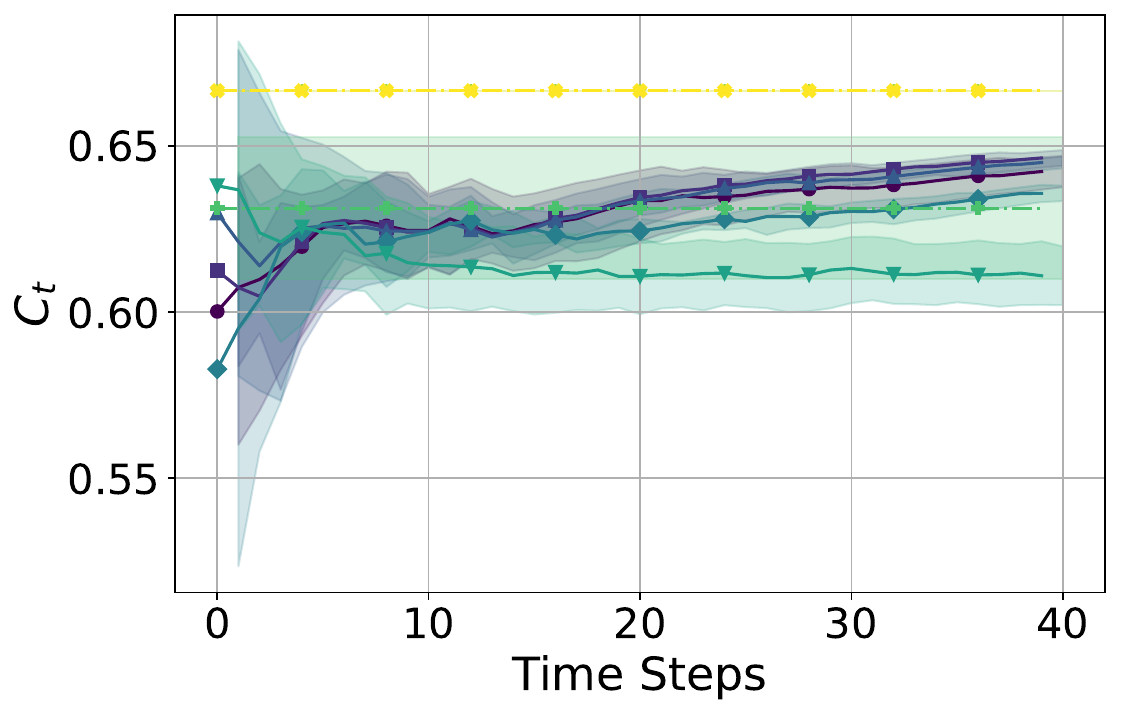}
        \caption{{\Images}}
        \label{fig: images}
    \end{subfigure}
    \hfill
    \begin{subfigure}{0.3\textwidth}
        \centering
        \includegraphics[width=\textwidth]{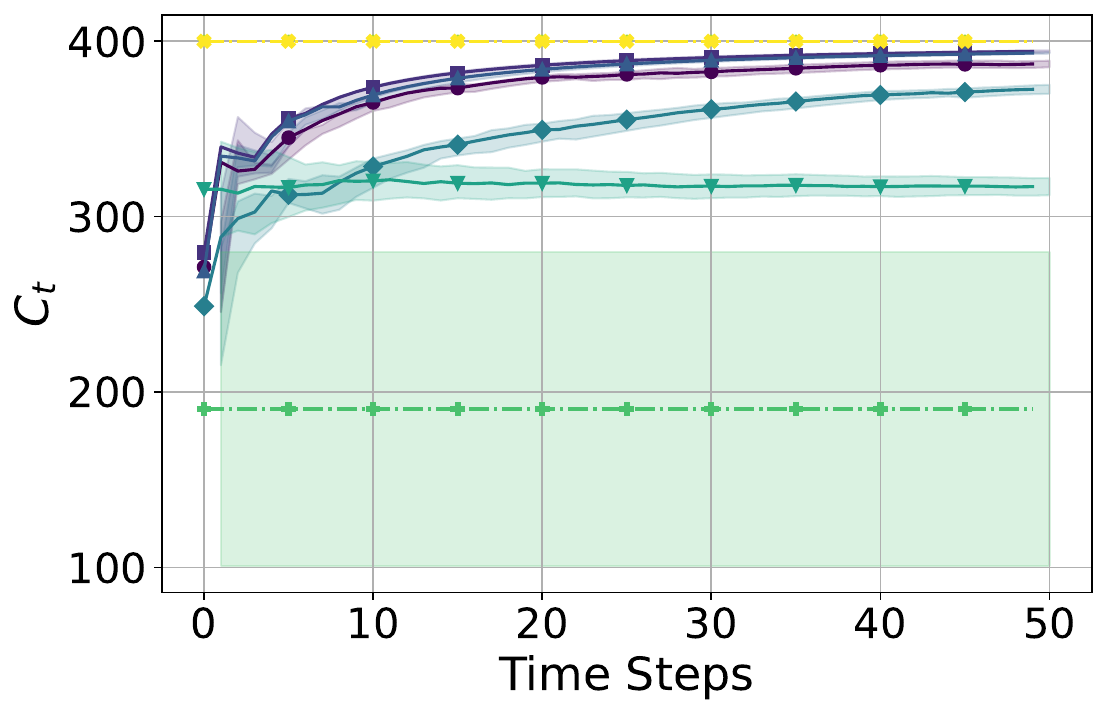}
        \caption{{\Teamformation}}
        \label{fig: teamformation}
    \end{subfigure}
    \hfill
    \begin{subfigure}{0.3\textwidth}
        \centering
        \includegraphics[width=\textwidth]{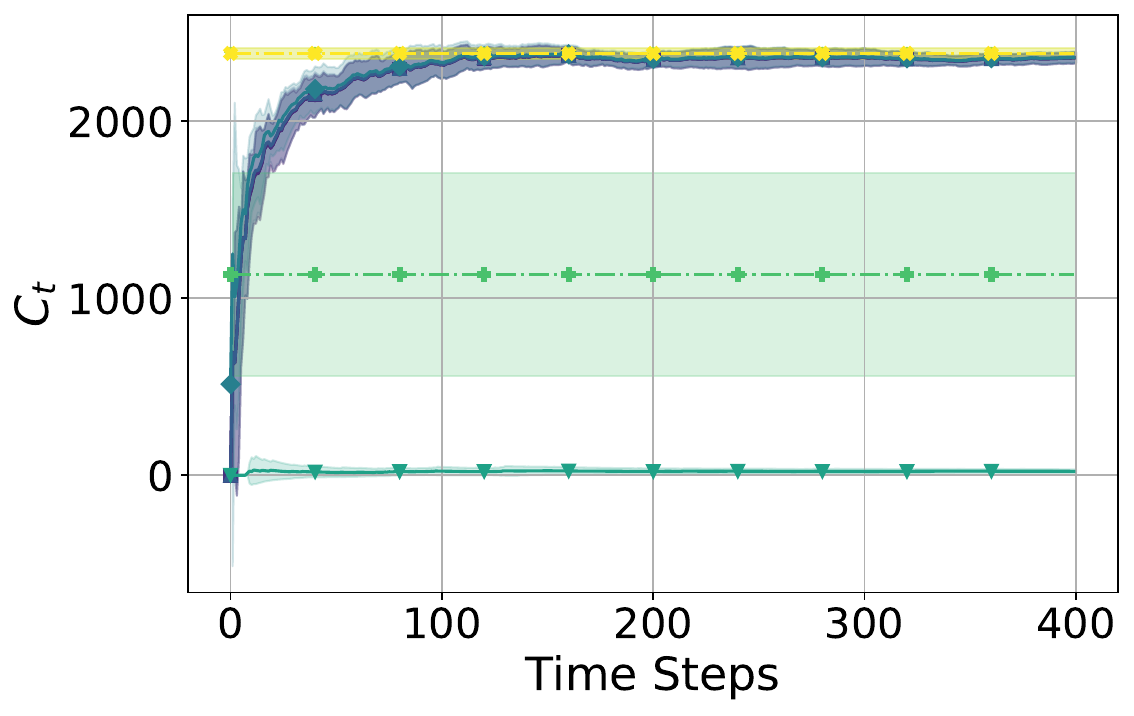}
        \caption{{\MovieRec}}
        \label{fig: movies}
    \end{subfigure}
    \hfill
    \begin{subfigure}{0.3\textwidth}
        \centering
        \includegraphics[width=\textwidth]{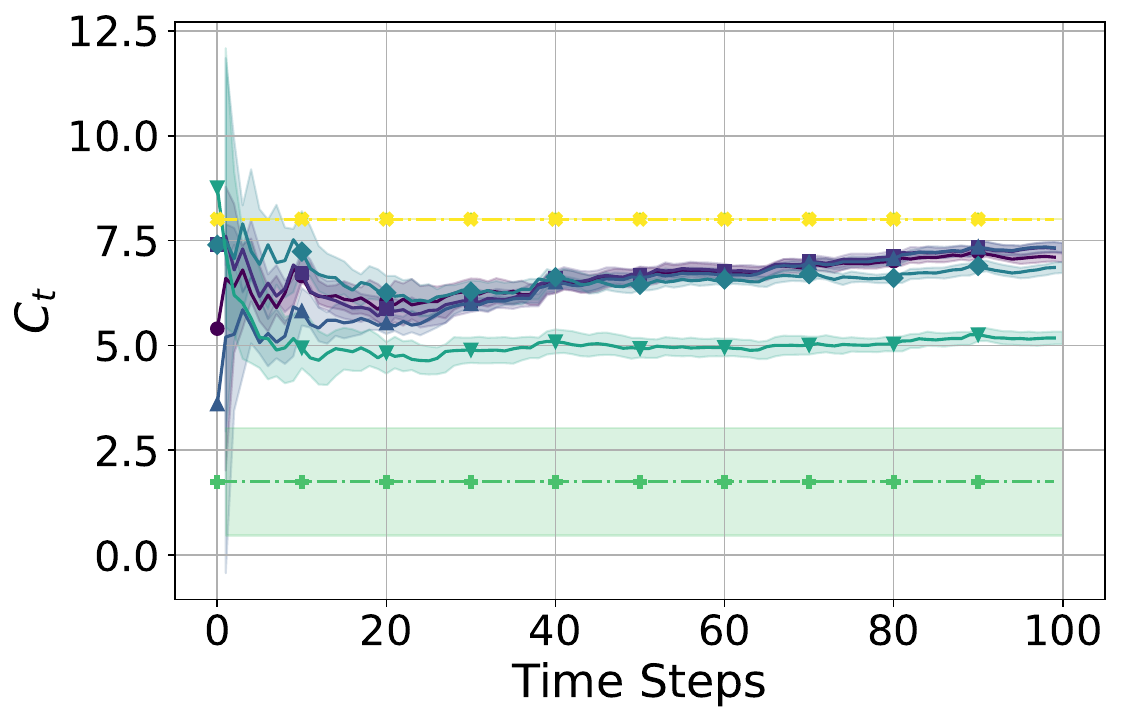}
        \caption{{\Influence}}
        \label{fig: influence}
    \end{subfigure}
    \hfill
    \begin{subfigure}{0.3\textwidth}
        \centering
        \includegraphics[width=\textwidth]{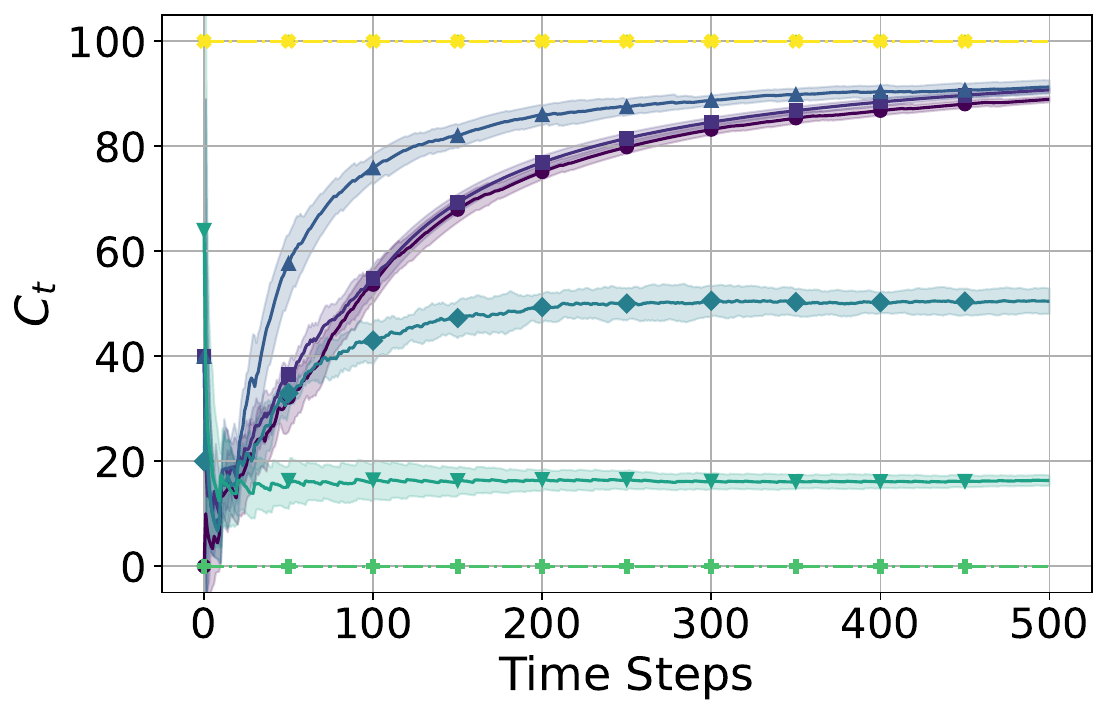}
        \caption{{\Coverage}}
        \label{fig: coverage}
    \end{subfigure}
    
    \caption{Cumulative average reward $C_t$ of \texttt{RAOCO} variants and baselines for each dataset.
    We report the average value of $C_t$ over 5 runs.
    The shaded regions correspond to one standard deviation above and below the average.
    Note that {\RepGreedy} coincides with {\Opt} in all datasets.}
    \label{fig: experiments}
\end{figure*}

\section{Conclusion} \label{sec:conclusion}
We introduce the \problemname problem and proposed \RAOCO, a polynomial-time algorithm that achieves sublinear $\alpha$-regret for \WTP\ functions under general matroid constraints. 
Our method leverages online convex optimization over a concave relaxation of the reward and combines it with randomized pipage rounding to obtain integral solutions with strong theoretical guarantees and practical performance.
\spara{Limitations and Future Directions.}
Our analysis is restricted to \WTP\ objectives and assumes full-information feedback. Extending our framework to general monotone submodular functions and partial-revelation/bandit feedback remains an interesting open direction. 

\spara{}

\begin{ack}
We gratefully acknowledge support to Miltiadis Stouras by the Swiss State Secretariat for Education, Research and Innovation (SERI) under contract number MB22.00054, to
Stratis Ioannidis by National Science Foundation grant 2112471, and partial support to Iasonas Nikolaou and Evimaria Terzi through gifts from Microsoft and Google.
\end{ack}

\bibliographystyle{plainnat}
\bibliography{bibliography}

\begin{thebibliography}{74}
\providecommand{\natexlab}[1]{#1}
\providecommand{\url}[1]{\texttt{#1}}
\expandafter\ifx\csname urlstyle\endcsname\relax
  \providecommand{\doi}[1]{doi: #1}\else
  \providecommand{\doi}{doi: \begingroup \urlstyle{rm}\Url}\fi

\bibitem[Agrawal et~al.(1993)Agrawal, Faloutsos, and Swami]{agrawal1993efficient}
Rakesh Agrawal, Christos Faloutsos, and Arun Swami.
\newblock Efficient similarity search in sequence databases.
\newblock In \emph{Foundations of Data Organization and Algorithms: 4th International Conference, FODO'93 Chicago, Illinois, USA, October 13--15, 1993 Proceedings 4}, pages 69--84. Springer, 1993.

\bibitem[Balkanski et~al.(2016)Balkanski, Mirzasoleiman, Krause, and Singer]{balkanski2016learning}
Eric Balkanski, Baharan Mirzasoleiman, Andreas Krause, and Yaron Singer.
\newblock Learning sparse combinatorial representations via two-stage submodular maximization.
\newblock In \emph{International Conference on Machine Learning}, pages 2207--2216. PMLR, 2016.

\bibitem[Boon and Sierksma(2003)]{boon2003team}
Bart~H Boon and Gerard Sierksma.
\newblock Team formation: Matching quality supply and quality demand.
\newblock \emph{EJOR}, 2003.

\bibitem[Borgs et~al.(2014)Borgs, Brautbar, Chayes, and Lucier]{borgs2014maximizing}
Christian Borgs, Michael Brautbar, Jennifer Chayes, and Brendan Lucier.
\newblock Maximizing social influence in nearly optimal time.
\newblock In \emph{Proceedings of the twenty-fifth annual ACM-SIAM symposium on Discrete algorithms}, pages 946--957. SIAM, 2014.

\bibitem[Boyd(2004)]{boyd2004convex}
Stephen Boyd.
\newblock Convex optimization.
\newblock \emph{Cambridge UP}, 2004.

\bibitem[Bubeck(2015)]{bubeck2015convex}
S{\'e}bastien Bubeck.
\newblock Convex optimization: Algorithms and complexity.
\newblock \emph{Foundations and Trends{\textregistered} in Machine Learning}, 8\penalty0 (3-4):\penalty0 231--357, 2015.

\bibitem[Chang et~al.(2023)Chang, Jin, Liu, Li, and Zhang]{gamma_submodular_2stage}
Hong Chang, Jing Jin, Zhicheng Liu, Ping Li, and Xiaoyan Zhang.
\newblock Two-stage non-submodular maximization.
\newblock \emph{Theoretical Computer Science}, 968:\penalty0 114017, 2023.
\newblock ISSN 0304-3975.

\bibitem[Chekuri et~al.(2009)Chekuri, Vondr{\'a}k, and Zenklusen]{chekuri2009randomized_pipage}
Chandra Chekuri, Jan Vondr{\'a}k, and Rico Zenklusen.
\newblock Dependent randomized rounding for matroid polytopes and applications.
\newblock \emph{arXiv preprint arXiv:0909.4348}, 2009.

\bibitem[Chekuri et~al.(2011)Chekuri, Vondr{\'a}k, and Zenklusen]{chekuri_vondrak_crs}
Chandra Chekuri, Jan Vondr{\'a}k, and Rico Zenklusen.
\newblock Submodular function maximization via the multilinear relaxation and contention resolution schemes.
\newblock In \emph{Proceedings of the forty-third annual ACM symposium on Theory of computing}, pages 783--792, 2011.

\bibitem[Chen et~al.(2018)Chen, Hassani, and Karbasi]{chen2018online}
Lin Chen, Hamed Hassani, and Amin Karbasi.
\newblock Online continuous submodular maximization.
\newblock In \emph{International Conference on Artificial Intelligence and Statistics}, pages 1896--1905. PMLR, 2018.

\bibitem[Christou et~al.(2023)Christou, Skoulakis, and Cevher]{christou2023efficient}
Dimitrios Christou, Stratis Skoulakis, and Volkan Cevher.
\newblock Efficient online clustering with moving costs.
\newblock \emph{Advances in Neural Information Processing Systems}, 36:\penalty0 25325--25364, 2023.

\bibitem[Covington et~al.(2016)Covington, Adams, and Sargin]{covington2016deep}
Paul Covington, Jay Adams, and Emre Sargin.
\newblock Deep neural networks for youtube recommendations.
\newblock In \emph{Proceedings of the 10th ACM conference on recommender systems}, pages 191--198, 2016.

\bibitem[Duchi et~al.(2008)Duchi, Shalev-Shwartz, Singer, and Chandra]{projection}
John Duchi, Shai Shalev-Shwartz, Yoram Singer, and Tushar Chandra.
\newblock Efficient projections onto the $\ell_1$-ball for learning in high dimensions.
\newblock In \emph{Proceedings of the 25th international conference on Machine learning}, pages 272--279, 2008.

\bibitem[Dughmi(2020)]{dughmi2020}
Shaddin Dughmi.
\newblock The outer limits of contention resolution on matroids and connections to the secretary problem.
\newblock In \emph{47th International Colloquium on Automata, Languages, and Programming (ICALP 2020)}. Schloss-Dagstuhl-Leibniz Zentrum f{\"u}r Informatik, 2020.

\bibitem[Dughmi(2022)]{dughmi2022_matroid_secretary}
Shaddin Dughmi.
\newblock Matroid secretary is equivalent to contention resolution.
\newblock \emph{Innovations in Theoretical Computer Science (ITCS)}, 2022.

\bibitem[Edmonds(2003)]{edmonds2003submodular}
Jack Edmonds.
\newblock Submodular functions, matroids, and certain polyhedra.
\newblock In \emph{Combinatorial Optimization—Eureka, You Shrink! Papers Dedicated to Jack Edmonds 5th International Workshop Aussois, France, March 5--9, 2001 Revised Papers}, pages 11--26. Springer, 2003.

\bibitem[El-Arini et~al.(2009)El-Arini, Veda, Shahaf, and Guestrin]{el2009turning}
Khalid El-Arini, Gaurav Veda, Dafna Shahaf, and Carlos Guestrin.
\newblock Turning down the noise in the blogosphere.
\newblock In \emph{Proceedings of the 15th ACM SIGKDD international conference on Knowledge discovery and data mining}, pages 289--298, 2009.

\bibitem[Everingham and Winn(2011)]{everingham2011pascal}
Mark Everingham and John Winn.
\newblock The pascal visual object classes challenge 2012 (voc2012) development kit.
\newblock \emph{Pattern Analysis, Statistical Modelling and Computational Learning, Tech. Rep}, 8\penalty0 (5):\penalty0 2--5, 2011.

\bibitem[Evnine et~al.(2024)Evnine, Ioannidis, Kalimeris, Kalyanaraman, Li, Nir, Sun, and Weinsberg]{evnine2024achieving}
Ariel Evnine, Stratis Ioannidis, Dimitris Kalimeris, Shankar Kalyanaraman, Weiwei Li, Israel Nir, Wei Sun, and Udi Weinsberg.
\newblock Achieving a better tradeoff in multi-stage recommender systems through personalization.
\newblock In \emph{Proceedings of the 30th ACM SIGKDD Conference on Knowledge Discovery and Data Mining}, pages 4939--4950, 2024.

\bibitem[Feige(1998)]{feige1998threshold}
Uriel Feige.
\newblock A threshold of ln n for approximating set cover.
\newblock \emph{Journal of the ACM (JACM)}, 45\penalty0 (4):\penalty0 634--652, 1998.

\bibitem[Frieze(1974)]{frieze1974cost}
Alan~M Frieze.
\newblock A cost function property for plant location problems.
\newblock \emph{Mathematical Programming}, 1974.

\bibitem[Gergatsouli and Tzamos(2022)]{gergatsouli2022online}
Evangelia Gergatsouli and Christos Tzamos.
\newblock Online learning for min sum set cover and pandora’s box.
\newblock In \emph{International Conference on Machine Learning}, pages 7382--7403. PMLR, 2022.

\bibitem[Gionis et~al.(1999)Gionis, Indyk, Motwani, et~al.]{gionis1999similarity}
Aristides Gionis, Piotr Indyk, Rajeev Motwani, et~al.
\newblock Similarity search in high dimensions via hashing.
\newblock In \emph{Vldb}, volume~99, pages 518--529, 1999.

\bibitem[Gr{\"o}tschel et~al.(2012)Gr{\"o}tschel, Lov{\'a}sz, and Schrijver]{grotschel2012geometric}
Martin Gr{\"o}tschel, L{\'a}szl{\'o} Lov{\'a}sz, and Alexander Schrijver.
\newblock \emph{Geometric algorithms and combinatorial optimization}, volume~2.
\newblock Springer Science \& Business Media, 2012.

\bibitem[Harper and Konstan(2015)]{harper2015movielens}
F~Maxwell Harper and Joseph~A Konstan.
\newblock The movielens datasets: History and context.
\newblock \emph{Acm transactions on interactive intelligent systems (tiis)}, 5\penalty0 (4):\penalty0 1--19, 2015.

\bibitem[Harvey et~al.(2020)Harvey, Liaw, and Soma]{harvey2020improved}
Nicholas Harvey, Christopher Liaw, and Tasuku Soma.
\newblock Improved algorithms for online submodular maximization via first-order regret bounds.
\newblock \emph{Advances in Neural Information Processing Systems}, 33:\penalty0 123--133, 2020.

\bibitem[Haveliwala et~al.(2002)Haveliwala, Gionis, Klein, and Indyk]{haveliwala2002evaluating}
Taher~H Haveliwala, Aristides Gionis, Dan Klein, and Piotr Indyk.
\newblock Evaluating strategies for similarity search on the web.
\newblock In \emph{Proceedings of the 11th international conference on World Wide Web}, pages 432--442, 2002.

\bibitem[Hazan(2016)]{hazan2016introduction}
Elad Hazan.
\newblock Introduction to online convex optimization.
\newblock \emph{Foundations and Trends{\textregistered} in Optimization}, 2\penalty0 (3-4):\penalty0 157--325, 2016.

\bibitem[Hron et~al.(2021)Hron, Krauth, Jordan, and Kilbertus]{hron2021component}
Jiri Hron, Karl Krauth, Michael Jordan, and Niki Kilbertus.
\newblock On component interactions in two-stage recommender systems.
\newblock In M.~Ranzato, A.~Beygelzimer, Y.~Dauphin, P.S. Liang, and J.~Wortman Vaughan, editors, \emph{Advances in Neural Information Processing Systems}, volume~34, pages 2744--2757. Curran Associates, Inc., 2021.

\bibitem[Ioannidis and Yeh(2016)]{ioannidis2016adaptive}
Stratis Ioannidis and Edmund Yeh.
\newblock Adaptive caching networks with optimality guarantees.
\newblock \emph{ACM SIGMETRICS Performance Evaluation Review}, 44\penalty0 (1):\penalty0 113--124, 2016.

\bibitem[Ito and Fujimaki(2016)]{ito2016large}
Shinji Ito and Ryohei Fujimaki.
\newblock Large-scale price optimization via network flow.
\newblock \emph{NeurIPS}, 2016.

\bibitem[Kakade et~al.(2007)Kakade, Kalai, and Ligett]{kakade2007}
Sham~M. Kakade, Adam~Tauman Kalai, and Katrina Ligett.
\newblock Playing games with approximation algorithms.
\newblock In \emph{Proceedings of the Thirty-Ninth Annual ACM Symposium on Theory of Computing}, STOC '07, New York, NY, USA, 2007. Association for Computing Machinery.

\bibitem[Karimi et~al.(2017)Karimi, Lucic, Hassani, and Krause]{karimi2017stochastic}
Mohammad Karimi, Mario Lucic, Hamed Hassani, and Andreas Krause.
\newblock Stochastic submodular maximization: The case of coverage functions.
\newblock \emph{Advances in Neural Information Processing Systems}, 30, 2017.

\bibitem[Kashaev and Santiago(2023)]{uniform_crs}
Danish Kashaev and Richard Santiago.
\newblock A simple optimal contention resolution scheme for uniform matroids.
\newblock \emph{Theoretical Computer Science}, 940:\penalty0 81--96, 2023.

\bibitem[Kempe et~al.(2003)Kempe, Kleinberg, and Tardos]{kempe2003maximizing}
David Kempe, Jon Kleinberg, and {\'E}va Tardos.
\newblock Maximizing the spread of influence through a social network.
\newblock In \emph{Proceedings of the ninth ACM SIGKDD international conference on Knowledge discovery and data mining}, pages 137--146, 2003.

\bibitem[Krause and Golovin(2014)]{krause2014submodular}
Andreas Krause and Daniel Golovin.
\newblock Submodular function maximization.
\newblock \emph{Tractability}, 3\penalty0 (71-104):\penalty0 3, 2014.

\bibitem[Lappas et~al.(2009)Lappas, Liu, and Terzi]{lappas2009finding}
Theodoros Lappas, Kun Liu, and Evimaria Terzi.
\newblock Finding a team of experts in social networks.
\newblock In \emph{KDD}, 2009.

\bibitem[Li et~al.(2023)Li, Liu, Xu, Li, Zhang, and Chang]{curvature_2stage}
Yanzhi Li, Zhicheng Liu, Chuchu Xu, Ping Li, Xiaoyan Zhang, and Hong Chang.
\newblock Two-stage submodular maximization under curvature.
\newblock \emph{Journal of Combinatorial Optimization}, 45\penalty0 (2):\penalty0 77, 2023.

\bibitem[Li et~al.(2018)Li, Cheng, Fujii, Hsieh, and Hsieh]{li2018learning}
Yao Li, Minhao Cheng, Kevin Fujii, Fushing Hsieh, and Cho-Jui Hsieh.
\newblock Learning from group comparisons: exploiting higher order interactions.
\newblock \emph{NeurIPS}, 2018.

\bibitem[Li et~al.(2021)Li, Si~Salem, Neglia, and Ioannidis]{li2021online}
Yuanyuan Li, Tareq Si~Salem, Giovanni Neglia, and Stratis Ioannidis.
\newblock Online caching networks with adversarial guarantees.
\newblock \emph{Proceedings of the ACM on Measurement and Analysis of Computing Systems}, 5\penalty0 (3):\penalty0 1--39, 2021.

\bibitem[Lin and Bilmes(2012)]{Lin2012LearningMO}
Hui-Ching Lin and Jeff~A. Bilmes.
\newblock Learning mixtures of submodular shells with application to document summarization.
\newblock In \emph{Conference on Uncertainty in Artificial Intelligence}, 2012.

\bibitem[Liu et~al.(2022{\natexlab{a}})Liu, Xi, Qin, Sun, Chen, Zhang, Zhang, and Tang]{liu2022neural}
Weiwen Liu, Yunjia Xi, Jiarui Qin, Fei Sun, Bo~Chen, Weinan Zhang, Rui Zhang, and Ruiming Tang.
\newblock Neural re-ranking in multi-stage recommender systems: A review.
\newblock In Lud~De Raedt, editor, \emph{Proceedings of the Thirty-First International Joint Conference on Artificial Intelligence, {IJCAI-22}}, pages 5512--5520. International Joint Conferences on Artificial Intelligence Organization, 7 2022{\natexlab{a}}.
\newblock \doi{10.24963/ijcai.2022/771}.
\newblock Survey Track.

\bibitem[Liu et~al.(2022{\natexlab{b}})Liu, Jin, Du, and Zhang]{knapsack_2stage}
Zhicheng Liu, Jing Jin, Donglei Du, and Xiaoyan Zhang.
\newblock Two-stage submodular maximization under knapsack and matroid constraints.
\newblock In \emph{International Conference on Theory and Applications of Models of Computation}, pages 140--154. Springer, 2022{\natexlab{b}}.

\bibitem[Maas et~al.(2011)Maas, Daly, Pham, Huang, Ng, and Potts]{maas2011learning}
Andrew Maas, Raymond~E Daly, Peter~T Pham, Dan Huang, Andrew~Y Ng, and Christopher Potts.
\newblock Learning word vectors for sentiment analysis.
\newblock In \emph{Proceedings of the 49th annual meeting of the association for computational linguistics: Human language technologies}, pages 142--150, 2011.

\bibitem[Matsuoka et~al.(2021)Matsuoka, Ito, and Ohsaka]{matsuoka2021tracking}
Tatsuya Matsuoka, Shinji Ito, and Naoto Ohsaka.
\newblock Tracking regret bounds for online submodular optimization.
\newblock In \emph{International Conference on Artificial Intelligence and Statistics}, pages 3421--3429. PMLR, 2021.

\bibitem[Mehrotra and Vishnoi(2023)]{mehrotra2023maximizing}
Anay Mehrotra and Nisheeth~K Vishnoi.
\newblock Maximizing submodular functions for recommendation in the presence of biases.
\newblock In \emph{Proceedings of the ACM Web Conference 2023}, pages 3625--3636, 2023.

\bibitem[Mirzasoleiman et~al.(2016)Mirzasoleiman, Badanidiyuru, and Karbasi]{mirzasoleiman2016fast}
Baharan Mirzasoleiman, Ashwinkumar Badanidiyuru, and Amin Karbasi.
\newblock Fast constrained submodular maximization: Personalized data summarization.
\newblock In \emph{International Conference on Machine Learning}, pages 1358--1367. PMLR, 2016.

\bibitem[Mitrovic et~al.(2018)Mitrovic, Kazemi, Zadimoghaddam, and Karbasi]{mitrovic2018data}
Marko Mitrovic, Ehsan Kazemi, Morteza Zadimoghaddam, and Amin Karbasi.
\newblock Data summarization at scale: A two-stage submodular approach.
\newblock In \emph{International Conference on Machine Learning}, pages 3596--3605. PMLR, 2018.

\bibitem[Nassif et~al.(2018)Nassif, Cansizlar, Goodman, and Vishwanathan]{nassif2018diversifying}
Houssam Nassif, Kemal~Oral Cansizlar, Mitchell Goodman, and SVN Vishwanathan.
\newblock Diversifying music recommendations.
\newblock \emph{arXiv preprint arXiv:1810.01482}, 2018.

\bibitem[Nesterov and Nemirovskii(1994)]{nesterov1994interior}
Yurii Nesterov and Arkadii Nemirovskii.
\newblock \emph{Interior-point polynomial algorithms in convex programming}.
\newblock SIAM, 1994.

\bibitem[Niazadeh et~al.(2021)Niazadeh, Golrezaei, Wang, Susan, and Badanidiyuru]{niazadeh2021online}
Rad Niazadeh, Negin Golrezaei, Joshua~R Wang, Fransisca Susan, and Ashwinkumar Badanidiyuru.
\newblock Online learning via offline greedy algorithms: Applications in market design and optimization.
\newblock In \emph{Proceedings of the 22nd ACM Conference on Economics and Computation}, pages 737--738, 2021.

\bibitem[Nikolakaki et~al.(2021)Nikolakaki, Ene, and Terzi]{NikolakakiET21}
Sofia~Maria Nikolakaki, Alina Ene, and Evimaria Terzi.
\newblock An efficient framework for balancing submodularity and cost.
\newblock In Feida Zhu, Beng~Chin Ooi, and Chunyan Miao, editors, \emph{{KDD} '21: The 27th {ACM} {SIGKDD} Conference on Knowledge Discovery and Data Mining, Virtual Event, Singapore, August 14-18, 2021}, pages 1256--1266. {ACM}, 2021.

\bibitem[Qiu and Singla(2022)]{singla_negative_correlation}
Frederick Qiu and Sahil Singla.
\newblock Submodular dominance and applications.
\newblock \emph{Approximation, Randomization, and Combinatorial Optimization. Algorithms and Techniques}, 2022.

\bibitem[Schrijver(1998)]{schrijver1998theory}
Alexander Schrijver.
\newblock \emph{Theory of linear and integer programming}.
\newblock John Wiley \& Sons, 1998.

\bibitem[Schrijver(2003)]{schrijver2003combinatorial}
Alexander Schrijver.
\newblock \emph{Combinatorial optimization: polyhedra and efficiency}, volume~24.
\newblock Springer, 2003.

\bibitem[Shalev-Shwartz(2012)]{shalev2012online}
Shai Shalev-Shwartz.
\newblock Online learning and online convex optimization.
\newblock \emph{Foundations and Trends{\textregistered} in Machine Learning}, 4\penalty0 (2):\penalty0 107--194, 2012.

\bibitem[Si~Salem et~al.(2022)Si~Salem, Neglia, and Carra]{salem2022ascent}
Tareq Si~Salem, Giovanni Neglia, and Damiano Carra.
\newblock Ascent similarity caching with approximate indexes.
\newblock \emph{IEEE/ACM Transactions on Networking}, 31\penalty0 (3):\penalty0 1173--1186, 2022.

\bibitem[Si~Salem et~al.(2024)Si~Salem, {\"O}zcan, Nikolaou, Terzi, and Ioannidis]{salem2024online}
Tareq Si~Salem, G{\"o}zde {\"O}zcan, Iasonas Nikolaou, Evimaria Terzi, and Stratis Ioannidis.
\newblock Online submodular maximization via online convex optimization.
\newblock In \emph{Proceedings of the AAAI Conference on Artificial Intelligence}, volume~38, pages 15038--15046, 2024.

\bibitem[Stan et~al.(2017)Stan, Zadimoghaddam, Krause, and Karbasi]{krause2017}
Serban Stan, Morteza Zadimoghaddam, Andreas Krause, and Amin Karbasi.
\newblock Probabilistic submodular maximization in sub-linear time.
\newblock In \emph{International Conference on Machine Learning}, pages 3241--3250. PMLR, 2017.

\bibitem[Stobbe and Krause(2010)]{stobbe2010efficient}
Peter Stobbe and Andreas Krause.
\newblock Efficient minimization of decomposable submodular functions.
\newblock \emph{Advances in Neural Information Processing Systems}, 23, 2010.

\bibitem[Streeter et~al.(2009)Streeter, Golovin, and Krause]{streeter2009online}
Matthew Streeter, Daniel Golovin, and Andreas Krause.
\newblock Online learning of assignments.
\newblock \emph{Advances in neural information processing systems}, 22, 2009.

\bibitem[Tang et~al.(2014)Tang, Xiao, and Shi]{tang2014influence}
Youze Tang, Xiaokui Xiao, and Yanchen Shi.
\newblock Influence maximization: Near-optimal time complexity meets practical efficiency.
\newblock In \emph{Proceedings of the 2014 ACM SIGMOD international conference on Management of data}, pages 75--86, 2014.

\bibitem[Tschiatschek et~al.(2017)Tschiatschek, Singla, and Krause]{tschiatschek2017selecting}
Sebastian Tschiatschek, Adish Singla, and Andreas Krause.
\newblock Selecting sequences of items via submodular maximization.
\newblock In \emph{Proceedings of the AAAI Conference on Artificial Intelligence}, volume~31, 2017.

\bibitem[Vombatkere and Terzi(2023)]{VombatkereT23}
Karan Vombatkere and Evimaria Terzi.
\newblock Balancing task coverage and expert workload in team formation.
\newblock In Shashi Shekhar, Zhi{-}Hua Zhou, Yao{-}Yi Chiang, and Gregor Stiglic, editors, \emph{Proceedings of the 2023 {SIAM} International Conference on Data Mining, {SDM} 2023, Minneapolis-St. Paul Twin Cities, MN, USA, April 27-29, 2023}, pages 640--648. {SIAM}, 2023.

\bibitem[Yan et~al.(2022)Yan, Qin, Zhuang, Wang, Bendersky, and Najork]{yan2022revisiting}
Le~Yan, Zhen Qin, Honglei Zhuang, Xuanhui Wang, Michael Bendersky, and Marc Najork.
\newblock Revisiting two-tower models for unbiased learning to rank.
\newblock In \emph{Proceedings of the 45th international ACM SIGIR conference on research and development in information retrieval}, pages 2410--2414, 2022.

\bibitem[Yang et~al.(2020)Yang, Yi, Zhiyuan~Cheng, Hong, Li, Xiaoming~Wang, Xu, and Chi]{yang2020mixed}
Ji~Yang, Xinyang Yi, Derek Zhiyuan~Cheng, Lichan Hong, Yang Li, Simon Xiaoming~Wang, Taibai Xu, and Ed~H Chi.
\newblock Mixed negative sampling for learning two-tower neural networks in recommendations.
\newblock In \emph{Companion proceedings of the web conference 2020}, pages 441--447, 2020.

\bibitem[Yang et~al.(2021)Yang, Gu, Gao, Wu, Wang, and Xu]{p_matroid_2stage}
Ruiqi Yang, Shuyang Gu, Chuangen Gao, Weili Wu, Hua Wang, and Dachuan Xu.
\newblock A constrained two-stage submodular maximization.
\newblock \emph{Theoretical Computer Science}, 853:\penalty0 57--64, 2021.

\bibitem[Yang et~al.(2018)Yang, Qi, Zhang, Bengio, Cohen, Salakhutdinov, and Manning]{hotpotqa}
Zhilin Yang, Peng Qi, Saizheng Zhang, Yoshua Bengio, William Cohen, Ruslan Salakhutdinov, and Christopher~D. Manning.
\newblock {H}otpot{QA}: A dataset for diverse, explainable multi-hop question answering.
\newblock In Ellen Riloff, David Chiang, Julia Hockenmaier, and Jun{'}ichi Tsujii, editors, \emph{Proceedings of the 2018 Conference on Empirical Methods in Natural Language Processing}, pages 2369--2380, Brussels, Belgium, October-November 2018. Association for Computational Linguistics.
\newblock \doi{10.18653/v1/D18-1259}.

\bibitem[Yue and Guestrin(2011)]{yue2011linear}
Yisong Yue and Carlos Guestrin.
\newblock Linear submodular bandits and their application to diversified retrieval.
\newblock \emph{Advances in Neural Information Processing Systems}, 24, 2011.

\bibitem[Zachary(1977)]{zachary1977information}
Wayne~W Zachary.
\newblock An information flow model for conflict and fission in small groups.
\newblock \emph{Journal of anthropological research}, 33\penalty0 (4):\penalty0 452--473, 1977.

\bibitem[Zhang et~al.(2019)Zhang, Chen, Hassani, and Karbasi]{zhang2019online}
Mingrui Zhang, Lin Chen, Hamed Hassani, and Amin Karbasi.
\newblock Online continuous submodular maximization: From full-information to bandit feedback.
\newblock \emph{Advances in Neural Information Processing Systems}, 32, 2019.

\bibitem[Zhang et~al.(2022)Zhang, Deng, Chen, Hu, and Yang]{zhang2022stochastic}
Qixin Zhang, Zengde Deng, Zaiyi Chen, Haoyuan Hu, and Yu~Yang.
\newblock Stochastic continuous submodular maximization: Boosting via non-oblivious function.
\newblock In \emph{International Conference on Machine Learning}, pages 26116--26134. PMLR, 2022.

\bibitem[Zheng et~al.(2024)Zheng, Zhao, Huang, Zhang, Mou, Niu, Song, Wang, and Gai]{zheng2024full}
Kai Zheng, Haijun Zhao, Rui Huang, Beichuan Zhang, Na~Mou, Yanan Niu, Yang Song, Hongning Wang, and Kun Gai.
\newblock Full stage learning to rank: A unified framework for multi-stage systems.
\newblock In \emph{Proceedings of the ACM Web Conference 2024}, pages 3621--3631, 2024.

\bibitem[Zinkevich(2003)]{zinkevich2003online}
Martin Zinkevich.
\newblock Online convex programming and generalized infinitesimal gradient ascent.
\newblock In \emph{Proceedings of the 20th International Conference on Machine Learning (ICML)}, pages 928--936, 2003.

\end{thebibliography}


\newpage
\appendix
\section{Applications}\label{app: applications}

In this section, we provide several applications of online two-stage submodular maximization (\problemname), focusing particularly on cases where the second-stage objectives can be modeled via weighted-threshold potential (\WTP) functions. Before we present these applications, we first discuss a few abstract objectives that can be cast in \WTP form. We then use these objectives to define specific \problemname applications/instances.

\subsection{Examples of \WTP functions}

\citet{salem2024online} give several instances of submodular maximization whose objective can be expressed as \WTP functions. These include influence maximization~\citep{kempe2003maximizing}, facility location~\citep{krause2014submodular,frieze1974cost},   cache networks~\citep{ioannidis2016adaptive,li2021online}, similarity caching~\citep{salem2022ascent},  demand forecasting~\citep{ito2016large}, and team formation~\citep{li2018learning}. The objectives of most of these examples are in fact specific cases of weighted coverage, facility location, or quadratic submodular functions. On account of this, we review these three types of objectives in this section, and show that they indeed belong to the \WTP class. We refer the interested reader to \citet{salem2024online}, \citet{karimi2017stochastic}, and \citet{stobbe2010efficient} for further discussion on \WTP functions and applications.

In our description below, we maintain our convention of describing set functions as functions of characteristic (i.e., binary) vectors, i.e., of the form $f:\{0,1\}^n\to\reals_+$.

\paragraph{Weighted Coverage Functions \citep{karimi2017stochastic}. } 
For $V =[n]$, let $\{S_\ell
\}_{\ell \in C}$ be a collection of subsets of $V$, and $c_\ell \in \reals_{\geq 0}$ be a weight associated to each subset $S_\ell$. A weighted set coverage function $f: \{0, 1\}^n \rightarrow \reals_+$ receives as input a $\by\in\{0,1\}^n$ representing a subset of $V$, and receives value $c_\ell$ if $S_\ell$ is ``covered'' by an element in $\by$; formally,  \begin{align} f(\by) = \sum_{\ell \in C} c_\ell \min\left \{1,\sum_{i\in S_\ell} y_i\right\} = \sum_{\ell \in C} c_\ell \min\left \{1,\by\cdot \bw_\ell\right\},\label{eq:wcf}\end{align}
where $\bw_\ell\in\{0,1\}^n$ is the characteristic vector of $S_\ell$.
This is clearly a \WTP function of the form \eqref{eq: WTP}.

    \paragraph{Facility Location \citep{krause2014submodular, frieze1974cost, karimi2017stochastic}.} In the classic facility location problem, we are given a complete weighted bipartite graph $G(V\cup V',E)$, where $V=[n]$, $E= V\times V'$, with non-negative weights $w_{v,v'}\geq 0$, $v\in V$,$v\in V'$. We are also given a distibution $\{p_{v'}\}_{v'\in V'}$, where $p_{v'}\geq 0$, for all $v'\in V'$, and $\sum_{v'\in V'} p_{v'}= 1.$
The decision maker selects a subset $S \subset V$ and each $v'\in V$ responds by selecting the $v \in S$ with the highest weight $w_{v,v'}.$ The goal is to
maximize the average weight of these selected edges, i.e. to maximize
\begin{align}
f(S') = 
\sum_{v'\in V}p_{v'}\max_{v\in S} w_{v,v'}, \label{eq:fl}
\end{align}
given some matroid constraints on $S$. Set 
$V$ can be considered a set of facilities and $V'$ a set of customers or tasks to be served, while $w_{v,v'}$ is the utility accrued if facility $v$ serves tasks $v'$. The goal is then to maximize the utility in expectation over a random task sampled from distribution $\mathbf{p}\in [0,1]^{|V'|})$, assuming the task is mapped to the best facility in $S$.   \citet{karimi2017stochastic} show that this objective is a weighted coverage function, of the form \eqref{eq:wcf}, because, for $\by\in\{0,1\}^n$ the characteristic vector of $S$, we have 
\begin{align}
 \max_{v\in S} w_{v,v'} &=  \sum_{i=1}^{n-1} \left(w_{\pi_i,v'}-w_{\pi_{i+1},v'}\right)\min\left\{1,\textstyle\sum_{j=1}^iy_{\pi_j}\right\}+  w_{\pi_n,v'}\min\left\{ 1,\textstyle\sum_{j=1}^ny_{\pi_j}\right\}
\end{align}
where $ \pi:V\to V$ is a permutation such that:
$w_{\pi_1,v'}\geq w_{\pi_2,v'} \geq \ldots \geq w_{\pi_n,v'}.$
Hence, Eq.~\eqref{eq:fl} can be written in \WTP form in the same way as Eq.~\eqref{eq:wcf}.

\paragraph{Quadratic Submodular Functions \citep{li2018learning,ito2016large}.}
Consider quadratic monotone submodular 
functions $f: \{0, 1\}^n \rightarrow \reals_{\geq 0}$ of the form 
\begin{align}
\label{eq: quadratic wtp}
f(\bx) = \bh^\intercal \bx + \frac{1}{2}\bx^\intercal \bH \bx,
\end{align}
where, w.l.o.g., $\bH$ is symmetric  and has zeros in the diagonal. 
Note that $f$ is monotone and submodular if and only if : (a) $\bH \leq \bzero$, and (b) $\bh + \bH \bx \geq \bzero$, for all $\bx$ in the feasible domain.
We can write the quadratic term as
\begin{align*}
   \bx^\intercal \bH \bx 
            &= \sum_{i=1}^n \sum_{j=1}^n x_i x_j H_{i,j} \\
            &= \sum_{i=1}^n \sum_{j=1}^n (x_i + x_j - \min \{1, x_i + x_j\}) H_{i,j} \\
            &= 2\bx^\intercal \bH \bone + 2\sum_{i=1}^{n-1} \sum_{j=i+1}^n 
            (-H_{i,j})\min \{1, x_i + x_j\} ,
\end{align*}
where we used the equality
$1 - (1-x)(1-y) = x + y - xy = \min \{1, x+y\}, \forall x, y \in \{0, 1\}$.
Therefore, we can write $f$ as
\begin{align*}
    f(\bx) =  \min \left\{\infty, (\bh + \bH \bone) \cdot \bx \right\}+ \sum_{i=1}^{n-1} \sum_{j=i+1}^n 
            (-H_{i,j})\min \{1, x_i + x_j\},
\end{align*}
which is a \WTP\ function.

\subsection{\problemname Instances and Applications.}

\paragraph{Recommender Systems.} The real-time delivery requirements of modern recommender systems necessitate that they operate in multiple stages  \citep{liu2022neural,hron2021component, zheng2024full,covington2016deep}. In particular, upon a user request, a catalog of items to be recommended passes through multiple stages, each thinning the set of recommendation candidates passed on to the next stage. 
The catalog of items passed to the very first stage is typically restricted heuristically, e.g., through item recency, relation to the recommendation surface, etc. It is exactly the catalog design that passes through the very first stage of the recommendation pipeline that \cite{krause2017} proposed to optimize via two-stage submodular maximization, presuming knowledge of the distribution of user preferences. As discussed in the introduction, the online version of the problem (\problemname) enables the adaptation of the catalog in an online fashion, without the prior requirement of the distribution of user demand.

The motivation for considering submodular (i.e., diminishing-returns) functions to model user preferences w.r.t.~recommendations is natural, and several works model recommender system utilities via submodular functions \citep{tschiatschek2017selecting,nassif2018diversifying,el2009turning,yue2011linear,mehrotra2023maximizing,evnine2024achieving}. Several such models involve \WTP functions. For example, \citet{el2009turning}, \citet{tschiatschek2017selecting} and~\cite{yue2011linear} model user utilities via weighted coverage functions of the form \eqref{eq:wcf}: items recommended ``cover'' topics or categories $\ell \in C$, and weights $c_\ell$ capture the  a user to like that category. In similarity-based recommendations \citep{haveliwala2002evaluating,agrawal1993efficient,gionis1999similarity}, which often model early-stage rankers \citep{yan2022revisiting,yang2020mixed},  queries and items are jointly embedded in the same latent space, and the reward of a recommendated set is the similarity of nearest/most similar item to the user query.  This can naturally be modeled via a facility location objective~\eqref{eq:fl},  where $v'$ correspond to user/queries, $w_{v,v'}$ is the similarity  between their embeddings, and $p_v'$ corresponds to a distribution or weights over  queries.   

\paragraph{Team Formation.} In high-reliability settings, such as large-scale software companies or municipal emergency services, organizations must routinely form standby teams capable of reacting to unpredictable, high-stakes events. For instance, a software company may have thousands of engineers, and each day must select a small subset of individuals to be ``on-call'' in case of operational incidents (e.g., outages, high-severity tickets, etc.). Similarly, emergency response agencies must designate a pool of responders (e.g., firefighters, paramedics, utility crews) to cover incoming incidents throughout the day. Each event or incident, when it occurs, brings specific requirements: geographic proximity or timezone alignment, technical expertise, domain training, or familiarity with particular systems, neighborhoods, or equipment.  In both examples, the organization \textit{must commit in advance} to a restricted set of available personnel (i.e., a \emph{roster}) $\bx_t \in \calX_\ell$, from which \textit{a second-stage selection is made once an incident materializes}, i.e., $\by_t \in \calY_\calM(\bx_t)$. Since the distribution over incident types is unknown and evolves over time, this naturally constitutes an instance of the \problemname problem. 

Team performance requirements are often modeled in the literature using a weighted coverage function \citep{NikolakakiET21,VombatkereT23} 
(see Eq.~\eqref{eq:wcf}). 
In these settings, the utility of a selected team is determined by how well it satisfies task-specific needs, captured by an appropriate bipartite graph. 
Several works \citep{lappas2009finding,boon2003team,li2018learning} model pairwise synergies among team members via quadratic submodular functions (Eq.~\eqref{eq: quadratic wtp}), where the matrix $\bH$ encodes compatibility or redundancy among team members. Both cases fall within the \WTP class.

\paragraph{Influence Maximization.} In social networks and marketing platforms, influence maximization aims to identify a small set of individuals whose activation leads to the largest spread of information, behaviors, or product adoption. A common strategy is to ``seed'' selected users, e.g. by offering free access, promotional incentives, or early content, so that influence propagates through their social ties. However, running influence maximization over a large-scale network is often computationally prohibitive, particularly when campaigns or objectives evolve over time. A natural approach is to first select a restricted set of influential or cooperative users and later select a campaign-specific subset to activate. Formally, this matches the \problemname structure: at each round $t$, a platform selects a restricted candidate pool $\bx_t \in \calX_\ell$, from which a seed set $\by_t \in \calY_\calM(\bx_t)$ of $k$ users is selected, after observing the influence objective $f_t$ (e.g., a new product or message). 
Since these objectives arrive sequentially and may reflect different target audiences or goals, the platform must adaptively update $\bx_t$ based on the influence functions materialized in past rounds, in order to improve the quality of future seed selections.

Classical models of influence propagation, such as the Independent Cascade and Linear Threshold models~\citep{kempe2003maximizing}, yield monotone submodular objectives that measure the expected number of activated nodes. In practice, many works study scalable approximations of these models using weighted coverage functions~\citep{borgs2014maximizing, tang2014influence} (see Eq.~\eqref{eq:wcf}), where the coverage corresponds to reaching key individuals, groups, or communities within the network. These objectives fall within the \WTP class.

\section{Contention Resolution Schemes}\label{appendix:crs}

\subsection{Background}

The framework of Contention Resolution Schemes was formalized by \citet{chekuri_vondrak_crs} and has since found many applications. 
Given some feasibility constraints (usually matroid constraints) over some groundset $\calV$, a Contention Resolution
Scheme (CRS) is an algorithm that accepts a random set $R \subseteq \calV$, that is not necessarily feasible, and trims it down
to a feasible set $\pi(R) \subseteq R$. The goal of this ``rounding'' is to ensure that each element is kept with a high enough
probability.Most prior work on CRSs has predominantly focused on product distributions, i.e. on 
random sets $R$ that include each element $e \in \calV$, independently, with some probability $x_e$. 

In our setting, we aim to use CRS-based roundings to round second-stage solutions (see Appendix~\ref{app:rounding}). However, the inclusion of an element in the random set $R$ depends on whether it was included in the rounded groundset by the randomized rounding applied to the first-stage. Since this rounding process (namely Randomized Pipage Rounding) introduces dependencies between elements, the inclusion of elements in $R$ is no longer independent, and standard CRSs for product distributions cannot be directly applied.

Recently, \citet{dughmi2020, dughmi2022_matroid_secretary} initiated the study of CRSs under non-product distributions. We build on this line of work and prove that the distribution over sets $R$ induced by our two-stage rounding process satisfies a property known as \emph{submodular dominance} (Definition~\ref{def:submodular-dominance}), which enables the use of CRSs even in the absence of independence
(Corollary~\ref{cor:appendix-crs-for-sub-dominance}).

\paragraph{Notation.} For the rest of the section, we fix $\calV = [n]\equiv \{1,\ldots,n\}$ to be a ground set of $n$ elements.
Let $\calM$ be a matroid over $\calV$ and $\bw \in \reals_+^n$ be a non-negative vector.
We define the \textit{weighted rank} function $\rank_{\bw, \calM}: \{0,1\}^n \to \reals_+$, which maps every subset $U 
\subseteq \calV$ to the weight of the maximum-weight independent set of $\calM$ that is a subset of $U$. Formally, under our notational convention equating sets with their characteristic vectors, 
$$\rank_{\bw, \calM}(\bx) = \max_{\by\in \calY_\calM(\bx)} \bw^\top \by.$$
When the matroid $\calM$ is clear from the context, we will omit it from the subscript and use 
the notation $\rank_{\bw}$.
We use $\Delta\left(\{0,1\}^n\right)$ to denote all possible probability distributions over $\{0,1\}^n$.
For any fractional vector $\bx \in [0,1]^n$, we define the distribution $\textbf{Ind}(\bx) \in \Delta\left(\{0,1\}^n\right)$ which produces a random set $\bz \sim \textbf{Ind}(\bx)$ by including
every $i \in [n]$, independently, with probability $x_i$.
For a distribution $\calD \in \Delta\left(\{0,1\}^n\right)$, we use $\bmu_{\calD} \in [0,1]^n$ to denote it's expected value, i.e. $\bmu_{\calD} = \E_{\bx \sim \calD}[\bx].$

We now give a formal definition of a Contention Resolution Scheme.

\vspace{1em}

\begin{definition}[Contention Resolution Scheme (CRS)] \label{def:CRS}
    Let $\calM = (\calV, \calI)$ be a matroid over $\calV$. A contention resolution scheme (CRS) $\phi$ is a randomized function from $\{0,1\}^n$ to $\calI$,
    with the property that for all $\bx \in \{0,1\}^n: \phi(\bx) \leq \bx$ and $\phi(\bx) \in \calI$.
\end{definition}

Of course, the above definition by itself is trivially satisfied. For instance, one could define $\phi$ to always return the empty set, which is always feasible but clearly not useful. Intuitively, in CRS-based roundings, we  want to retain each element of the input set with a sufficiently high probability. This intuition is captured by the notion of \emph{selectability}, which quantifies the minimum probability with which each element is preserved (conditioned on it being present in the input set).

\vspace{1em}

\begin{definition}[$\gamma$-selectable CRS]\label{def:CRS-selectability}
    Let $\calM = ([n], \calI)$ be a matroid over $[n]$ and $\calD \in \Delta\left(\{0,1\}^n\right)$ be a distribution over subsets
    of $[n]$. A contention resolution scheme (CRS) $\phi$ is called $\gamma$-selectable for $\calD$, if for any $\bx \sim \calD$ and every $i \in [n]$ it holds that
    $\Pr_{\bx \sim \calD, \phi}[\phi(\bx)_i = 1 | \bx_i = 1] \geq \gamma$.
\end{definition}

To extend the applicability of CRSs beyond product distributions, we need a structural property of distributions that enables approximation guarantees even in the presence of dependencies. One such property is \emph{submodular dominance} \citep{dughmi2020, singla_negative_correlation}, which intuitively ensures that a distribution is “at least as good” as its independent counterpart when evaluated against any submodular function. We give the formal definition below.

\vspace{1em}

\begin{definition}[Submodular dominance]\label{def:submodular-dominance}
    A distribution $\calD \in \Delta\left(\{0,1\}^n\right)$ has the property of submodular dominance if
    for every submodular function $f$ it holds that $\E_{\bx \sim \calD}[f(\bx)] \geq \E_{x \sim \textbf{Ind}(\mu_\calD)}[f(\bx)]$.
\end{definition}

We now proceed by stating the properties of some state-of-the-art contention resolution schemes for product distributions,
i.e. $\textbf{Ind}(\bx)$ for some $\bx \in \calP(\calM)$. We also state a necessary and sufficient condition
for the existence of a $\gamma$-selectable CRS for a distribution $\calD$. 

\vspace{1em}
\begin{lemma}[\citet{chekuri_vondrak_crs}]\label{lem:appendix-crs-chekuri}
For any matroid $\calM$ and any $\bx \in \calP(\calM)$, there exists a contention resolution scheme $\phi$ that is $(1-1/e)$-selectable for the distribution $\textbf{Ind}(\bx)$.
\end{lemma}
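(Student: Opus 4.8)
The plan is to reduce the statement to a single min--max identity and then invoke the correlation gap of the weighted matroid-rank function. Fix $\bx\in\calP(\calM)$; coordinates with $x_i=0$ never occur in a sample of $\textbf{Ind}(\bx)$, so the selectability inequality is vacuous for them and we may assume $x_i>0$ for all $i$. For any CRS $\phi$, since $\phi(\bz)\le\bz$ we have $\Pr_{\bz\sim\textbf{Ind}(\bx),\phi}[\phi(\bz)_i=1\mid \bz_i=1]=\Pr[\phi(\bz)_i=1]/x_i$, so it suffices to exhibit a $\phi$ with $\Pr[\phi(\bz)_i=1]\ge(1-1/e)\,x_i$ for every $i$. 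Accordingly I would study
\[
\gamma^\star \;=\; \max_{\phi}\;\min_{i\in[n]}\;\frac{\Pr_{\bz\sim\textbf{Ind}(\bx),\phi}[\phi(\bz)_i=1]}{x_i},
\]
the best common ratio over all CRSs, and show $\gamma^\star\ge 1-1/e$; any $\phi$ attaining the outer maximum is then the desired $(1-1/e)$-selectable scheme for $\textbf{Ind}(\bx)$.

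Next I would compute $\gamma^\star$ by min--max duality. There are only finitely many deterministic CRSs (functions $\{0,1\}^n\to\calI$ with $\phi(\bz)\le\bz$), so $\gamma^\star$ is the value of a finite zero-sum game in which the maximizer randomizes over these schemes, the minimizer picks $i$, and the payoff $\Pr[\phi(\bz)_i=1]/x_i$ is bilinear in the mixed strategies. By the minimax theorem, writing a mixed strategy of the minimizer as $(p_i)_i$ and substituting $w_i=p_i/x_i\ge 0$ (so $\bw$ ranges over $\{\bw\ge\bzero:\bw\cdot\bx=1\}$),
\[
\gamma^\star \;=\; \min_{\bw\ge\bzero,\ \bw\cdot\bx=1}\;\max_{\phi}\;\E_{\bz\sim\textbf{Ind}(\bx)}\!\Big[\textstyle\sum_i w_i\,\Pr[\phi(\bz)_i=1\mid\bz]\Big].
\]
For a fixed $\bw\ge\bzero$ the optimal response is the deterministic scheme that outputs, on each realization $\bz$, a maximum-$\bw$-weight independent subset of $\bz$; hence the inner maximum equals $\E_{\bz\sim\textbf{Ind}(\bx)}[\rank_{\bw,\calM}(\bz)]$, and therefore $\gamma^\star=\min_{\bw\ge\bzero,\ \bw\cdot\bx=1}\E_{\bz\sim\textbf{Ind}(\bx)}[\rank_{\bw,\calM}(\bz)]$.

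Finally I would lower-bound this by $1-1/e$. Fix $\bw\ge\bzero$ with $\bw\cdot\bx=1$; the function $\rank_{\bw,\calM}$ is monotone and submodular, and I claim its concave closure satisfies $\widehat{\rank}_{\bw,\calM}(\bx)=\bw\cdot\bx=1$. Indeed, $\rank_{\bw,\calM}(\bz)\le\bw\cdot\bz$ for every $\bz$, so the closure is at most $\bw\cdot\bx$; conversely, since $\bx\in\calP(\calM)$ we may write $\bx=\sum_k\lambda_k\mathbf{1}_{I_k}$ as a convex combination of independent sets, and because $\rank_{\bw,\calM}(\mathbf{1}_{I_k})=\bw\cdot\mathbf{1}_{I_k}$ for independent $I_k$, this decomposition certifies $\widehat{\rank}_{\bw,\calM}(\bx)\ge\bw\cdot\bx$. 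Invoking the correlation-gap bound for monotone submodular functions, $\E_{\bz\sim\textbf{Ind}(\bx)}[\rank_{\bw,\calM}(\bz)]\ge(1-1/e)\,\widehat{\rank}_{\bw,\calM}(\bx)=1-1/e$; minimizing over $\bw$ gives $\gamma^\star\ge 1-1/e$, as needed.

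The only step that is not elementary is the correlation-gap inequality $\E_{\bz\sim\textbf{Ind}(\bx)}[g(\bz)]\ge(1-1/e)\,\widehat g(\bx)$ for monotone submodular $g$; I would either cite it (Calinescu--Chekuri--P\'al--Vondr\'ak; Agrawal--Ding--Saberi--Ye) or reprove it in a couple of lines by comparing $\textbf{Ind}(\bx)$ with a random rounding of the convex decomposition $\bx=\sum_k\lambda_k\mathbf{1}_{I_k}$ and using the standard $1-1/e$ bound for coverage. The min--max duality and the polytope identity $\widehat{\rank}_{\bw,\calM}(\bx)=\bw\cdot\bx$ are routine, so this last correlation-gap estimate is the real content; note also that the resulting $\phi^\star$ is guaranteed to exist but is not claimed to be efficiently computable, which is all Corollary~\ref{cor:appendix-crs-for-sub-dominance} will require.
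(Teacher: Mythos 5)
Your argument is correct, but note that the paper does not actually prove this statement: Lemma~\ref{lem:appendix-crs-chekuri} is imported verbatim from \citet{chekuri_vondrak_crs}, so there is no in-paper proof to match. What you have reconstructed is essentially the original argument from that source: (i) an LP/minimax duality showing that the optimal selectability for $\textbf{Ind}(\bx)$ equals $\min\{\E_{\bz\sim\textbf{Ind}(\bx)}[\rank_{\bw,\calM}(\bz)] : \bw\ge\bzero,\ \bw\cdot\bx=1\}$, and (ii) the correlation-gap bound for monotone submodular functions applied to the weighted rank function, whose concave closure at $\bx\in\calP(\calM)$ is exactly $\bw\cdot\bx$. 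All the individual steps check out (the game is finite and bilinear so von Neumann applies; the best response to a fixed $\bw$ is the max-weight independent subset; $\rank_{\bw,\calM}(\emptyset)=0$ and monotone submodularity hold, so the correlation gap applies). Two remarks. First, the duality half of your argument is already available in the paper as Lemma~\ref{lem:appendix-dughmi-crs} specialized to $\calD=\textbf{Ind}(\bx)$, so you could skip the game-theoretic derivation and reduce the whole proof to the single inequality $\E_{\bz\sim\textbf{Ind}(\bx)}[\rank_{\bw,\calM}(\bz)]\ge(1-1/e)\,\bw\cdot\bx$. Second, you are right that the correlation gap is the real content; your claim that it can be reproved ``in a couple of lines'' is optimistic for general monotone submodular functions, so citing it is the appropriate move --- which is no weaker than what the paper itself does by citing the lemma wholesale.
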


The above result holds for any matroid and is known to be asymptotically optimal for general matroids under product distributions~\citep{chekuri_vondrak_crs}. 
For specific matroids, however, stronger guarantees are possible. In particular, the following lemma gives a tighter selectability bound for uniform matroids of rank~$k$.

\vspace{1em}
\begin{lemma}[\citet{uniform_crs}]\label{lem:appendix-crs-uniform-k}
Let $\calM'$ be the uniform matroid of rank $k$. There exists a contention resolution scheme $\phi'$
that is $(1-e^{-k}k^k/k!)$-selectable for the matroid $\calM'$ and the distribution $\textbf{Ind}(\bx)$ for any $\bx \in \calP(\calM')$.
\end{lemma}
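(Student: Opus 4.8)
The plan is to avoid constructing the scheme explicitly — the symmetric scheme that keeps a uniformly random size-$k$ subset of $R$ whenever $|R|>k$ is already \emph{not} $\gamma$-selectable for all $\bx\in\calP(\calM')$ at $k=1$ (consider $\bx=(\epsilon,1-\epsilon,0,\dots,0)$), so a $\bx$-aware scheme is needed — and instead to invoke the necessary-and-sufficient condition for the existence of a $\gamma$-selectable CRS stated earlier in this appendix. That condition says a $\gamma$-selectable CRS for $\calD$ and $\calM$ exists iff $\E_{R\sim\calD}\!\left[\rank_{\bw,\calM}(R)\right]\ge\gamma\,\bw^\top\bmu_{\calD}$ for every $\bw\in\reals_{\ge0}^n$. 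With $\calD=\textbf{Ind}(\bx)$ (so $\bmu_\calD=\bx$) and $\calM=\calM'$ the rank-$k$ uniform matroid, it therefore suffices to prove
$$\E_{R\sim\textbf{Ind}(\bx)}\!\left[\rank_{\bw,\calM'}(R)\right]\;\ge\;\Big(1-\tfrac{e^{-k}k^k}{k!}\Big)\,\bw^\top\bx\qquad\text{for all }\bw\ge\bzero.$$

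The first step is a layer-cake reduction. Writing $S_t=\{i:w_i\ge t\}$, one has $\rank_{\bw,\calM'}(R)=\int_0^\infty\min\{k,|R\cap S_t|\}\,dt$ and $\bw^\top\bx=\int_0^\infty\big(\textstyle\sum_{i\in S_t}x_i\big)\,dt$, so it is enough to show, for every $S\subseteq[n]$,
$$\E\!\left[\min\{k,Z\}\right]\;\ge\;\Big(1-\tfrac{e^{-k}k^k}{k!}\Big)\,s,\qquad Z:=|R\cap S|,\quad s:=\textstyle\sum_{i\in S}x_i\le k,$$
where $Z$ is a sum of independent Bernoulli variables with parameters $(x_i)_{i\in S}$ and mean $s$. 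Rewriting $\E[\min\{k,Z\}]=s-\E[(Z-k)^+]$, the goal becomes $\E[(Z-k)^+]\le\tfrac{e^{-k}k^k}{k!}\,s$.

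The core of the argument bounds $\E[(Z-k)^+]$ by its Poisson counterpart. Since $z\mapsto(z-k)^+$ is convex, a standard convex-order comparison gives $\E[(Z-k)^+]\le\psi(s)$ with $\psi(s):=\E[(\mathrm{Poisson}(s)-k)^+]$: replacing a $\mathrm{Ber}(p)$ summand of $Z$ by two independent $\mathrm{Ber}(p/2)$'s is a mean-preserving spread (it moves mass $p^2/4$ from $1$ to each of $0$ and $2$), hence does not decrease the expectation of any convex function; iterating drives all parameters to $0$, at which point $Z$ converges in distribution to $\mathrm{Poisson}(s)$ (the total-variation gap is at most $\sum_{i\in S} x_i^2$), and uniform integrability of $(Z-k)^+$ (its means are bounded by $s$ and second moments by $s+s^2$) lets us pass to the limit. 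Next, from $\psi'(s)=\Pr[\mathrm{Poisson}(s)\ge k]$ we see that $\psi$ is convex with $\psi(0)=0$, so $s\mapsto\psi(s)/s$ is nondecreasing on $(0,k]$ and thus $\psi(s)\le\tfrac{s}{k}\,\psi(k)$ whenever $s\le k$. Finally a short reindexing of the Poisson series gives $\psi(k)=k\Pr[\mathrm{Poisson}(k)=k]=k\cdot\tfrac{e^{-k}k^k}{k!}$. Chaining, $\E[(Z-k)^+]\le\psi(s)\le\tfrac{s}{k}\psi(k)=\tfrac{e^{-k}k^k}{k!}\,s$, so $\E[\min\{k,Z\}]\ge s\big(1-\tfrac{e^{-k}k^k}{k!}\big)$; integrating back over $t$ yields the displayed inequality for all $\bw$, and the existence condition then produces the claimed CRS.

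The step I expect to be the main obstacle is the convex-order domination of Poisson-binomial laws by the Poisson law, together with the uniform-integrability justification needed to take the limit against the unbounded test function $(z-k)^+$; the remainder is bookkeeping (the layer-cake reduction) and explicit Poisson computations. A secondary point of care is that the argument is tied to the precise form of the necessary-and-sufficient CRS-existence condition invoked at the outset: if that condition is phrased via weighted ranks against arbitrary nonnegative weight vectors as above, the layer-cake step applies verbatim; otherwise the first reduction would need to be re-derived for the exact statement used.
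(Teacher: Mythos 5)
Your proof is correct, but note that the paper does not actually prove this statement: it is imported verbatim from the cited reference (and from \citet{dughmi2020}), with only the remark that the underlying schemes admit constructive, polynomial-time descriptions. So there is no paper proof to match; what you have supplied is a genuine, self-contained derivation. Your route — invoking the necessary-and-sufficient existence criterion of Lemma~\ref{lem:appendix-dughmi-crs}, reducing the weighted-rank inequality to the single-set inequality $\E[\min\{k,Z\}]\ge(1-e^{-k}k^k/k!)\,s$ via the layer-cake decomposition of the top-$k$ sum, and then bounding $\E[(Z-k)^+]$ by its Poisson counterpart through a mean-preserving-spread/convex-order argument — checks out at every step: the identity $\psi'(s)=\Pr[\mathrm{Poisson}(s)\ge k]$, the slope monotonicity $\psi(s)\le (s/k)\psi(k)$ for $s\le k$ (which uses exactly that $\bx\in\calP(\calM')$ forces $s\le k$), and the closed form $\psi(k)=k\,e^{-k}k^k/k!$ are all correct, and your uniform-integrability justification for passing to the Poisson limit is adequate since the second moments are uniformly bounded by $s+s^2$. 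Your opening observation that the naive symmetric scheme fails already at $k=1$ is also accurate and motivates the non-constructive detour. The one thing your argument does not deliver, by design, is an explicit or monotone scheme; the cited constructive proofs do provide one. For this paper that distinction is immaterial, because the CRS is used purely inside the analysis of Lemma~\ref{lem:alg-rounding-c-approx} (via Lemma~\ref{lem:appendix-phi-is-selectable-crs-}), where bare existence of a $c_\calM$-selectable scheme is all that is consumed. One presentational nit: the Le Cam total-variation bound at splitting stage $m$ is $\sum_{i\in S}x_i^2/2^m$, not $\sum_{i\in S}x_i^2$; this does not affect the argument since either bound suffices together with monotonicity of the sequence $\E[(Z_m-k)^+]$.
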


Both the aforementioned results (Lemmas~\ref{lem:appendix-crs-chekuri} and~\ref{lem:appendix-crs-uniform-k}) admit constructive proofs, i.e. the respective contention resolution schemes
are implementable in polynomial time and they are described in \citet{chekuri_vondrak_crs} and
\citet{dughmi2020} respectively. However, our algorithm does not rely on these constructions, as they are only needed in Appendix~\ref{app:rounding} for analysis purposes.

We now step away from product distributions and state a recent result of \citet{dughmi2020} that
gives a necessary and sufficient condition for the existence of a $\gamma$-selectable CRS
for a distribution $\calD$.

\vspace{1em}

\begin{lemma}[\citet{dughmi2020}]\label{lem:appendix-dughmi-crs}
    Let $\calM$ be a matroid over $[n]$ and $\calD \in \Delta\left(\{0,1\}^n\right)$ be a distribution over subsets of $[n]$. For any $\gamma \in [0,1]$, the
    following are equivalent.
    \begin{enumerate}[label=\alph*)]
        \item There exists a contention resolution scheme $\phi$ that is $\gamma$-selectable for $\calD$.
        \item For every weight vector $\bw \in \reals^n_+$, it holds that
        $$
        \E_{\bx \sim \calD}\left[\rank_{\bw}(\bx)\right] \geq \gamma \cdot \E_{\bx \sim \calD}\left[\sum_{i \in [n]} w_i x_i\right].
        $$
    \end{enumerate}
\end{lemma}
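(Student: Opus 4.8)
I would prove the two implications separately. The direction $(a)\Rightarrow(b)$ is a one-line averaging argument: fix a $\gamma$-selectable CRS $\phi$ for $\calD$ and a weight vector $\bw\in\reals_+^n$, and set $\mu_i=(\bmu_\calD)_i=\Pr_{\bx\sim\calD}[\bx_i=1]$. Since $\phi(\bx)\in\calI$ and $\phi(\bx)\le\bx$ for every $\bx$, the independent set $\phi(\bx)$ lies inside $\bx$, so $\bw^\top\phi(\bx)\le\rank_\bw(\bx)$ pointwise; taking expectations over $\bx\sim\calD$ and the internal randomness of $\phi$, and using $\phi(\bx)_i=1\Rightarrow\bx_i=1$ together with $\gamma$-selectability coordinatewise,
\[
\E_{\bx\sim\calD}[\rank_\bw(\bx)]\ \ge\ \E[\bw^\top\phi(\bx)]\ =\ \sum_{i:\,\mu_i>0}w_i\,\mu_i\,\Pr[\phi(\bx)_i=1\mid\bx_i=1]\ \ge\ \gamma\sum_{i\in[n]}w_i\mu_i,
\]
where coordinates with $\mu_i=0$ contribute $0$ throughout; since $\sum_i w_i\mu_i=\E_{\bx\sim\calD}[\sum_i w_ix_i]$, this is exactly $(b)$.

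\textbf{The hard direction $(b)\Rightarrow(a)$.} The plan is to phrase the existence of a $\gamma$-selectable CRS as a linear program and dualize. First observe that, without loss of generality, a CRS for $\calD$ is specified by choosing, for each $\bx$ in the finite support of $\calD$, a distribution $q_\bx\in\Delta\big(\{\by\in\calI:\by\le\bx\}\big)$: the behaviour of $\phi$ off $\supp(\calD)$ does not affect selectability, and $\bzero\in\calI$ makes each feasible-output set nonempty. Thus existence of a $\gamma$-selectable CRS amounts to feasibility of the LP in variables $\{q_\bx(\by)\}$ and $\gamma$,
\begin{align*}
\max\ \gamma\quad\text{s.t.}\quad & \textstyle\sum_{\bx\in\supp(\calD)}\calD(\bx)\sum_{\by\in\calI,\,\by\le\bx,\,\by_i=1}q_\bx(\by)\ \ge\ \gamma\,\mu_i, && i\in[n],\\
& q_\bx\in\Delta\big(\{\by\in\calI:\by\le\bx\}\big), && \bx\in\supp(\calD);
\end{align*}
let $\gamma^\star$ be its optimum (attained, since this is a finite LP; WLOG $\bmu_\calD\neq\bzero$, the case $\calD=\delta_{\bzero}$ being trivially true on both sides, and every feasible $\gamma$ satisfies $\gamma\le1$). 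Relaxing the normalization to $\le1$ (equivalent, as $\bzero\in\calI$ lets us pad) and dualizing, with multipliers $\bw\ge\bzero$ on the selectability constraints and $\lambda_\bx\ge0$ on the normalization constraints, yields
\begin{align*}
\min\ \textstyle\sum_{\bx\in\supp(\calD)}\lambda_\bx\quad\text{s.t.}\quad & \lambda_\bx\ \ge\ \calD(\bx)\,\bw^\top\by, && \bx\in\supp(\calD),\ \by\in\calI,\ \by\le\bx,\\
& \bw^\top\bmu_\calD=1,\quad \bw\ge\bzero,\quad \lambda\ge\bzero.
\end{align*}
For fixed feasible $\bw$ the optimal $\lambda_\bx$ is $\calD(\bx)\max_{\by\in\calI,\,\by\le\bx}\bw^\top\by=\calD(\bx)\rank_\bw(\bx)$, so the dual optimum equals $\min_{\bw\ge\bzero:\,\bw^\top\bmu_\calD=1}\E_{\bx\sim\calD}[\rank_\bw(\bx)]$. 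By strong LP duality, $\gamma^\star$ equals this quantity; and $(b)$, applied to every $\bw\ge\bzero$ with $\bw^\top\bmu_\calD=1$, says exactly that it is $\ge\gamma$. Hence $\gamma^\star\ge\gamma$, so an optimal primal solution $\{q_\bx^\star\}$ (padded with $\bzero$ into honest distributions, and $\phi$ set arbitrarily off $\supp(\calD)$) is a $\gamma^\star$-, hence $\gamma$-, selectable CRS.

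\textbf{Main obstacle.} I expect the crux to be the dual bookkeeping: justifying that a CRS for $\calD$ decomposes without loss into independent per-realization distributions over independent subsets, tracking signs so that the only coupling surviving in the dual is the inequality $\lambda_\bx\ge\calD(\bx)\bw^\top\by$, and recognizing that maximizing $\bw^\top\by$ over independent $\by\le\bx$ is by definition $\rank_\bw(\bx)$ — the step that collapses the dual objective to $\E_\calD[\rank_\bw]$. I would also need to check that strong duality applies (finite LP; primal feasible via $\gamma=0,\ q_\bx=\delta_{\bzero}$, and bounded; dual feasible, e.g.\ via $\bw=\be_i/\mu_i$ for some $i$ with $\mu_i>0$) and dispatch the degenerate case $\bmu_\calD=\bzero$.
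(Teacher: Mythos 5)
Your proof is correct. Note that the paper does not prove this lemma at all: it is imported verbatim from \citet{dughmi2020} and used as a black box, so there is no in-paper argument to compare against. Your LP-duality derivation is essentially the standard proof from that reference. The easy direction $(a)\Rightarrow(b)$ is handled correctly (the key points — $\bw^\top\phi(\bx)\le\rank_{\bw}(\bx)$ pointwise because $\phi(\bx)$ is an independent subset of $\bx$, and $\Pr[\phi(\bx)_i=1]=\mu_i\Pr[\phi(\bx)_i=1\mid \bx_i=1]$ because $\phi(\bx)\le\bx$ — are both stated). For the converse, the LP formulation is the right one: a CRS restricted to $\supp(\calD)$ is exactly a family of per-realization distributions $q_{\bx}$ over independent subsets of $\bx$, the relaxation of the normalization to an inequality is harmless because mass can be padded onto $\emptyset\in\calI$ without affecting any selectability constraint, and the inner minimization over $\lambda_{\bx}$ collapses the dual objective to $\E_{\bx\sim\calD}[\rank_{\bw}(\bx)]$ by the definition of the weighted rank. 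The normalization $\bw^\top\bmu_\calD=1$ is legitimate because $\rank_{\bw}$ is positively homogeneous in $\bw$, and your checks of primal/dual feasibility and of the degenerate cases ($\mu_i=0$, $\bmu_\calD=\bzero$) are the right ones. The only caveat worth stating explicitly if this were to appear in print is that the LP may have exponentially many variables, which is irrelevant here since the lemma asserts only existence and strong duality holds for any finite LP.
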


Using the previous lemma, one can prove that the property of submodular dominance (Definition~\ref{def:submodular-dominance}) is also sufficient for the existence of a $\gamma$-selectable CRS. The latter has been stated informally by \citet{dughmi2020}. We state it formally in the next lemma and give a small proof for completeness.

\begin{lemma}[\citet{dughmi2020}]\label{lem:appendix-submodular-dominance-implies-crs}
Let $\calM$ be a matroid over $[n]$ and $\calD \in \Delta(2^{[n]})$ be a distribution over subsets of $[n]$. 
Suppose that $\calD$ has the property of submodular dominance (Def.~\eqref{def:submodular-dominance}).
Then, for any $\gamma \in [0,1]$, if there exists
a $\gamma$-selectable CRS for $\textbf{Ind}(\bmu_\calD)$, then there also exists a $\gamma$-selectable CRS for $\calD$.
\end{lemma}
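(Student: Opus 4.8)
The plan is to route everything through the characterization in Lemma~\ref{lem:appendix-dughmi-crs}, which reduces the existence of a $\gamma$-selectable CRS to verifying a single family of scalar inequalities. Concretely, to produce a $\gamma$-selectable CRS for $\calD$ it suffices to show that for every weight vector $\bw \in \reals_+^n$,
\[
\E_{\bx \sim \calD}\left[\rank_{\bw}(\bx)\right] \;\geq\; \gamma \cdot \E_{\bx \sim \calD}\left[\textstyle\sum_{i \in [n]} w_i x_i\right] \;=\; \gamma \cdot \textstyle\sum_{i\in[n]} w_i (\bmu_\calD)_i ,
\]
where the last equality is just the definition $\bmu_\calD = \E_{\bx\sim\calD}[\bx]$.

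First I would fix an arbitrary $\bw \in \reals_+^n$ and record the (standard) fact that the weighted rank function $\rank_{\bw}(\cdot) = \rank_{\bw,\calM}(\cdot)$ is monotone and submodular; one quick justification is to write $\rank_{\bw}$ as a nonnegative combination (an integral over thresholds $t \ge 0$) of unweighted matroid rank functions of the minors of $\calM$ obtained by restricting to $\{i : w_i \ge t\}$, each of which is submodular. Since $\calD$ has the submodular dominance property (Definition~\ref{def:submodular-dominance}), applying it to the submodular function $f = \rank_{\bw}$ gives
\[
\E_{\bx\sim\calD}[\rank_{\bw}(\bx)] \;\ge\; \E_{\bx\sim\textbf{Ind}(\bmu_\calD)}[\rank_{\bw}(\bx)].
\]

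Next I would invoke the hypothesis: there exists a $\gamma$-selectable CRS for the product distribution $\textbf{Ind}(\bmu_\calD)$. By the (a)$\Rightarrow$(b) direction of Lemma~\ref{lem:appendix-dughmi-crs} applied to this distribution,
\[
\E_{\bx\sim\textbf{Ind}(\bmu_\calD)}[\rank_{\bw}(\bx)] \;\ge\; \gamma\cdot \E_{\bx\sim\textbf{Ind}(\bmu_\calD)}\left[\textstyle\sum_i w_i x_i\right] \;=\; \gamma\cdot \textstyle\sum_i w_i(\bmu_\calD)_i ,
\]
since the $i$-th marginal of $\textbf{Ind}(\bmu_\calD)$ is exactly $(\bmu_\calD)_i$. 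Chaining the two displays and using $\sum_i w_i(\bmu_\calD)_i = \E_{\bx\sim\calD}[\sum_i w_i x_i]$ establishes the required inequality for $\bw$; since $\bw$ was arbitrary, condition (b) of Lemma~\ref{lem:appendix-dughmi-crs} holds for $\calD$, and the (b)$\Rightarrow$(a) direction yields the desired $\gamma$-selectable CRS for $\calD$.

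The proof is short and the only real content is the two bookkeeping facts that make the ``sandwich'' go through: that $\rank_{\bw}$ is submodular (so submodular dominance is actually applicable to it) and that the linear statistics agree, $\E_{\calD}[\sum_i w_i x_i] = \sum_i w_i(\bmu_\calD)_i = \E_{\textbf{Ind}(\bmu_\calD)}[\sum_i w_i x_i]$, which holds by construction since $\textbf{Ind}(\bmu_\calD)$ matches the marginals of $\calD$. I therefore expect the submodularity of the weighted rank function to be the only step warranting a sentence of justification; everything else is an immediate consequence of Lemma~\ref{lem:appendix-dughmi-crs} and Definition~\ref{def:submodular-dominance}.
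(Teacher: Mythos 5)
Your proposal is correct and follows essentially the same route as the paper's proof: both apply the equivalence of Lemma~\ref{lem:appendix-dughmi-crs} in each direction, use the submodular dominance of $\calD$ on the (submodular) weighted rank function, and close the chain via the matching linear statistics $\E_{\calD}[\sum_i w_i x_i] = \E_{\textbf{Ind}(\bmu_\calD)}[\sum_i w_i x_i]$. The only cosmetic difference is that you sketch a threshold-decomposition argument for the submodularity of $\rank_{\bw}$, whereas the paper simply cites \citet{schrijver2003combinatorial}.
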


\begin{proof}[Proof of Lemma~\ref{lem:appendix-submodular-dominance-implies-crs}]
    Fix any weight vector $\bw \in \reals^n_+$. Notice that
    $$
    \E_{\bx \sim \textbf{Ind}(\bmu_\calD)}\left[\sum_{i \in [n]} w_i x_i\right] = 
    \sum_{i \in [n]} w_i \E_{\bx \sim \textbf{Ind}(\bmu_\calD)}[x_i]
    = \sum_{i \in [n]} w_i \E_{\bx \sim \calD}[x_i]
    = \E_{\bx \sim \calD}\left[\sum_{i \in [n]} w_i x_i\right],
    $$
    because $\E_{\bx \sim \calD}[x_i] = \E_{\bx \sim \textbf{Ind}(\bmu_\calD)}[x_i] = (\bmu_\calD)_i$, for
    all $i \in [n]$.

    \vspace{0.5em}
    If the distribution $\textbf{Ind}(\bmu_\calD)$
    has a $\gamma$-selectable CRS, then by Lemma~\ref{lem:appendix-dughmi-crs}(b), we get that
    $$
    \E_{\bx \sim \textbf{Ind}(\bmu_\calD)} [\rank_{\bw}(\bx)] \geq \gamma \cdot \E_{\bx \sim \textbf{Ind}(\bmu_\calD)}\left[\sum_{i \in [n]} w_i x_i\right] = \gamma \cdot \E_{\bx \sim \calD}\left[\sum_{i \in [n]} w_i x_i\right].
    $$
    \vspace{0.5em}
    Since the weighted rank function is submodular for any non-negative vector $\bw$ \citep{schrijver2003combinatorial}, we can use the submodular dominance property of $\calD$
    to get
    $$
    \E_{\bx \sim \calD} [\rank_{\bw}(\bx)] \geq \E_{\bx \sim \textbf{Ind}(\bmu_\calD)} [\rank_{\bw}(\bx)].   
    $$

    Combining the latter inequalities gives us that
    $$
    \E_{\bx \sim \calD} [\rank_{\bw}(\bx)] \geq \gamma \cdot \E_{\bx \sim \calD}\left[\sum_{i \in [n]} w_i x_i\right],
    $$
    which by Lemma~\ref{lem:appendix-dughmi-crs} implies that there exists a $\gamma$-selectable CRS for $\calD$.
\end{proof}

Finally, by combining Lemmas~\ref{lem:appendix-submodular-dominance-implies-crs}, \ref{lem:appendix-crs-chekuri} and \ref{lem:appendix-crs-uniform-k} we get the following corollary, which we will use in our analysis (Appendix~\ref{app:rounding}).

\vspace{1em}

\begin{corollary}\label{cor:appendix-crs-for-sub-dominance}
    Let $\calM = ([n], \calI)$ be a matroid and $\calD \in \Delta(2^{[n]})$ be a distribution over subsets
    of $[n]$ for which the following hold: (1) $\bmu_\calD \in \calP(\calM)$,
    (2) for every submodular function $f$, $\E_{\bx \sim \calD}[f(\bx)] \geq \E_{\bx \sim \textbf{Ind}(\bmu_\calD)}[f(\bx)]$.
    Then, there exists a $c_\calM$-selectable CRS for $\calD$, where $c_\calM = 1-e^{-k}k^k/k!$ if $\calM$
    is a uniform matroid of rank $k$ and $c_\calM = 1-1/e$ otherwise.
\end{corollary}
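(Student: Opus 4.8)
The plan is to obtain this corollary as a direct composition of the three preceding lemmas, with no new construction or inequality required. First, I would observe that hypothesis (2) is verbatim the submodular dominance property of Definition~\ref{def:submodular-dominance} for the distribution $\calD$: it states exactly that $\E_{\bx \sim \calD}[f(\bx)] \geq \E_{\bx \sim \textbf{Ind}(\bmu_\calD)}[f(\bx)]$ for every submodular $f$. So $\calD$ is submodular-dominant, which is the hypothesis needed to invoke Lemma~\ref{lem:appendix-submodular-dominance-implies-crs}.

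Second, I would use hypothesis (1), namely $\bmu_\calD \in \calP(\calM)$, to produce an off-the-shelf CRS for the \emph{product} distribution $\textbf{Ind}(\bmu_\calD)$. In the general case, Lemma~\ref{lem:appendix-crs-chekuri} applied with $\bx = \bmu_\calD$ gives a $(1-1/e)$-selectable CRS for $\textbf{Ind}(\bmu_\calD)$; in the special case where $\calM$ is the uniform matroid of rank $k$, Lemma~\ref{lem:appendix-crs-uniform-k} applied with $\bx = \bmu_\calD$ gives instead the stronger $(1 - e^{-k}k^k/k!)$-selectable CRS. In either case this yields a $c_\calM$-selectable CRS for $\textbf{Ind}(\bmu_\calD)$ with $c_\calM$ exactly as defined in the statement.

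Third, I would feed these two facts — that $\calD$ is submodular-dominant and that $\textbf{Ind}(\bmu_\calD)$ admits a $c_\calM$-selectable CRS — into Lemma~\ref{lem:appendix-submodular-dominance-implies-crs} with $\gamma = c_\calM$, which transfers the CRS from the product distribution to $\calD$ itself and produces the desired $c_\calM$-selectable CRS for $\calD$. The only point requiring any care — and it is bookkeeping rather than a genuine obstacle — is to check that the $\bmu_\calD$ appearing in hypotheses (1) and (2) is indeed the mean $\E_{\bx\sim\calD}[\bx]$ used in the statements of the three lemmas, and to organize the two-way case split on whether $\calM$ is a uniform matroid so that the correct constant is carried through. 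There is no hard step here: the corollary is simply the conjunction of results already established.
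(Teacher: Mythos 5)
Your proposal is correct and matches the paper's own (implicit) argument exactly: the paper derives this corollary by the same composition of Lemma~\ref{lem:appendix-crs-chekuri} (or Lemma~\ref{lem:appendix-crs-uniform-k} in the uniform-matroid case) with Lemma~\ref{lem:appendix-submodular-dominance-implies-crs}, using hypothesis (1) to obtain the product-distribution CRS and hypothesis (2) as the submodular dominance needed for the transfer. Nothing is missing.
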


\subsection{CRS-Based Roundings of \WTP~Functions}

We now state some useful lemmas for CRS-based rounding processes that are used to convert fractional
solutions into integral ones; we use this to provide a bound on the expectation of a \WTP function w.r.t.~its concave relaxation.

First, we state a lemma by \citet{chekuri_vondrak_crs}, which proves that applying a $\gamma$-selectable CRS to a random set $\bx \sim \calD$ produces a feasible solution whose expected value, under any monotone submodular function $f$, is at least a $\gamma$ fraction of the value of $f$ under the independent distribution with the same marginals as $\calD$. 

\vspace{1em}
\begin{lemma}[\citet{chekuri_vondrak_crs}]\label{lem:appendix-crs-approx-chekuri}
Let $f$ be a monotone and non-negative submodular function over $2^{[n]}$ and $\phi$ be a $\gamma$-selectable CRS for a distribution $\calD$. Then,
$$
\E_{\bx \sim \calD, \phi}[f(\phi(\bx))] \geq \gamma \cdot \E_{\bx \sim \textbf{Ind}(\bmu_\calD)}[f(\bx)].
$$
\end{lemma}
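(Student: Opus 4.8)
The plan is to reproduce, in our notation, the classical argument of \citet{chekuri_vondrak_crs} showing that a selectable CRS preserves a $\gamma$-fraction of the value of any monotone submodular objective. First I would normalize: replacing $f$ by $f-f(\emptyset)$ changes the left-hand side by $+f(\emptyset)$ and the right-hand side by $+\gamma f(\emptyset)$, and since $f(\emptyset)\ge 0$ and $\gamma\le 1$ it suffices to prove the inequality for the normalized objective, so assume $f(\emptyset)=0$. The only structural property of $\phi$ I would use beyond $\gamma$-selectability is that it is \emph{monotone}: conditioned on an element $i$ being present in the input, its survival probability does not increase when further elements are added to the input. This holds for the concrete schemes behind Lemmas~\ref{lem:appendix-crs-chekuri} and~\ref{lem:appendix-crs-uniform-k} and, via Lemma~\ref{lem:appendix-dughmi-crs}, may be assumed for any $\gamma$-selectable scheme.

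Second, fix an arbitrary order $1,\dots,n$ on $[n]$ and write $U_{<i}=U\cap\{1,\dots,i-1\}$ for $U\subseteq[n]$. Let $\bx\sim\calD$, identify $\bx$ with the set $R\subseteq[n]$ it indicates, and set $S=\phi(R)$. Telescoping $f$ along the order and using $f(\emptyset)=0$ gives $f(S)=\sum_{i=1}^n \indic\{i\in S\}\bigl(f(S_{<i}\cup\{i\})-f(S_{<i})\bigr)$. Since $S\subseteq R$ we have $S_{<i}\subseteq R_{<i}$ (and $i\notin R_{<i}$), so submodularity (a smaller set has a larger marginal) followed by monotonicity of $f$ gives $f(S_{<i}\cup\{i\})-f(S_{<i})\ge f(R_{<i}\cup\{i\})-f(R_{<i})\ge 0$; using that $i\in S$ forces $i\in R$ and taking expectations,
\[
\E_{\bx\sim\calD,\phi}\bigl[f(S)\bigr]\ \ge\ \sum_{i=1}^n \E_{\bx\sim\calD}\Bigl[\,\Pr_\phi[\,i\in S\mid R\,]\,\cdot\,\indic\{i\in R\}\bigl(f(R_{<i}\cup\{i\})-f(R_{<i})\bigr)\Bigr].
\]
For each $i$, condition on $i\in R$. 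Then $\Pr_\phi[i\in S\mid R]$ is nonincreasing in $R$ (monotonicity of $\phi$) and $f(R_{<i}\cup\{i\})-f(R_{<i})$ is nonincreasing in $R_{<i}$ (submodularity); when $\calD=\textbf{Ind}(\bmu_\calD)$ the remaining coordinates of $R$ are still independent, so these two nonincreasing functions are positively associated, and an FKG/Harris-type correlation inequality together with $\Pr_\phi[i\in S\mid i\in R]\ge\gamma$ shows that the $i$-th summand is at least $\gamma\,\E_{\bx\sim\textbf{Ind}(\bmu_\calD)}\bigl[\indic\{i\in R\}\bigl(f(R_{<i}\cup\{i\})-f(R_{<i})\bigr)\bigr]$.

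Third, I would re-telescope: for $\calD=\textbf{Ind}(\bmu_\calD)$ the $i$-th term above equals $\E[f(R_{\le i})]-\E[f(R_{\le i-1})]$ with $R_{\le i}=R\cap\{1,\dots,i\}$, so the sum collapses to $\E_{\bx\sim\textbf{Ind}(\bmu_\calD)}[f(R)]-f(\emptyset)=\E_{\bx\sim\textbf{Ind}(\bmu_\calD)}[f(\bx)]$, which is precisely the stated right-hand side. This settles the claim whenever the scheme is run on a product distribution; for the distributions arising in Appendix~\ref{app:rounding} (which satisfy submodular dominance, and whose CRSs are — by Lemma~\ref{lem:appendix-submodular-dominance-implies-crs} and Corollary~\ref{cor:appendix-crs-for-sub-dominance} — obtained by lifting schemes for $\textbf{Ind}(\bmu_\calD)$) the bound then follows by applying the above to $\textbf{Ind}(\bmu_\calD)$.

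The step I expect to be the main obstacle is the correlation inequality in the second paragraph: it is exactly there that one needs monotonicity of the scheme — bare $\gamma$-selectability of an arbitrary CRS on a non-product $\calD$ does not suffice (an identity-type scheme on a positively correlated $\calD$ shows that marginal survival probabilities alone cannot lower-bound $\E_\calD[f(\phi(\cdot))]$ against $\E_{\textbf{Ind}(\bmu_\calD)}[f]$) — together with the product structure of $\textbf{Ind}(\bmu_\calD)$ that makes the two monotone functions of the input positively associated. Verifying that the schemes of \citet{chekuri_vondrak_crs}, \citet{uniform_crs} and \citet{dughmi2020} are monotone, and that monotonicity is preserved by the lift of Lemma~\ref{lem:appendix-submodular-dominance-implies-crs}, is the routine bookkeeping that completes the argument.
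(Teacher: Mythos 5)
The paper does not actually prove this lemma---it is imported as a black box from \citet{chekuri_vondrak_crs}---so there is no in-paper argument to compare against; I can only assess your reconstruction on its own terms. For the case $\calD=\textbf{Ind}(\bmu_\calD)$ your argument (telescope along a fixed order, replace marginals with respect to $S_{<i}$ by marginals with respect to $R_{<i}$ via submodularity, then decouple the survival probability from the marginal via an FKG/Harris inequality on the product measure, and re-telescope) is exactly the classical proof and is correct, \emph{provided} the scheme is monotone. That proviso is genuinely needed---it is a hypothesis in \citet{chekuri_vondrak_crs}---but your parenthetical claim that monotonicity ``may be assumed for any $\gamma$-selectable scheme via Lemma~\ref{lem:appendix-dughmi-crs}'' is unjustified: that lemma is a pure LP-duality existence statement and gives no way to monotonize a given selectable scheme.

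The substantive gap is the one you flag yourself, and it is real: read literally, for arbitrary $\calD$ and arbitrary $\gamma$-selectable $\phi$ the inequality is false. Take $n=2$ with the free matroid, $f(S)=\min\{|S|,1\}$, let $\calD$ put mass $1/2$ on $\{1,2\}$ and $1/2$ on $\emptyset$, and let $\phi$ be the identity; then $\phi$ is $1$-selectable, yet $\E_{\bx\sim\calD}[f(\phi(\bx))]=1/2<3/4=\E_{\bx\sim\textbf{Ind}(\bmu_\calD)}[f(\bx)]$. So some hypothesis beyond selectability (monotonicity of $\phi$ together with a structural property of $\calD$ such as submodular dominance, or a concrete construction of $\phi$) is indispensable, and your closing ``lifting'' remark does not supply it: the CRS invoked downstream in Lemma~\ref{lem:appendix-phi-is-selectable-crs-} exists only abstractly via Lemma~\ref{lem:appendix-dughmi-crs} and Corollary~\ref{cor:appendix-crs-for-sub-dominance}; it is not obtained by running a product-distribution scheme on samples from $\calD$, so you cannot ``apply the above to $\textbf{Ind}(\bmu_\calD)$.'' Since Lemma~\ref{lem:approximation-crs-wtp} applies the present lemma to exactly such non-product distributions, closing this step requires either exhibiting a monotone $c_\calM$-selectable scheme for those distributions or proving the inequality directly for the specific $\phi$ used; your proof, as written, establishes only the product-distribution, monotone-scheme case.
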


The following result, due to \citet{salem2024online}, shows that under a mild negative correlation condition, the expected value of a {\WTP} function under a distribution $\calD$ is well-approximated by its concave relaxation (as defined in Eq.~\eqref{eq: WTP relaxation}), evaluated at its expected vector $\bmu_\calD$.

\begin{lemma}[\citet{salem2024online}]\label{lem:appendix-wtp-approx-salem}
    Let $f$ be a \WTP\ function and $\widetilde{f}$ be its concave relaxation as defined in Eq.~\eqref{eq: WTP relaxation}. Let also $\calD \in \Delta(2^{[n]})$ be a distribution over subsets of $[n]$, such that
    for any subset $S \subseteq [n]$, $\E_{\bx \sim \calD}\left[\prod_{i \in S} x_i \right] \leq \prod_{i \in S} \E_{\bx \sim \calD}[x_i]$. Then,
    $$
    \E_{\bx \sim \calD}[f(\bx)] \geq \left(1-\frac{1}{e}\right) \widetilde{f}(\bmu_\calD).
    $$
\end{lemma}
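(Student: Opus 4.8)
The plan is to prove the inequality term by term and reduce each term to an elementary one‑dimensional estimate, isolating the single place where the negative‑correlation hypothesis is actually used.

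\emph{Reduction to one term, and normalization.} Write $f(\by)=\sum_{j\in\calC}c_j\min\{b_j,\by\cdot\bw_j\}$ and $\widetilde f(\by)=\sum_{j\in\calC}c_j\min\{b_j,\by\cdot\bw_j\}$ with $c_j>0$. Both sides of the claim are linear in $(c_j)_j$ with positive coefficients, so it suffices to prove, for a single term, that $\E_{\bx\sim\calD}[\min\{b,\bx\cdot\bw\}]\ge(1-\tfrac1e)\min\{b,\bmu_\calD\cdot\bw\}$. If $b=\infty$ the term is linear and this holds with constant $1$, since $\E_\calD[\bx\cdot\bw]=\bmu_\calD\cdot\bw$. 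For finite $b$, divide through by $b$ so that we may assume $b=1$ and $\bw\in[0,1]^n$ (this is the normalization $w_{j,i}\le b_j$ under which $\widetilde f$ is the relevant upper model; all the \WTP\ instances of Appendix~\ref{app: applications} satisfy it, and it is essential — see below). Putting $\theta:=\bmu_\calD\cdot\bw$, the target is $\E_{\bx\sim\calD}[\min\{1,\bx\cdot\bw\}]\ge(1-\tfrac1e)\min\{1,\theta\}$.

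\emph{The core chain.} For any $\bx\in\{0,1\}^n$ and weights $w_i\in[0,1]$, the Weierstrass product inequality $\prod_i(1-w_ix_i)\ge 1-\sum_i w_ix_i$ together with $\prod_i(1-w_ix_i)\ge 0$ gives the pointwise bound
\[
\min\{1,\bx\cdot\bw\}\ \ge\ 1-\prod_{i=1}^n(1-w_ix_i).
\]
Taking $\E_{\bx\sim\calD}$, it therefore suffices to show $\E_{\bx\sim\calD}\big[\prod_i(1-w_ix_i)\big]\le\prod_i(1-w_i\mu_i)$, because then, using $1-t\le e^{-t}$,
\[
\E_{\bx\sim\calD}[\min\{1,\bx\cdot\bw\}]\ \ge\ 1-\prod_{i=1}^n(1-w_i\mu_i)\ \ge\ 1-e^{-\sum_i w_i\mu_i}\ =\ 1-e^{-\theta}.
\]
Finally, $1-e^{-\theta}\ge(1-\tfrac1e)\min\{1,\theta\}$ for every $\theta\ge 0$: for $\theta\ge 1$ the left side is $\ge 1-\tfrac1e$, and for $\theta\in[0,1]$ the left side is concave and agrees with the line $(1-\tfrac1e)\theta$ at $\theta=0$ and $\theta=1$, hence lies above it on $[0,1]$. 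Undoing the rescaling by $b$ and summing the weighted terms yields the lemma.

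\textbf{Main obstacle: the moment inequality.} The step $\E_{\bx\sim\calD}\big[\prod_i(1-w_ix_i)\big]\le\prod_i(1-w_i\mu_i)$ is the heart of the argument. Expanding, $\E_{\bx\sim\calD}\big[\prod_i(1-w_ix_i)\big]=\sum_{S\subseteq[n]}(-1)^{|S|}\big(\prod_{i\in S}w_i\big)\E_{\bx\sim\calD}\big[\prod_{i\in S}x_i\big]$, and similarly $\prod_i(1-w_i\mu_i)=\sum_S(-1)^{|S|}\big(\prod_{i\in S}w_i\big)\prod_{i\in S}\mu_i$; the hypothesis $\E_\calD[\prod_{i\in S}x_i]\le\prod_{i\in S}\mu_i$ controls the even‑$|S|$ terms in the favorable direction but has the wrong sign for the odd‑$|S|$ terms, so a naive term‑by‑term comparison does not close. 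In the way this lemma is actually invoked — in the proof of Lemma~\ref{lem:alg-rounding-c-approx}, after the contention‑resolution bound of Lemma~\ref{lem:appendix-crs-approx-chekuri} has reduced matters to the product distribution $\calD=\textbf{Ind}(\bmu_\calD)$ — the inequality is an exact identity and the chain above closes immediately; so I would establish the product‑distribution case first. For a general negatively‑correlated $\calD$ a more delicate argument is needed: either an explicit coupling of $\calD$ with $\textbf{Ind}(\bmu_\calD)$ that exploits that $1-\prod_i(1-w_ix_i)$ is a weighted coverage (hence monotone submodular) function, or a direct analysis of $\min\{1,\bx\cdot\bw\}$ as a function of $\bx\cdot\bw$. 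Pinning this down, together with the $w_{j,i}\le b_j$ normalization (without which $\widetilde f$ can strictly exceed the concave closure and the bound can fail, e.g. $n=1$, $b=1$, $w_1$ large, $\calD=\mathrm{Ber}(p)$), is where I expect the real work to lie.
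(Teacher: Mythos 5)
The paper never proves this lemma --- it is imported verbatim from \citet{salem2024online} and used as a black box --- so your proposal can only be judged on its own merits. The positive part of your argument is correct and is the classical Ageev--Sviridenko/Goemans--Williamson chain: linearity over the terms $j\in\calC$, the pointwise bound $\min\{1,\bx\cdot\bw\}\ge 1-\prod_i(1-w_ix_i)$ for $\bw\in[0,1]^n$, the identity $\E_{\bx\sim\textbf{Ind}(\bmu_\calD)}[\prod_i(1-w_ix_i)]=\prod_i(1-w_i\mu_i)$, and the elementary estimates $1-\prod_i(1-w_i\mu_i)\ge 1-e^{-\theta}\ge(1-1/e)\min\{1,\theta\}$. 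This completely proves the lemma for product distributions, which is the only case in which the paper ever invokes it (Lemma~\ref{lem:approximation-crs-wtp} applies it to $\textbf{Ind}(\bmu_\calD)$, explicitly noting that the moment condition holds there with equality).

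Both obstacles you flag are genuine, and the second is fatal to the statement as written. First, the normalization $w_{ji}\le b_j$ is indeed needed: your one-dimensional example ($b=1$, $w_1=W\gg 1$, $\calD=\Ber(1/W)$) violates the conclusion even though $\calD$ is a product distribution, because $\widetilde f$ depends on the \WTP\ \emph{representation} and not only on the set function $f$; one must first truncate $w_{ji}$ to $\min\{w_{ji},b_j\}$, which leaves $f$ unchanged on $\{0,1\}^n$ but shrinks $\widetilde f$ at fractional points. Second, the hypothesis $\E_\calD[\prod_{i\in S}x_i]\le\prod_{i\in S}\mu_i$ is too weak not only for the step $\E_\calD[\prod_i(1-w_ix_i)]\le\prod_i(1-w_i\mu_i)$ but for the lemma itself: take $n=10$ and let $\calD$ output $\mathbf{0}$ with probability $4/9$ and the indicator vector of a uniformly random $2$-subset otherwise. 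Then $\mu_i=1/9$, $\E[x_ix_j]=\frac{5}{9}\cdot\frac{1}{45}=\frac{1}{81}=\mu_i\mu_j$, and all moments of order $\ge 3$ vanish, so the hypothesis holds; yet for the coverage function $f(\bx)=\min\{1,\mathbf{1}\cdot\bx\}$ we get $\E_\calD[f]=\Pr[\bx\ne\mathbf{0}]=5/9\approx 0.556$, while $(1-1/e)\,\widetilde f(\bmu_\calD)=(1-1/e)\min\{1,10/9\}\approx 0.632$. So the quoted statement admits no proof; the hypothesis must be strengthened, e.g.\ to $\calD$ being a product distribution, or negatively associated, or satisfying submodular dominance over $\textbf{Ind}(\bmu_\calD)$ --- any of which closes your chain, since $1-\prod_i(1-w_ix_i)$ is monotone submodular. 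None of this affects the paper's downstream results, which only use the product case, but the lemma should be restated with the correct hypotheses.
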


Finally, we combine Lemma~\ref{lem:appendix-crs-approx-chekuri} and Lemma~\ref{lem:appendix-wtp-approx-salem} to derive the following approximation guarantee for the expected value of a {\WTP} function after applying a $\gamma$-selectable CRS.

\begin{lemma}\label{lem:approximation-crs-wtp}
    Let $f$ be a \WTP\ function and $\widetilde{f}$ be its concave relaxation as defined in Eq.~\eqref{eq: WTP relaxation}. Let also $\phi$ be a $\gamma$-selectable CRS for a distribution $\calD \in \Delta(2^{[n]})$.
    Then,
    $$
    \E_{\bx \sim \calD, \phi}[f(\phi(\bx))] \geq \gamma \cdot \left(1-\frac{1}{e}\right) \widetilde{f}(\bmu_\calD).
    $$
\end{lemma}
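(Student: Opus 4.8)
The plan is to simply chain the two approximation lemmas already established, Lemma~\ref{lem:appendix-crs-approx-chekuri} and Lemma~\ref{lem:appendix-wtp-approx-salem}, with the independent distribution $\textbf{Ind}(\bmu_\calD)$ serving as the bridge. First I would recall that a \WTP\ function $f$ of the form \eqref{eq: WTP} is monotone, non-negative, and submodular (with $c_j > 0$, $b_j \ge 0$, $\bw_j \ge \bzero$), so it meets the hypotheses of Lemma~\ref{lem:appendix-crs-approx-chekuri}. Applying that lemma to $f$, $\calD$, and the $\gamma$-selectable CRS $\phi$ yields
\[
\E_{\bx \sim \calD, \phi}[f(\phi(\bx))] \;\geq\; \gamma \cdot \E_{\bx \sim \textbf{Ind}(\bmu_\calD)}[f(\bx)].
\]

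Next I would bound the right-hand expectation using Lemma~\ref{lem:appendix-wtp-approx-salem}, instantiated with the distribution $\calD' \defeq \textbf{Ind}(\bmu_\calD)$. The negative-correlation hypothesis of that lemma requires $\E_{\bx \sim \calD'}[\prod_{i \in S} x_i] \leq \prod_{i \in S} \E_{\bx \sim \calD'}[x_i]$ for every $S \subseteq [n]$; for a product distribution this holds with equality, since the coordinates are independent. Moreover $\bmu_{\calD'} = \bmu_\calD$ because $\textbf{Ind}(\bmu_\calD)$ has, by construction, marginal $(\bmu_\calD)_i$ in coordinate $i$. Hence Lemma~\ref{lem:appendix-wtp-approx-salem} gives $\E_{\bx \sim \textbf{Ind}(\bmu_\calD)}[f(\bx)] \geq (1 - 1/e)\,\widetilde f(\bmu_\calD)$.

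Combining the two displayed inequalities gives $\E_{\bx \sim \calD, \phi}[f(\phi(\bx))] \geq \gamma(1-1/e)\widetilde f(\bmu_\calD)$, as claimed. There is no real obstacle here: the lemma is a short corollary of the two prior results, and the only points needing a remark are that \WTP\ functions satisfy the monotone/submodular prerequisites and that a product distribution trivially satisfies the negative-correlation condition while preserving the marginals. The mild subtlety worth double-checking is that the expectation in Lemma~\ref{lem:appendix-crs-approx-chekuri} is taken jointly over $\bx \sim \calD$ and the internal randomness of $\phi$, which matches the left-hand side of the statement we want.
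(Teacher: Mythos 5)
Your proposal is correct and follows essentially the same route as the paper: both chain Lemma~\ref{lem:appendix-crs-approx-chekuri} with Lemma~\ref{lem:appendix-wtp-approx-salem} via the product distribution $\textbf{Ind}(\bmu_\calD)$, observing that independence makes the negative-correlation hypothesis hold (with equality) and preserves the marginals. The extra remarks you include (that \WTP{} functions are monotone, non-negative, and submodular, and that the expectation is joint over $\calD$ and $\phi$) are accurate and only make the argument more explicit than the paper's version.
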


\begin{proof}[Proof of Lemma~\ref{lem:approximation-crs-wtp}]
    Notice that the distribution $\textbf{Ind}(\bmu_\calD)$ satisfies the properties of 
    Lemma~\ref{lem:appendix-wtp-approx-salem}, since for every subset $S \subseteq [n]$ 
    we have that $\E_{\bx \sim \textbf{Ind} (\bmu_\calD)}\left[\prod_{i \in S} x_i \right] = \prod_{i \in S} 
    \E_{\bx \sim \textbf{Ind}(\bmu_\calD)}[x_i]$. Therefore, from Lemma~\ref{lem:appendix-wtp-approx-salem}
    we get
    $$
     \E_{\bx \sim \textbf{Ind}(\bmu_\calD)}[f(\bx)] \geq \left(1-\frac{1}{e}\right) \widetilde{f}(\bmu_\calD).
    $$

    The proof is completed by combining the latter inequality with Lemma~\ref{lem:appendix-crs-approx-chekuri}:
    \begin{align*}
    \E_{\bx \sim \calD, \phi}[f(\phi(\bx))] &\geq \gamma \cdot \E_{\bx \sim \textbf{Ind}(\bmu_\calD)}[f(\bx)]
    \geq \gamma \cdot \left(1-\frac{1}{e}\right)\widetilde{f}(\bmu_\calD). \qedhere
    \end{align*}
\end{proof}
\section{Proof of Lemma~\ref{lem:alg-rounding-c-approx}}\label{app:rounding}

In this section, we prove Lemma~\ref{lem:alg-rounding-c-approx}, which shows that the expected reward 
of the rounded ground set, obtained by applying Randomized Pipage Rounding to a fractional ground set, 
approximately preserves the value of the initial fractional ground set. Specifically, we aim to show that 
for any {\WTP} function $f$ and any fractional ground set $\widetilde{x} \in \widetilde{X}_\ell$, the expected reward of the rounded ground set  $x = \Xi(\tilde{x})$ satisfies
\[
\mathbb{E}_\Xi[F(\Xi(\tilde{x}))] \geq c_\mathcal{M}(1 - 1/e) \cdot \widetilde{F}(\tilde{x}),
\]
where $F(\bx) = \max_{\by \in \calY_\calM(\bx)} f(\by)$ is the reward of the rounded ground set (Eq.~\eqref{eq: reward}) and $\widetilde{F}(\widetilde{\bx}) = \max_{\widetilde{\by} \in \tildecalY_\calM(\widetilde{\bx})} \widetilde{f}(\widetilde{\by})$ is the concave
relaxation of the reward function (Eq.~\eqref{eq:fractional-reward}). The constant $c_\mathcal{M}$ depends on the structure of the second-stage matroid.

As a reminder, for a ground set $\bx \in \calX_\ell$, the set of feasible second stage solutions, 
is the restriction of the matroid $\calM$ to the set $\bx$, i.e. $\calY_\calM(\bx) = \{\by \in \{0,1\}^n: \by \in \calM,\ \by \leq \bx \}$. In the same manner, for a fractional ground set $\widetilde{\bx} \in \tildecalX_\ell$, the corresponding fractional relaxation of the
second-stage feasible solutions is $\tildecalY_\calM(\widetilde{\bx}) = \{\widetilde{\by} \in [0,1]^n: \widetilde{\by}\in \calP(\calM), \widetilde{\by} \leq \widetilde{\bx} \}$.

\spara{High-level description.} On a high-level, we need to show that the value attained by
maximizing a {\WTP} function over the fractional ground set remains approximately the same when we
transition to the rounded ground set. To establish this, we design a two-stage randomized rounding procedure, used purely for analysis, which transforms any fractional second-stage solution into a feasible integral one that has, in expectation, approximately the same value. 
This construction is coupled with the Randomized Pipage Rounding used by {\RAOCO} to select the integral ground set. The rounding is performed in two steps: First, a fractional second-stage solution is mapped to an intermediate integral vector that lies within the selected ground set but may violate the matroid constraints. Then, we apply a contention resolution scheme (CRS) to trim the intermediate vector to a feasible solution (for more details on CRSs see Appendix~\ref{appendix:crs}). A key challenge is that standard CRSs typically need each element of the input vector to have been included in it independently of other elements. This assumption does not hold in our case since the intermediate vector depends on the outcome of the Randomized Pipage Rounding which introduces
dependencies across elements. However, we show that the intermediate vector exhibits
the property of submodular dominance (Def.~\eqref{def:submodular-dominance}) which allows us to 
use state-of-the art CRS guarantees (Corollary~\ref{cor:appendix-crs-for-sub-dominance}). 
We now formalize this construction and use it to prove Lemma~\ref{lem:alg-rounding-c-approx}.

\paragraph{Rounding Description.} Fix any $\widetilde{\bx} \in \tildecalX_\ell$ and $\bx \in \calX_\ell$. 
Let also $\widetilde{\by} \in \tildecalY_\calM(\widetilde{\bx})$ be a second stage solution that is consistent with $\widetilde{\bx}$.
To map $\widetilde{\by} \in \calY_\calM(\widetilde{\bx})$ to a solution $\by \in \calY_\calM(\bx)$,
we first round $\widetilde{\by}$ to an intermediate solution $\hat{\by}$, through a randomized
rounding $Q$. The intermediate solution $\hat{\by}$ is always feasible for the new groundset $\bx$, i.e.
$\hat{\by} \leq \bx$, but it might violate the matroid constraints. For this reason, we pass it through
a contention resolution scheme $\phi$ to trim it down to a solution $\by = \phi(\hat{\by}) \leq \hat{\by} \leq \bx$ that also respects the matroid constraints. On a high level, the rounding procedure we described can be seen
in the following diagram.

\[
\dbox{$\widetilde{\by} \in \tildecalY_\calM(\tilde{\bx})$} \quad \xrightarrow[\substack{\text{uses}\\\widetilde{\bx}, \bx}]{Q} \quad \hat{\by} \quad \xrightarrow{\phi} \quad \dbox{$\by \in \calY_\calM(\bx)$}
\]

Formally, the rounding $Q$ accepts $\widetilde{\bx}$, $\bx$ and $\widetilde{\by} \in 
\tildecalY_\calM(\widetilde{\bx})$ and produces $\hat{\by} = Q(\widetilde{\by}, \widetilde{\bx}, \bx)$.
For $i \in [n]$, let $a_i$ be
\[
a_i = 
\begin{cases}
    \frac{\widetilde{\by}_i}{\widetilde{\bx}_i},& \quad\text{if } \widetilde{\bx}_i \neq 0,\\
    0,&\quad \text{otherwise}.
\end{cases}
\]
The rounding $Q$ is defined follows: first, it samples $n$ independent Bernoulli random variables, $c_1, \dots, c_n$, with $c_i \sim \Ber(a_i)$. Then, it sets the $i$-th coordinate of the resulting vector $\hat{\by}$ to be
\[
\hat{\by}_i =
\begin{cases}
    1, &\quad \text{if }\bx_i = 1 \text{ and } c_i = 1,\\
    0,&\quad \text{otherwise}.
\end{cases}
\]

In other words, $\hat{\by} = \bx \wedge \bm{c}$. By construction, $\hat{\by}$ only uses elements that are in the groundset $\bx$, i.e. $\hat{y} \leq \bx$.
However, it does not necessarily satisfy the matroid constraints. For this reason, 
we pass $\hat{\by}$ through a contention resolution scheme $\phi$ for the matroid $\calM$,
to obtain $\by = \phi(\hat{\by})$. We do not instantiate the scheme $\phi$ yet, but we 
remark that using any contention resolution scheme for the matroid $\calM$, 
will result in a set $\by$ that is a feasible second-stage solution for
the rounded ground set, i.e. $\by \in \calY_\calM(\bx)$. 
We state this formally in the following lemma.

\vspace{1em}

\begin{lemma}\label{lem:appendix-y-always-feasible}
    Fix any $\widetilde{\bx} \in \tildecalX_\ell$, any $\bx \in \calX_\ell$ and any $\widetilde{\by} \in 
\tildecalY_\calM(\widetilde{\bx})$. Let $\hat{\by} = Q(\widetilde{\by}, \widetilde{\bx}, \bx)$
and $\by = \phi(\hat{\by})$, where $\phi$ is a contention resolution scheme for the matroid $\calM$.
Then, $\by \in \calY_\calM(\bx)$.
\end{lemma}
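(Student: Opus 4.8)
The statement unpacks into two coordinate-level claims about $\by$: that $\by \in \calM$ (i.e.\ $\by$ is an independent set of the matroid) and that $\by \le \bx$ (coordinatewise). Together these are exactly the definition of membership in $\calY_\calM(\bx)=\{\by\in\{0,1\}^n:\by\in\calM,\ \by\le\bx\}$, so it suffices to verify each in turn by tracing $\by$ back through the two steps of the analysis rounding.

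First I would invoke Definition~\ref{def:CRS}: since $\phi$ is a contention resolution scheme for $\calM$, applied to \emph{any} binary vector $\hat\by\in\{0,1\}^n$ it returns $\by=\phi(\hat\by)$ with $\by\in\calI$ and $\by\le\hat\by$. The first property already gives $\by\in\calM$ with no further work; it only remains to control $\hat\by$.

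Second, from the construction of $Q$ we have $\hat\by=\bx\wedge\bm c$ with $\bm c\in\{0,1\}^n$ (the $c_i\sim\Ber(a_i)$); in particular $\hat\by\in\{0,1\}^n$ so the CRS guarantee above indeed applies, and $\hat\by_i=1$ forces $\bx_i=1$, i.e.\ $\hat\by\le\bx$. Chaining $\by\le\hat\by\le\bx$ gives $\by\le\bx$, and combining with $\by\in\calM$ yields $\by\in\calY_\calM(\bx)$, as claimed.

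There is essentially no obstacle here: the lemma is a well-definedness check for the two-stage analysis rounding, and the argument is a one-line definitional chase once Definition~\ref{def:CRS} and the identity $\hat\by=\bx\wedge\bm c$ are in hand. The only point worth flagging is that the CRS feasibility guarantee is stated for all inputs in $\{0,1\}^n$, so it applies to $\hat\by$ directly, without any appeal to selectability or the submodular-dominance machinery developed elsewhere in this appendix (those are needed only for the \emph{value} bound, not for feasibility).
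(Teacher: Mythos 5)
Your proof is correct and follows essentially the same argument as the paper's: from the construction of $Q$ one gets $\hat{\by} \leq \bx$, and Definition~\ref{def:CRS} gives $\by = \phi(\hat{\by}) \leq \hat{\by}$ together with $\by \in \calI$, which is exactly membership in $\calY_\calM(\bx)$. Your closing remark that selectability and submodular dominance are irrelevant to feasibility is also accurate.
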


\begin{proof}[Proof of Lemma~\ref{lem:appendix-y-always-feasible}]
    Recall that $\calY_\calM(\bx)$ contains all the independent sets of $\calM$ that are also
    subsets of $\bx$. By definition of the rounding $Q$, if $\bx_i = 0$ for some coordinate $i \in [n]$,
    then $\hat{\by}_i = 0$. This means that $\hat{\by}$ is a subset of $\bx$, i.e. $\hat{\by} \leq \bx$.
    By the definition of contention resolution schemes (Def.~\ref{def:CRS}), we know that
    $\by = \phi(\hat{\by}) \leq \hat{\by} \leq \bx$ and $\by = \phi(\hat{\by}) \in \calI$, where $\calI$
    is the family of independent sets of the matroid $\calM$. Therefore, $\by \in \calY_\calM(\bx)$.
\end{proof}

\vspace{1em}

Knowing that the resulting vector is always feasible allows us to link the expected
value of the transformed solution with the reward of the rounded groundset $\bx$.

\centerline{}

\begin{lemma}\label{lem:appendix-ground-set-value-at-least-expected-rounded-solution}
    Fix any \WTP\ function $f$ and its correspoding reward function $F$, as defined
    in Eq.~\eqref{eq: reward}. Then, for any $\widetilde{\bx} \in \tildecalX_\ell$ and any $\widetilde{\by}
    \in \tildecalY_\calM(\widetilde{\bx})$ it holds that
    $$
    \E_{\bx \sim \Xi(\widetilde{\bx})} \left[ F(\bx) \right] \geq
    \E_{\bx \sim \Xi(\widetilde{\bx})}\left[\E_{\by \sim \phi(Q(\widetilde{\by},\widetilde{\bx}, \bx))}\left[f(\by) \right] \right].
    $$
\end{lemma}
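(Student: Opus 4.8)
The statement is essentially an immediate consequence of Lemma~\ref{lem:appendix-y-always-feasible} together with the definition of $F$ as a pointwise maximum; no heavy machinery is needed. The plan is to argue pointwise over the randomness and then take nested expectations.

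First I would fix an arbitrary realization $\bx$ of the Randomized Pipage Rounding $\Xi(\widetilde{\bx})$, i.e.\ a fixed $\bx \in \calX_\ell$ in the support of $\Xi(\widetilde{\bx})$. Conditioned on this $\bx$, consider any realization $\by = \phi(Q(\widetilde{\by}, \widetilde{\bx}, \bx))$ of the analysis rounding (over the internal randomness of $Q$ and $\phi$). By Lemma~\ref{lem:appendix-y-always-feasible}, $\by \in \calY_\calM(\bx)$, so $\by$ is a feasible competitor in the maximization defining $F(\bx)$; hence $f(\by) \le \max_{\by' \in \calY_\calM(\bx)} f(\by') = F(\bx)$. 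This inequality holds for every outcome of $Q$ and $\phi$, so taking expectation over $\by \sim \phi(Q(\widetilde{\by},\widetilde{\bx}, \bx))$ (with $\bx$ still fixed) gives
\[
\E_{\by \sim \phi(Q(\widetilde{\by},\widetilde{\bx}, \bx))}\left[ f(\by) \right] \;\le\; F(\bx).
\]

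Next I would take expectation of both sides over $\bx \sim \Xi(\widetilde{\bx})$; since expectation is monotone, this yields exactly the claimed inequality
\[
\E_{\bx \sim \Xi(\widetilde{\bx})}\!\left[\E_{\by \sim \phi(Q(\widetilde{\by},\widetilde{\bx}, \bx))}\left[f(\by) \right] \right] \;\le\; \E_{\bx \sim \Xi(\widetilde{\bx})}\!\left[ F(\bx) \right].
\]
The only things to double-check along the way are that $F(\bx)$ is well defined for every $\bx$ in the support (which holds because $\calY_\calM(\bx)$ always contains $\bzero$, so the max is over a nonempty set) and that $\widetilde{\by}$ indeed lies in $\tildecalY_\calM(\widetilde{\bx})$ so that $Q$ and Lemma~\ref{lem:appendix-y-always-feasible} apply — both are hypotheses of the lemma. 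There is no real obstacle here; the content of the argument is entirely packaged in Lemma~\ref{lem:appendix-y-always-feasible}, and this lemma is just the observation that a feasible second‑stage solution can never beat the optimal one.
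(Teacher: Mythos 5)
Your proposal is correct and follows essentially the same route as the paper: both reduce the claim to the feasibility guarantee of Lemma~\ref{lem:appendix-y-always-feasible}, deduce that the (expected) value of the rounded second-stage solution cannot exceed $F(\bx)$ for each fixed realization $\bx$, and then average over $\bx \sim \Xi(\widetilde{\bx})$. The only cosmetic difference is that you bound $f(\by) \le F(\bx)$ pointwise over the randomness of $Q$ and $\phi$ before taking expectations, whereas the paper first passes to the expectation and invokes the existence of a feasible $\by'$ whose value dominates it; the two are equivalent.
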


\begin{proof}
    Fix a vector $\bx_0 \in \calX_\ell$ and let $\hat{\by} = Q(\widetilde{\by}, \widetilde{\bx}, \bx_0)$ and $\by = \phi(\hat{\by})$. The expected value of the produced vector, $\by$, is
    $$
    \E_{\by \sim \phi(Q(\widetilde{\by}, \widetilde{\bx}, \bx_0))}\left[f(\by) \right].
    $$

    By Lemma~\ref{lem:appendix-y-always-feasible} we know that $\by \in \calY_\calM(\bx_0)$.
    This means that there must exist a $\by' \in \calY_\calM(\bx_0)$ whose value is at least the expected value
    of the produced vector, i.e.
    $$
    f(\by') \geq \E_{\by \sim \phi(Q(\widetilde{\by}, \widetilde{\bx}, \bx_0))}\left[f(\by) \right].
    $$

    By the definition of the reward function $F$ (Eq.~\eqref{eq: reward}), we have that
    $$
    F(\bx_0) \geq f(\by') \geq \E_{\by \sim \phi(Q(\widetilde{\by}, \widetilde{\bx}, \bx_0))}\left[f(\by) \right].
    $$

    Notice that the above inequality holds for any $\bx_0 \in \calX_\ell$. We multiply this inequality by the probability
    that $\bx_0$ is the result of Randomized Pipage Rounding on $\widetilde{\bx}$ and then sum up the corresponding inequalities for all $\bx_0 \in \calX_\ell$:
    \begin{align*}\sum_{\bx_0 \in \calX_\ell} F(\bx_0) \cdot \Pr_{\bx \sim \Xi(\widetilde{\bx})}[\bx = \bx_0 ]
        &\geq \sum_{\bx_0 \in \calX_\ell} \E_{\by \sim \phi(Q(\widetilde{\by}, \widetilde{\bx}, \bx_0))}\left[f(\by) \right] \cdot \Pr_{\bx \sim \Xi(\widetilde{\bx})}[\bx = \bx_0 ]
        \\\Rightarrow \E_{\bx \sim \Xi(\widetilde{\bx})} \left[ F(\bx) \right] &\geq
    \E_{\bx \sim \Xi(\widetilde{\bx})} \left[\E_{\by \sim \phi(Q(\widetilde{\by}, \widetilde{\bx}, \bx))}\left[f(\by) \right] \right]\qedhere \end{align*}
\end{proof}

Having the above lemma, we will now turn our focus on analyzing the expected value
of $f(\by)$ when $\bx$ has been produced by Randomized Pipage Rounding. We start by proving some
properties for $\hat{\by}$ and then ``instantiate'' the contention resolution scheme $\phi$.

\centerline{}
In order to make the notation simpler, we will use $Q(\hat{\by}, \widetilde{\bx}, \Xi(\widetilde{\bx}))$
to denote the distribution $Q(\hat{\by}, \widetilde{\bx}, \bx)$ where $\bx \sim \Xi(\widetilde{\bx})$. 
We begin by showing that the expected value of $\hat{\by}$
is the initial fractional solution $\widetilde{\by}$.
\centerline{}

\begin{lemma}\label{lem:appendix-hat-y-marginals}
    For any $\widetilde{\bx} \in \tildecalX_\ell$ and any $\widetilde{\by}
    \in \tildecalY_\calM(\widetilde{\bx})$ it holds that
    $$
    \E_{\hat{\by} \sim Q(\widetilde{\by}, \widetilde{\bx}, \Xi(\widetilde{\bx}))}\left[\hat{\by} \right] = \widetilde{\by}.
    $$
\end{lemma}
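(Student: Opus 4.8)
The statement is a coordinate-wise identity, so the plan is to fix an index $i \in [n]$ and compute $\E[\hat{\by}_i]$ directly from the definition of the rounding $Q$. Recall that $\hat{\by} = \bx \wedge \bm{c}$, where $\bx \sim \Xi(\widetilde{\bx})$ is the output of Randomized Pipage Rounding on $\widetilde{\bx}$ and $\bm{c} = (c_1,\dots,c_n)$ are independent Bernoulli draws with $c_i \sim \Ber(a_i)$, sampled independently of $\bx$. Hence $\hat{\by}_i = \indic\{x_i = 1\}\cdot\indic\{c_i = 1\}$, and since the two indicators are independent,
\[
\E\left[\hat{\by}_i\right] \;=\; \E_{\bx \sim \Xi(\widetilde{\bx})}\left[x_i\right]\cdot \E_{c_i \sim \Ber(a_i)}\left[c_i\right] \;=\; \E_{\bx \sim \Xi(\widetilde{\bx})}\left[x_i\right]\cdot a_i.
\]

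The next step is to invoke the key property of Randomized Pipage Rounding, namely that it preserves marginals: $\E_{\bx \sim \Xi(\widetilde{\bx})}[x_i] = \widetilde{\bx}_i$ (see Appendix~\ref{app: pipage}). Substituting gives $\E[\hat{\by}_i] = \widetilde{\bx}_i\, a_i$.

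Finally I would split into two cases according to the definition of $a_i$. If $\widetilde{\bx}_i \neq 0$, then $a_i = \widetilde{\by}_i/\widetilde{\bx}_i$, so $\widetilde{\bx}_i\, a_i = \widetilde{\by}_i$. If $\widetilde{\bx}_i = 0$, then $a_i = 0$, so $\E[\hat{\by}_i] = 0$; but since $\widetilde{\by} \in \tildecalY_\calM(\widetilde{\bx})$ implies $\widetilde{\by} \leq \widetilde{\bx}$, we also have $\widetilde{\by}_i \leq \widetilde{\bx}_i = 0$, i.e.\ $\widetilde{\by}_i = 0$. In both cases $\E[\hat{\by}_i] = \widetilde{\by}_i$, and since $i$ was arbitrary this proves $\E_{\hat{\by} \sim Q(\widetilde{\by}, \widetilde{\bx}, \Xi(\widetilde{\bx}))}[\hat{\by}] = \widetilde{\by}$.

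There is no real obstacle here: the only nontrivial external ingredient is the marginal-preservation property of Randomized Pipage Rounding, which is standard and established in Appendix~\ref{app: pipage}; the rest is the independence of $\bm{c}$ from $\bx$ and the boundary case $\widetilde{\bx}_i = 0$, which is handled by the feasibility constraint $\widetilde{\by} \leq \widetilde{\bx}$. The one point worth stating carefully is precisely that $\bm{c}$ is drawn independently of the pipage outcome $\bx$, so that the expectation factorizes.
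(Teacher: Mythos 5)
Your proof is correct and follows essentially the same route as the paper's: both argue coordinate-wise, factor the expectation using the independence of the Bernoulli draw $c_i$ from the pipage outcome $\bx_i$, invoke the marginal-preservation property of Randomized Pipage Rounding (Lemma~\ref{lem:appendix-pipage-marginals}), and dispose of the $\widetilde{\bx}_i = 0$ boundary case via $\widetilde{\by} \leq \widetilde{\bx}$. No issues.
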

\begin{proof}[Proof of Lemma~\ref{lem:appendix-hat-y-marginals}]
    Fix an index $i \in [n]$. If $\widetilde{\bx}_i = 0$ then $\widetilde{\by}_i = 0$. This because
    $\widetilde{\by} \in \tildecalY_\calM(\widetilde{\bx})$ which implies
    that $\widetilde{\by} \leq \widetilde{\bx}$. 
    In this case, we have that $\by_i = 0 = \hat{\by}_i$.
    In the case where $\widetilde{\bx}_i \neq 0$, the following is true
    $$
    \Pr_{\Xi,Q}[\hat{\by}_i = 1 ] = \Pr_{\Xi, Q}[c_i = 1 \wedge \bx_i = 1 ] = \Pr_{Q}[c_i = 1] \cdot \Pr_{\Xi}[\bx_i = 1 ] \stackrel{\text{Lem. }~\ref{lem:appendix-pipage-marginals}}{=}
    \frac{\widetilde{\by}_i}{\widetilde{\bx}_i} \cdot \widetilde{\bx}_i = \widetilde{\by}_i,
    $$
    where in the second to last inequality we used Lemma~\ref{lem:appendix-pipage-marginals}
    for the marginals of the Randomized Pipage rounding.
    
    In all cases we have that $\E_{\Xi, Q}[\hat{\by}_i] = \Pr_{\Xi,Q}[\hat{\by}_i = 1] = \widetilde{\by}_i$, which concludes the proof.
\end{proof}

We continue by showing that the distribution $Q(\widetilde{\by}, \widetilde{\bx}, \Xi(\widetilde{\bx}))$ exhibits the propery of submodular dominance (Def.~\eqref{def:submodular-dominance}).

\vspace{1em}
\begin{lemma}\label{lem:appendix-hat-y-submodular-dominance}
    For any $\widetilde{\bx} \in \tildecalX_\ell$ and any $\widetilde{\by}
    \in \tildecalY_\calM(\widetilde{\bx})$, the distribution $Q(\widetilde{\by}, \widetilde{\bx}, \Xi(\widetilde{\bx}))$ has the property of submodular dominance (Def.~\ref{def:submodular-dominance}), i.e. for every submodular
    function $f$ it holds that
    $$
    \E_{\hat{\by} \sim Q(\widetilde{\by}, \widetilde{\bx}, \Xi(\widetilde{\bx}))}\left[f(\hat{\by})\right] \geq \E_{\hat{\by} \sim \textbf{Ind}(\widetilde{\by})} \left[ f(\hat{\by})\right].
    $$
\end{lemma}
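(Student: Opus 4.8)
The plan is to reduce the claim to the (known) submodular-dominance property of Randomized Pipage Rounding over the uniform matroid $\tildecalX_\ell$, by conditioning on the outcome of the rounding. The key observation is that, \emph{conditioned on} $\bx \sim \Xi(\widetilde{\bx})$, the intermediate vector $\hat{\by} = \bx \wedge \bm c$ is a product distribution: since the $c_i \sim \Ber(a_i)$ are mutually independent and independent of $\bx$, the event $\{\hat{y}_i = 1\}$ has conditional probability $x_i a_i$, independently across $i$, i.e.\ $\hat{\by}\mid\bx \sim \textbf{Ind}(\bx \odot \bm a)$, where $\odot$ denotes the coordinatewise product. Hence, writing $F$ for the multilinear extension of the submodular function $f$,
\[
\E_{\hat{\by}\sim Q(\widetilde{\by},\widetilde{\bx},\Xi(\widetilde{\bx}))}[f(\hat{\by})]
=\E_{\bx\sim\Xi(\widetilde{\bx})}\!\left[\E_{\hat{\by}\sim\textbf{Ind}(\bx\odot\bm a)}[f(\hat{\by})]\right]
=\E_{\bx\sim\Xi(\widetilde{\bx})}\!\left[F(\bx\odot\bm a)\right].
\]

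Next I would introduce the auxiliary function $g(\bz) \defeq F(\bz\odot\bm a)$ on $[0,1]^n$ and argue that its restriction to $\{0,1\}^n$ is submodular with multilinear extension $g$ itself: $g$ is multilinear (a coordinatewise-linear reparametrization of the multilinear $F$), and for $i\ne j$, $\partial_i\partial_j g = a_i a_j\,\partial_i\partial_j F \le 0$ since $a_i,a_j\ge 0$ and $F$ has nonpositive mixed second partials by submodularity of $f$. Because $\bx$ is integral, $F(\bx\odot\bm a)=g(\bx)$ is just the value of this submodular set function, so the display above equals $\E_{\bx\sim\Xi(\widetilde{\bx})}[g(\bx)]$. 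I would then invoke the submodular-dominance property of Randomized Pipage Rounding: $\Xi(\widetilde{\bx})$ has marginals $\widetilde{\bx}$ (Lemma~\ref{lem:appendix-pipage-marginals}), and, being pipage rounding over the uniform matroid $\tildecalX_\ell$, it satisfies $\E_{\bx\sim\Xi(\widetilde{\bx})}[g(\bx)] \ge \widetilde g(\widetilde{\bx})$ for every submodular $g$ (this is the negative-correlation/negative-association property of pipage rounding; cf.\ \citet{chekuri2009randomized_pipage} and Appendix~\ref{app: pipage}). Finally, $\widetilde g(\widetilde{\bx}) = F(\widetilde{\bx}\odot\bm a) = F(\widetilde{\by})$, where the last step uses $\widetilde{\by} = \widetilde{\bx}\odot\bm a$ (immediate from the definition of $\bm a$: true when $\widetilde{\bx}_i\ne 0$, and when $\widetilde{\bx}_i = 0$ we have $\widetilde{\by}_i \le \widetilde{\bx}_i = 0$), and $F(\widetilde{\by}) = \E_{\hat{\by}\sim\textbf{Ind}(\widetilde{\by})}[f(\hat{\by})]$ by definition of the multilinear extension. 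Chaining these identities and the one inequality yields exactly $\E_{Q}[f(\hat{\by})]\ge \E_{\textbf{Ind}(\widetilde{\by})}[f(\hat{\by})]$, which is the assertion (note $\bmu_{Q(\widetilde{\by},\widetilde{\bx},\Xi(\widetilde{\bx}))} = \widetilde{\by}$ by Lemma~\ref{lem:appendix-hat-y-marginals}, so this is indeed submodular dominance in the sense of Definition~\ref{def:submodular-dominance}).

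The main obstacle I anticipate is justifying the submodular-dominance inequality for the pipage distribution when $g$ is \emph{not} assumed monotone: the cleanest classical statements of pipage/swap rounding lower-bound $\E[g(\bx)]$ by $\widetilde g(\widetilde{\bx})$ only for \emph{monotone} submodular $g$ (via a continuous-greedy-style argument), whereas the general submodular case needs the full strength of negative association of the rounding, not merely pairwise negative correlation. For the uniform matroid $\tildecalX_\ell$ this is not an issue, since pipage rounding there coincides with classical dependent rounding and is negatively associated; still, I would make sure Appendix~\ref{app: pipage} states negative association (or the submodular-dominance conclusion directly), rather than only the marginal identity. If only the weaker property were available, an alternative that sidesteps this is to prove negative association of $Q(\widetilde{\by},\widetilde{\bx},\Xi(\widetilde{\bx}))$ directly from closure properties---the pooled family $\{x_i\}\cup\{c_i\}$ is negatively associated as a union of independent negatively associated families, and $\hat{y}_i = x_i c_i$ is a nondecreasing function of the disjoint block $\{x_i,c_i\}$---and then quote the generic implication ``negative association $\Rightarrow$ submodular dominance'' (\citet{singla_negative_correlation, dughmi2020}).
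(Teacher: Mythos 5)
Your proof is correct and follows essentially the same route as the paper's: you condition on the pipage outcome, introduce the auxiliary function $g(\bx)=F(\bx\odot\bm a)$ (which is exactly the paper's $g(\bx)=\E_{\bz\sim\textbf{Ind}(\bp)}[f(\bx\wedge\bz)]$, with submodularity checked via mixed second partials of the multilinear extension rather than the lattice inequality), and then invoke the submodular-dominance property of Randomized Pipage Rounding. Your worry about non-monotone $g$ is moot here, since Lemma~\ref{lem:appendix-pipage-submodular-dominance} is stated for arbitrary submodular functions (the pipage step is an expectation-preserving move along a direction in which the multilinear extension is convex, so no monotonicity is needed).
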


\begin{proof}[Proof of Lemma~\ref{lem:appendix-hat-y-submodular-dominance}]
    Let $\bp \in [0,1]^n$ be a vector whose $i$-th coordinate is defined as
    \[
    \bp_i = 
    \begin{cases}
        \frac{\widetilde{\by}_i}{\widetilde{\bx}_i},& \quad\text{if } \widetilde{\bx}_i \neq 0\\
        0,&\quad \text{otherwise}
    \end{cases}
    \quad.
    \]
    Fix any submodular function $f$. We define the function $g: 2^{[n]} \to \reals$ as
    $$
    g(\bx) \stackrel{\triangle}{=} \E_{\bz \sim \textbf{Ind}(\bp)}\left[f(\bx \wedge \bz)\right].
    $$
    Intuitively, $g$ maps every set $\bx$ to its expected value after removing each one of its elements
    independently with probability $\bp_i$. Notice that this is exactly what the rounding $Q$ does to
    every rounded ground set $\bx$. In other words, $\hat{\by}$ is simply the result of setting the $i$-th coordinate of $\bx$ to zero, independently, with probability $p_i$. From this observation we get that
    $$
    \E_{\hat{\by} \sim Q(\widetilde{\by}, \widetilde{\bx}, \Xi(\widetilde{\bx}))} \left[f(\hat{\by}) \right]
    = \E_{\bx \sim \Xi(\widetilde{\bx})}\left[ \E_{\bz \sim \textbf{Ind}(\bp)} \left[ f(\bx \wedge \bz) \right] \right]
    = \E_{\bx \sim \Xi(\widetilde{\bx})}\left[ g(\bx) \right].
    $$

    In a similar manner, we have that
    $$
    \E_{\hat{\by} \sim \textbf{Ind}(\widetilde{\by})}\left[f(\hat{\by})\right] = \E_{\bx \sim \textbf{Ind}(\widetilde{\bx})} \left[ \E_{\bz \sim \textbf{Ind}(\bp)}\left[f ( \bx \wedge \bz ) \right]\right] = \E_{\bx \sim \textbf{Ind}(\widetilde{\bx})} \left[g(\bx)\right].
    $$

    We conclude the proof by showing that the function $g$ is submodular and then applying the submodular dominance
    property of the Randomized Pipage Rounding (Lemma~\ref{lem:appendix-pipage-submodular-dominance}).

    To prove the submodularity of $g$, fix a subset $\bz \in \{0,1\}^n$ and two other subsets $\bx, \by \in \{0,1\}^n$.
    By the submodularity property of $f$ applied to the sets $(\bx\ \wedge\ \bz)$ and $(\by\ \wedge\ \bz)$,
    we have that
    \begin{align*}
        f(\bx \wedge \bz) + f(\by \wedge \bz) &\geq f\big( (\bx \wedge \bz)\ \wedge\ (\by \wedge \bz)\big) + 
        f\big( (\bx \wedge \bz)\ \vee\ (\by \wedge \bz)\big)\\
        &= f\big( (\bx \wedge \by)\ \wedge\ \bz\big) + 
        f\big( (\bx \vee \by)\ \wedge\ \bz\big).
    \end{align*}
    We write the above inequality for any $\bz_0 \in \{0,1\}^n$ and multiply both sides with $\Pr_{\bz \sim \textbf{Ind}(\bp)}[\bz = \bz_0]$. By summing up all the aforementioned inequalities we get
    \begin{align*}
&\sum_{\bz_0 \in \{0,1\}^n} \left( f(\bx \wedge \bz) + f(\by \wedge \bz) \right)
\cdot \Pr_{\bz \sim \textbf{Ind}(\bp)}[\bz = \bz_0] \\
&\quad\quad\quad \geq \sum_{\bz_0 \in \{0,1\}^n} \left( f\big( (\bx \wedge \by)\ \wedge\ \bz\big) +
f\big( (\bx \vee \by)\ \wedge\ \bz\big) \right) \cdot \Pr_{\bz \sim \textbf{Ind}(\bp)}[\bz = \bz_0] \\
&\Rightarrow \E_{\bz \sim \textbf{Ind}(\bp)}[f(\bx \wedge \bz)] + \E_{\bz \sim \textbf{Ind}(\bp)}[f(\by \wedge \bz)] \\
&\quad\quad\quad \geq \E_{\bz \sim \textbf{Ind}(\bp)}[f\big( (\bx \wedge \by)\ \wedge\ \bz\big)] + 
\E_{\bz \sim \textbf{Ind}(\bp)}[f\big( (\bx \vee \by)\ \wedge\ \bz\big)] \\
&\Rightarrow g(\bx) + g(\by) \geq g(\bx \wedge \by) + g(\bx \vee \by)
\end{align*}
    which proves that $g$ is submodular. 
  
    We conclude the proof by applying Lemma~\ref{lem:appendix-pipage-submodular-dominance} to $g$,
    \begin{align*}
    \E_{ \hat{\by} \sim Q(\widetilde{\by}, \widetilde{\bx}, \Xi(\widetilde{\bx}))}\left[f(\hat{\by}) \right]
    &= \E_{\bx \sim \Xi(\widetilde{\bx})}[g(\bx)] \stackrel{\text{Lem.~\ref{lem:appendix-pipage-submodular-dominance}}}{\geq}
    \E_{\bx \sim \textbf{Ind}(\widetilde{\bx})}[g(\bx)]
    = \E_{\hat{\by} \sim \textbf{Ind}(\widetilde{\by})}[f(\hat{\by})].\qedhere
    \end{align*}
\end{proof}

\vspace{1em}

Using the above properties for $\hat{\by}$, we will now show that there exists an ``appropriate''
CRS $\phi$, which we will use to finally prove Lemma~\ref{lem:alg-rounding-c-approx}.

\centerline{}

\begin{lemma}\label{lem:appendix-phi-is-selectable-crs-}
    For any $\widetilde{\bx} \in \tildecalX_\ell$ and any $\widetilde{\by}
    \in \tildecalY_\calM(\widetilde{\bx})$, there exists a $c_\calM$-selectable CRS
    for the distribution $Q(\widetilde{\by}, \widetilde{\bx}, \Xi(\widetilde{\bx}))$
    and the matroid $\calM$, where $c_\calM = 1-e^{-k}k^k/k!$ if $\calM$ is a uniform matroid
    with rank $k$ and $c_\calM = 1-1/e$ otherwise.
\end{lemma}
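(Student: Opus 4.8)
The plan is to recognize that the target distribution $\calD$, defined to be $Q(\widetilde{\by}, \widetilde{\bx}, \Xi(\widetilde{\bx}))$, is set up so as to satisfy, verbatim, the two hypotheses of Corollary~\ref{cor:appendix-crs-for-sub-dominance}; the existence of the $c_\calM$-selectable CRS then follows by a direct invocation of that corollary. In other words, all the substantive work has already been done in Lemmas~\ref{lem:appendix-hat-y-marginals} and~\ref{lem:appendix-hat-y-submodular-dominance}, and what remains is to assemble the pieces.

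First I would compute $\bmu_\calD$. By Lemma~\ref{lem:appendix-hat-y-marginals}, $\bmu_\calD = \E_{\hat{\by}\sim\calD}[\hat{\by}] = \widetilde{\by}$. Since $\widetilde{\by}\in\tildecalY_\calM(\widetilde{\bx})$, the definition of $\tildecalY_\calM$ gives $\widetilde{\by}\in\calP(\calM)$, hence $\bmu_\calD\in\calP(\calM)$; this is hypothesis (1) of Corollary~\ref{cor:appendix-crs-for-sub-dominance}. Next I would quote Lemma~\ref{lem:appendix-hat-y-submodular-dominance}, which states precisely that $\calD$ has the submodular-dominance property: for every submodular function $f$, one has $\E_{\hat{\by}\sim\calD}[f(\hat{\by})]\ge \E_{\hat{\by}\sim\textbf{Ind}(\widetilde{\by})}[f(\hat{\by})]$. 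Using the identity $\bmu_\calD=\widetilde{\by}$ from the previous step, the right-hand side equals $\E_{\hat{\by}\sim\textbf{Ind}(\bmu_\calD)}[f(\hat{\by})]$, which is exactly hypothesis (2) of Corollary~\ref{cor:appendix-crs-for-sub-dominance}.

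With both hypotheses verified, I would apply Corollary~\ref{cor:appendix-crs-for-sub-dominance} to $\calD$ and the matroid $\calM$ to obtain a $c_\calM$-selectable CRS, with $c_\calM = 1-e^{-k}k^k/k!$ when $\calM$ is a uniform matroid of rank $k$ and $c_\calM = 1-1/e$ otherwise, which completes the proof. Since the argument is a pure composition of two previously established lemmas with the corollary, I do not expect a genuine obstacle; the only point deserving a line of care is checking that the reference product distribution $\textbf{Ind}(\bmu_\calD)$ in the definition of submodular dominance coincides with the $\textbf{Ind}(\widetilde{\by})$ appearing in Lemma~\ref{lem:appendix-hat-y-submodular-dominance}, which is immediate from $\bmu_\calD=\widetilde{\by}$.
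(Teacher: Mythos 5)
Your proposal is correct and follows essentially the same route as the paper's proof: both establish $\bmu_\calD=\widetilde{\by}\in\calP(\calM)$ via Lemma~\ref{lem:appendix-hat-y-marginals}, invoke Lemma~\ref{lem:appendix-hat-y-submodular-dominance} for submodular dominance, and conclude by applying Corollary~\ref{cor:appendix-crs-for-sub-dominance}. Your explicit remark that $\textbf{Ind}(\bmu_\calD)$ coincides with $\textbf{Ind}(\widetilde{\by})$ is a small but welcome addition of care that the paper leaves implicit.
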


\begin{proof}[Proof of Lemma~\ref{lem:appendix-phi-is-selectable-crs-}]
    Let $c_\calM$ be the constant mentioned in the statement of the lemma. From Corollary~\ref{cor:appendix-crs-for-sub-dominance}, we know that in order for a distribution to have a $c_\calM$-selectable CRS, it suffices
    that it's expectation is in the matroid polytope and that is has the property of submodular dominance.
    Fix any pair of $\widetilde{\bx} \in \tildecalX_\ell$ and $\widetilde{\by} \in \tildecalY_\calM(\widetilde{\bx})$.
    For each such pair, $Q(\widetilde{\by}, \widetilde{\bx}, \Xi(\widetilde{\bx}))$ defines a distribution over subsets of
    $[n]$ that depends on the randomness of $Q$ and $\Xi$. From Lemma~\ref{lem:appendix-hat-y-marginals}
    we have that 
    $$
    \E_{\hat{\by} \sim Q(\widetilde{\by}, \widetilde{\bx}, \Xi(\widetilde{\bx}))} [\ \hat{\by}\ ] = \widetilde{\by} \in \calP(\calM),
    $$
    which is in the matroid polytope since $\widetilde{\by} \in \tildecalY_\calM(\widetilde{\bx})$.
    Furthermore, from Lemma~\ref{lem:appendix-hat-y-submodular-dominance}, we know that the distribution
    $Q(\widetilde{\by}, \widetilde{\bx}, \bx)$ has the property of submodular dominance.
    As a result, using Corollary~\ref{cor:appendix-crs-for-sub-dominance}, we get that there exists
    a $c_\calM$-selectable CRS for the distribution $Q(\widetilde{\by}, \widetilde{\bx}, \Xi(\widetilde{\bx}))$.
\end{proof}

\centerline{}
We are now ready to prove Lemma~\ref{lem:alg-rounding-c-approx}.

\begin{proof}[Proof of Lemma~\ref{lem:alg-rounding-c-approx}]
    For any solution $\widetilde{\by} \in \tildecalY_\calM(\widetilde{\bx})$ and any contention resolution scheme $\phi$, from Lemma~\ref{lem:appendix-ground-set-value-at-least-expected-rounded-solution} we know that
    $$
    \E_{\bx \sim \Xi(\widetilde{\bx}) } \left[ F(\bx) \right] \geq
    \E_{\bx \sim \Xi(\widetilde{\bx})} \left[\E_{\by \sim \phi(Q(\widetilde{\by},\widetilde{\bx}, \bx))}\left[f(\by) \right] \big|\ \right] = \E_{\hat{\by} \sim Q(\widetilde{\by},\widetilde{\bx}, \Xi(\widetilde{\bx}))}
    \left[f(\phi(\hat{\by}))\right].
    $$
    By selecting $\phi$ to be the $c_\calM$-selectable contention resolution scheme guaranteed by Lemma~\ref{lem:appendix-phi-is-selectable-crs-} and applying Lemma~\ref{lem:approximation-crs-wtp}, we get the following
    $$
    \E_{\hat{\by} \sim Q(\widetilde{\by},\widetilde{\bx}, \Xi(\widetilde{\bx}))}
    \left[f(\phi(\hat{\by}))\right] \stackrel{\text{Lem.~\ref{lem:approximation-crs-wtp}}}{\geq}
    c_\calM \cdot \left(1-\frac{1}{e}\right) \cdot \widetilde{f}(\widetilde{\by}),
    $$
    where $\widetilde{f}$ is the concave relaxation of the \WTP\ function $f$, as defined in Eq.~\eqref{eq: WTP relaxation}.
    
    Let $\by^* \in \argmax_{\by \in \tildecalY_\calM(\widetilde{\bx})} \widetilde{f}(\by)$. By selecting $\widetilde{\by} = \by^*$ in the above inequality we get
    \begin{align*}
    \E_{\bx \sim \Xi(\widetilde{\bx})} \left[ F(\bx) \right] \geq c_\calM \cdot \left(1-\frac{1}{e}\right) \cdot \widetilde{f}(\by^*) = c_\calM \cdot \left(1-\frac{1}{e}\right) \cdot \widetilde{F}(\widetilde{\bx}),
    \end{align*}
    where in the last equality we used that $\widetilde{F}(\widetilde{\bx}) = \widetilde{f}(\by^*)$
    by the definition of the fractional reward function (Eq.~\eqref{eq:fractional-reward}).
\end{proof}
\section{Regret of OCO Policies in Fractional \problemnameNoTT}\label{app:fractional-no-regret}

In this section, we prove Lemma~\ref{lem:alg-oco-no-regret} which states that any standard {\OCO} policy 
(e.g., FTRL or OGA, see Appendix~\ref{app: oco} for more details) has no-regret in the fractional relaxation
of \problemname. 

\paragraph{Notation.} We introduce some extra notation regarding matroids that we will use
in our proofs in this section. 
For a matroid $\calM = (\calV, \calI)$, the rank function of $\calM$, $\rank: 2^\calV \to \nats$, maps each set $U \subseteq [n]$ to the size of the largest independent set contained in $U$. 
Also, the matroid polytope $\calP(\calM)$ can equivalently be written as:
$\calP(\calM) = \{\widetilde{\by} \in [0,1]^n: \sum_{i \in S} \widetilde{y}_i \leq \rank(S),\ \forall S\subseteq \calV\}.$

First, we will show that the fractional reward function $\widetilde{F}$ associated with a
\WTP\ function $f$ is concave and Lipschitz. These two properties
are proven in the next two lemmas (Lemmas~\ref{lemma: F concavity} and~\ref{lem:F-Lipschitz}).

\vspace{1em}
\begin{lemma} \label{lemma: F concavity}
    Let $f$ be any \WTP\ function, $\widetilde{f}$ be its concave relaxation as defined in Eq.~\eqref{eq: WTP relaxation} and $\widetilde{F}$ be the concave relaxation of the reward function associated with $\widetilde{f}$, as defined
    in Eq.~\eqref{eq:fractional-reward}. Then, $\widetilde{F}$ is concave over the domain $\tildecalX_\ell$.
\end{lemma}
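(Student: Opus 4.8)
The plan is to reduce the claim to the standard fact that \emph{partial maximization of a jointly concave function over a convex set yields a concave function}. Concretely, I would first introduce the set
\[
\calS \;=\; \left\{ (\widetilde{\bx}, \widetilde{\by}) \in \reals^{2n} \;:\; \widetilde{\bx} \in \tildecalX_\ell,\; \widetilde{\by} \in \calP(\calM),\; \widetilde{\by} \leq \widetilde{\bx} \right\},
\]
and observe that, by definition \eqref{eq:fractional-reward}, $\widetilde{F}(\widetilde{\bx}) = \max_{\widetilde{\by}\,:\,(\widetilde{\bx},\widetilde{\by}) \in \calS} \widetilde{f}(\widetilde{\by})$ for every $\widetilde{\bx} \in \tildecalX_\ell$.

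First I would verify the three hypotheses needed. (i) $\calS$ is convex: $\tildecalX_\ell$ and $\calP(\calM)$ are polytopes, hence convex; the coordinatewise inequality $\widetilde{\by} \leq \widetilde{\bx}$ is a finite system of linear inequalities in the joint variable $(\widetilde{\bx},\widetilde{\by})$; and $\calS$ is the intersection of these convex sets. (ii) For each $\widetilde{\bx} \in \tildecalX_\ell$ the feasible slice $\tildecalY_\calM(\widetilde{\bx})$ is nonempty and compact: it contains $\bzero$ (since $\bzero \in \calP(\calM)$ and $\bzero \leq \widetilde{\bx}$), and it is a closed, bounded subset of $[0,1]^n$; as $\widetilde{f}$ is continuous, the maximum in \eqref{eq:fractional-reward} is attained, so $\widetilde{F}$ is finite and well defined on all of $\tildecalX_\ell$. (iii) The map $(\widetilde{\bx},\widetilde{\by}) \mapsto \widetilde{f}(\widetilde{\by})$ is concave on $\reals^{2n}$, since it does not depend on $\widetilde{\bx}$ and $\widetilde{f}$ is concave: each term $c_j \min\{b_j,\ \widetilde{\by}\cdot\bw_j\}$ in \eqref{eq: WTP relaxation} is the minimum of an affine function and a constant, hence concave, and $c_j > 0$, so the sum is concave.

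Then I would carry out the short concavity argument. Fix $\widetilde{\bx}_1,\widetilde{\bx}_2 \in \tildecalX_\ell$ and $\lambda \in [0,1]$, and let $\widetilde{\by}_1,\widetilde{\by}_2$ be maximizers achieving $\widetilde{F}(\widetilde{\bx}_1)$ and $\widetilde{F}(\widetilde{\bx}_2)$, which exist by (ii). By (i), the point $\big(\lambda\widetilde{\bx}_1 + (1-\lambda)\widetilde{\bx}_2,\ \lambda\widetilde{\by}_1 + (1-\lambda)\widetilde{\by}_2\big)$ lies in $\calS$, so $\lambda\widetilde{\by}_1 + (1-\lambda)\widetilde{\by}_2$ is feasible for $\widetilde{F}\big(\lambda\widetilde{\bx}_1 + (1-\lambda)\widetilde{\bx}_2\big)$; hence
\[
\widetilde{F}\big(\lambda\widetilde{\bx}_1 + (1-\lambda)\widetilde{\bx}_2\big) \;\geq\; \widetilde{f}\big(\lambda\widetilde{\by}_1 + (1-\lambda)\widetilde{\by}_2\big) \;\geq\; \lambda\,\widetilde{f}(\widetilde{\by}_1) + (1-\lambda)\,\widetilde{f}(\widetilde{\by}_2) \;=\; \lambda\,\widetilde{F}(\widetilde{\bx}_1) + (1-\lambda)\,\widetilde{F}(\widetilde{\bx}_2),
\]
where the middle inequality uses concavity of $\widetilde{f}$ from (iii). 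This is exactly the concavity of $\widetilde{F}$ over $\tildecalX_\ell$.

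I do not expect a genuine obstacle here; the only points requiring care are (a) the attainment of the maximum, so that choosing actual maximizers $\widetilde{\by}_i$ is legitimate (otherwise one uses near-maximizers and passes to the limit), and (b) the nonemptiness of every feasible slice, so that $\widetilde{F}$ is finite-valued on all of $\tildecalX_\ell$ — both handled by the compactness/nonemptiness observations in (ii). Everything else is the routine "sup of a jointly concave function over a convex feasible region is concave" template.
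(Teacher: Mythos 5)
Your proof is correct and is essentially the same argument as the paper's: both pick second-stage maximizers $\widetilde{\by}_1,\widetilde{\by}_2$, show their convex combination is feasible for the combined ground set, and invoke concavity of $\widetilde{f}$. Your packaging via the jointly convex set $\calS$ and the "partial maximization preserves concavity" template is just a cleaner framing of the paper's explicit verification of the matroid-polytope inequalities, and your added remarks on attainment and nonemptiness of the feasible slices are minor points the paper leaves implicit.
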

\begin{proof}[Proof of Theorem \ref{lemma: F concavity}]
    Recall that $\widetilde{F}$ is defined in Eq.~\eqref{eq:fractional-reward} as
    $$
    \widetilde{F}(\widetilde{\bx}) = \max_{\widetilde{\by} \in \tildecalY_\calM(\widetilde{\bx})} \widetilde{f}(\widetilde{\by}),\ \forall\widetilde{\bx} \in \tildecalX_
    {\ell}.
    $$
    Let $\widetilde{\bx}_1, \widetilde{\bx}_2 \in \tildecalX_\ell$ be any two fractional restricted ground sets.
    We will show that for any $\lambda \in [0,1]$, the following holds
    \begin{align*}
        \widetilde{F}_t(\lambda \widetilde{\bx}_1 + (1 - \lambda) \widetilde{\bx}_2) 
        \geq 
        \lambda  \widetilde{F}_t(\widetilde{\bx}_1) + (1 - \lambda)  \widetilde{F}_t(\widetilde{\bx}_2).
    \end{align*}
    We define the vectors $\widetilde{\by}_1 \in \tildecalY_\calM(\widetilde{\bx}_1)$ and $\widetilde{\by}_2 \in \tildecalY_\calM(\widetilde{\bx}_2)$ 
    as follows
    \begin{align*}
        \widetilde{\by}_1 = \argmax_{\widetilde{\by} \in \tildecalY_\calM(\widetilde{\bx}_1)} \widetilde{f}_t (\widetilde{\by}) \quad \quad \text{and}
        \quad \quad \widetilde{\by}_2 = \argmax_{\widetilde{\by} \in \tildecalY_\calM(\widetilde{\bx}_2)} \widetilde{f}_t (\widetilde{\by}).
    \end{align*}
    It holds that
    \begin{align*}
        \lambda  \widetilde{F}(\widetilde{\bx}_1) + (1 - \lambda)  \widetilde{F}(\widetilde{\bx}_2)
        &= \lambda  \widetilde{f}(\widetilde{\by}_1) + (1 - \lambda)  \widetilde{f}(\widetilde{\by}_2) \\
        &\leq  \widetilde{f}(\lambda \widetilde{\by}_1 + (1 - \lambda) \widetilde{\by}_2)\\
        &\leq \max_{\widetilde{\by} \in \tildecalY_\calM(\lambda \widetilde{\bx}_1 + (1-\lambda) \widetilde{\bx}_2)} \widetilde{f}(\widetilde{\by}) \\
        &=  \widetilde{F}(\lambda \widetilde{\bx}_1 + (1-\lambda)\widetilde{\bx}_2),
    \end{align*}
    where first inequality is true by the concavity of $\widetilde{f}$ and to get the second inequality
    we used the fact that $\lambda \widetilde{\by}_1 + (1 - \lambda) \widetilde{\by}_2 \in  \tildecalY_\calM(\lambda \widetilde{\bx}_1 + (1 - \lambda) \widetilde{\bx}_2)$,
    which we prove below.

    The memberships $\widetilde{\by}_1 \in \tildecalY_\calM(\widetilde{\bx}_1)$ and $\widetilde{\by}_2 \in \tildecalY_\calM(\widetilde{\bx}_2)$ imply
    the following inequalities:
    \begin{align*}
        &\begin{cases}
            \lambda \widetilde{\by}_1 \leq \lambda \widetilde{\bx}_1,\\
            \lambda \displaystyle\sum_{i \in S} \widetilde{y}_{1,i} \leq \lambda \ \rank(S), \forall S \subseteq \calV
        \end{cases}
        &
    &\text{and} &
        &\begin{cases}
            (1-\lambda) \widetilde{\by}_2 \leq (1-\lambda) \widetilde{\bx}_2, \\ 
            (1-\lambda)  \displaystyle\sum_{i \in S} \widetilde{y}_{2,i} \leq (1-\lambda) \rank(S), & \forall S \subseteq \calV.
        \end{cases}
    \end{align*}
    Adding up the above inequalities gives us that $\lambda \widetilde{\by}_1 + (1 - \lambda) \widetilde{\by}_2 \in  \tildecalY_\calM(\lambda \widetilde{\bx}_1 + (1 - \lambda) \widetilde{\bx}_2)$,
    which concludes the proof.
\end{proof}

Note that the proof of Lemma~\ref{lemma: F concavity} does not actually require that $f$ is  a \WTP\ function;
the lemma follows for any $f$ that admits a concave relaxation $\widetilde{f}$. 

The following lemma establishes the Lipschitz property of $\widetilde{F}$:

\vspace{1em}
\begin{lemma}\label{lem:F-Lipschitz}
Let $f \in \boundedWTP$ be a \WTP\ function, as defined in Eq.~\eqref{eq: WTP}. Let also $\widetilde{f}$ be its concave relaxation, as defined in Eq.~\eqref{eq: WTP relaxation} and $\widetilde{F}$ be the corresponding first-stage relaxation associated with $\widetilde{f}$, as defined
    in Eq.~\eqref{eq:fractional-reward}. Then, $\widetilde{F}$ is $GM^2\sqrt{n}$-Lipschitz.
\end{lemma}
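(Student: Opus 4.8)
The plan is to reduce the claim to two facts: that the concave relaxation $\widetilde f$ is itself $\indexsetbound\parameterbound^2\sqrt n$-Lipschitz on $[0,1]^n$, and that replacing $\widetilde f$ by its pointwise maximum over the ``moving'' constraint set $\tildecalY_\calM(\cdot)$ cannot worsen this Lipschitz constant. The first fact is a routine composition argument: each summand $\widetilde\by\mapsto c_j\min\{b_j,\widetilde\by\cdot\bw_j\}$ is a $c_j$-scaling of the $1$-Lipschitz clip $z\mapsto\min\{b_j,z\}$ composed with the linear form $\widetilde\by\mapsto\widetilde\by\cdot\bw_j$, whose $\ell_2$-Lipschitz constant is $\norm{2}{\bw_j}\le\sqrt n\,\norm{\infty}{\bw_j}\le\sqrt n\,\parameterbound$; hence each term is $\parameterbound^2\sqrt n$-Lipschitz, and summing over the at most $\indexsetbound$ terms of a function in $\boundedWTP$ gives $\indexsetbound\parameterbound^2\sqrt n$ by the triangle inequality.

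The substantive step is the second one. Fix $\widetilde\bx_1,\widetilde\bx_2\in\tildecalX_\ell$; since $\tildecalY_\calM(\widetilde\bx_1)$ is compact and $\widetilde f$ continuous, pick a maximizer $\widetilde\by_1^\star$ attaining $\widetilde F(\widetilde\bx_1)$. I would then exhibit a feasible second-stage point for $\widetilde\bx_2$ that is close to $\widetilde\by_1^\star$, namely the coordinatewise minimum $\widetilde\by_2\defeq\widetilde\by_1^\star\wedge\widetilde\bx_2$. Feasibility holds because $\widetilde\by_2\le\widetilde\bx_2$ by construction, and $\widetilde\by_2\le\widetilde\by_1^\star\in\calP(\calM)$ together with the down-closedness of the matroid polytope (immediate from the rank inequalities $\sum_{i\in S}\widetilde y_i\le\rank(S)$) gives $\widetilde\by_2\in\calP(\calM)$ as well, i.e.\ $\widetilde\by_2\in\tildecalY_\calM(\widetilde\bx_2)$. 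For the distance bound, observe coordinatewise that $(\widetilde\by_1^\star)_i-(\widetilde\by_2)_i=\max\{0,(\widetilde\by_1^\star)_i-(\widetilde\bx_2)_i\}$, which is at most $\max\{0,(\widetilde\bx_1)_i-(\widetilde\bx_2)_i\}\le|(\widetilde\bx_1)_i-(\widetilde\bx_2)_i|$ since $\widetilde\by_1^\star\le\widetilde\bx_1$; summing squares yields $\norm{2}{\widetilde\by_1^\star-\widetilde\by_2}\le\norm{2}{\widetilde\bx_1-\widetilde\bx_2}$.

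Putting it together, $\widetilde F(\widetilde\bx_1)=\widetilde f(\widetilde\by_1^\star)\le\widetilde f(\widetilde\by_2)+\indexsetbound\parameterbound^2\sqrt n\,\norm{2}{\widetilde\by_1^\star-\widetilde\by_2}\le\widetilde F(\widetilde\bx_2)+\indexsetbound\parameterbound^2\sqrt n\,\norm{2}{\widetilde\bx_1-\widetilde\bx_2}$, where the first inequality is the Lipschitzness of $\widetilde f$ and the second uses $\widetilde\by_2\in\tildecalY_\calM(\widetilde\bx_2)$ and the distance bound. Swapping the roles of $\widetilde\bx_1$ and $\widetilde\bx_2$ gives the reverse inequality, so $|\widetilde F(\widetilde\bx_1)-\widetilde F(\widetilde\bx_2)|\le\indexsetbound\parameterbound^2\sqrt n\,\norm{2}{\widetilde\bx_1-\widetilde\bx_2}$, which is the stated bound ($GM^2\sqrt n=\indexsetbound\parameterbound^2\sqrt n$).

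The main obstacle is precisely this second step: transferring Lipschitzness through the inner maximization over a set that itself moves with the ground set. The trick that makes it go through is choosing the comparison point to be $\widetilde\by_1^\star\wedge\widetilde\bx_2$ rather than, say, a rescaling of $\widetilde\by_1^\star$ --- clipping downward can only keep us inside the down-closed matroid polytope, and the total mass clipped is dominated coordinatewise by $|\widetilde\bx_1-\widetilde\bx_2|$, which is exactly what is needed. As in the companion concavity lemma, I would also remark that this argument uses the \WTP\ structure only through the Lipschitz bound on $\widetilde f$; any $f$ admitting a Lipschitz concave relaxation would work.
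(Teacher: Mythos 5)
Your proof is correct, and it takes a genuinely different route from the paper's. The paper casts $\widetilde{F}(\widetilde{\bx})$ as the linear program already used for the supergradient computation and argues via sensitivity analysis: the coordinate $\widetilde{x}_i$ enters only as the right-hand side of the constraint $\widetilde{y}_i \le \widetilde{x}_i$, so increasing it by $\epsilon$ can raise the optimal value by at most $\epsilon\sum_{j\in\calC}|c_j||w_{ji}| \le \epsilon\,\indexsetbound\parameterbound^2$, while decreasing it can only lower the value; summing the positive parts over coordinates and converting $\ell_1$ to $\ell_2$ yields the same constant $\indexsetbound\parameterbound^2\sqrt{n}$. You instead perturb the maximizer directly: the clipped point $\widetilde{\by}_1^\star\wedge\widetilde{\bx}_2$ is feasible for $\widetilde{\bx}_2$ by down-closedness of the matroid polytope, and its distance from $\widetilde{\by}_1^\star$ is dominated coordinatewise by $|\widetilde{\bx}_1-\widetilde{\bx}_2|$. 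Your version is more elementary and self-contained (no appeal to LP duality or sensitivity of optimal values), and, as you note, it isolates exactly what is needed from the \WTP\ structure --- namely a Lipschitz concave relaxation --- so it transfers verbatim to any objective admitting one, mirroring the analogous remark after the concavity lemma. The paper's route has the minor advantage of reusing machinery it needs anyway for the supergradient, and its intermediate $\ell_1$ bound ($\indexsetbound\parameterbound^2$ per coordinate) is slightly sharper before the final $\ell_1\to\ell_2$ conversion; both arrive at the stated $\ell_2$ constant.
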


\begin{proof}[Proof of Lemma~\ref{lem:F-Lipschitz}]
    Recall the definitions of $\widetilde{f}$ (Eq.~\eqref{eq: WTP relaxation}) and $\widetilde{F}$ (Eq.~\eqref{eq:fractional-reward}):
    \begin{align*}
    \widetilde{f}(\widetilde{\by}) = \sum_{j \in \mathcal{C}} c_j \min \left\{b_j, \widetilde{\by} \cdot \mathbf{w}_j \right\}, \widetilde{\by} \in [0, 1]^n
    \quad \quad \text{and} \quad \quad
    \widetilde{F}(\widetilde{\bx}) = \max_{\widetilde{\by} \in \tildecalY(\widetilde{\bx})} \widetilde{f} (\widetilde{\by}),\ \widetilde{\bx} \in \tildecalX_\ell.
    \end{align*}

    Recall that the family $\boundedWTP$ consists of the {\WTP} functions whose index set $\calC$ has size at most
    $G$ and whose parameters $c_j, b_j, w_{ij}$ are upper bounded by $M$.
    
    Fix an $\widetilde{\bx} \in \tildecalX_\ell$ and let $\widetilde{\by} \in \argmax_{\bz \in \tildecalY_\calM(\widetilde{\bx})} \widetilde{f}(\bz)$. As we show in Appendix~\ref{app: supergradient}, computing the reward $\widetilde{F}(\widetilde{\bx})$
    can be cast into a linear program. We focus on a dimension $i \in [n]$ and employ a sensitivity analysis
    to upper bound how much the objective value of this linear program can increase when we
    increase the $i$-th coordinate of $\widetilde{\bx}$ by $\epsilon$, i.e. $\widetilde{\bx}' = \widetilde{\bx} + \epsilon \cdot e_i$.
    Notice that $\widetilde{\bx}_i$ appears only as the right-hand side in the constraint $\widetilde{\by}_i \leq \widetilde{\bx}_i$.
    Therefore, by increasing $\widetilde{\bx}_i$ to $\widetilde{\bx}_i' = \widetilde{\bx}_i + \epsilon$, the objective function can increase by
    at most:
    $$
    \sum_{j \in \calC} \epsilon |c_j| |w_{ji}| \leq \epsilon \indexsetbound \parameterbound^2.
    $$
    Furthermore, by decreasing a coordinate of $\widetilde{\bx}$, the objective value can only stay the same or
    decrease. Therefore, for two points $\widetilde{\bx}_1, \widetilde{\bx}_2 \in \tildecalX_\ell$ with $\widetilde{F}(\widetilde{\bx}_2) \geq
    \widetilde{F}(\widetilde{\bx}_1)$ we get that
    \begin{align*}
    |\widetilde{F}(\widetilde{\bx}_2)-\widetilde{F}(\widetilde{\bx}_1)| &= \widetilde{F}(\widetilde{\bx}_2)-\widetilde{F}(\widetilde{\bx}_1)\\
    & \leq \indexsetbound \parameterbound^2 \cdot \sum_{i \in [n]} (x_{2i} - x_{1i})^+\\
    &\leq \indexsetbound \parameterbound^2 \cdot \sum_{i \in [n]} |x_{2i} - x_{1i}|\\
    & = \indexsetbound \parameterbound^2 \cdot \|\widetilde{\bx}_2 - \widetilde{\bx}_1\|_1\\
    & \leq \indexsetbound \parameterbound^2 \sqrt{n} \cdot \|\widetilde{\bx}_2 - \widetilde{\bx}_1\|_2.\qedhere
    \end{align*}
\end{proof}

\begin{lemma}\label{lem:X_ell_diameter}
    The diameter of the convex set $\tildecalX_\ell$ is at most $2\ell$.
\end{lemma}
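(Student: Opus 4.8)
The plan is to bound the $\ell_2$-diameter of $\tildecalX_\ell$ by its $\ell_1$-diameter and then exploit the single linear constraint $\sum_{i=1}^n \widetilde{x}_i \le \ell$ together with nonnegativity. First I would fix two arbitrary points $\widetilde{\bx}, \widetilde{\by} \in \tildecalX_\ell$ and apply the elementary inequality $\|\bm{v}\|_2 \le \|\bm{v}\|_1$ (valid for every $\bm{v} \in \reals^n$) to $\bm{v} = \widetilde{\bx} - \widetilde{\by}$. Then I would invoke the triangle inequality for the $\ell_1$ norm, $\|\widetilde{\bx} - \widetilde{\by}\|_1 \le \|\widetilde{\bx}\|_1 + \|\widetilde{\by}\|_1$, and finally use that, since both points lie in $\tildecalX_\ell$ and have nonnegative coordinates, $\|\widetilde{\bx}\|_1 = \sum_{i=1}^n \widetilde{x}_i \le \ell$, and likewise $\|\widetilde{\by}\|_1 \le \ell$. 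Chaining these gives $\|\widetilde{\bx} - \widetilde{\by}\|_2 \le 2\ell$, and taking the supremum over $\widetilde{\bx}, \widetilde{\by} \in \tildecalX_\ell$ establishes the claim.

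Since the argument is a two-step application of standard norm inequalities, there is no genuine obstacle here; the only point worth flagging is the choice of metric. The bound is stated for the $\ell_2$-diameter because that is the norm in which Theorem~\ref{thm:oco-no-regret} and Lemma~\ref{lem:F-Lipschitz} are phrased, and routing the estimate through $\|\cdot\|_1$ is exactly what makes the defining constraint $\sum_i \widetilde{x}_i \le \ell$ directly usable. One could sharpen the bound to $\sqrt{2\ell}$ by additionally using $\widetilde{x}_i \in [0,1]$ (so that $(\widetilde{x}_i - \widetilde{y}_i)^2 \le |\widetilde{x}_i - \widetilde{y}_i|$), but the looser constant $2\ell$ is all that is needed when plugging $D = 2\ell$ and $L = GM^2\sqrt{n}$ into the $\calO(DL\sqrt{T})$ regret bound to obtain the $\calO(\ell\, G\, M^2 \sqrt{nT})$ rate used in Lemma~\ref{lem:alg-oco-no-regret}.
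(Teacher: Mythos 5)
Your proposal is correct and follows exactly the paper's argument: bound the $\ell_2$ distance by the $\ell_1$ distance, apply the triangle inequality, and use $\|\widetilde{\bx}\|_1 \leq \ell$ for every point of $\tildecalX_\ell$. No differences worth noting.
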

\begin{proof}
    For the diameter $D$ of the set $\tildecalX_\ell$ it holds that
    \begin{align*}
        D &= \max_{\widetilde{\bx}_1, \widetilde{\bx}_2 \in \tildecalX_\ell} \|\widetilde{\bx}_2 - \widetilde{\bx}_1\|_2\\
        &\leq \max_{\widetilde{\bx}_1, \widetilde{\bx}_2 \in \tildecalX_\ell} \|\widetilde{\bx}_2 - \widetilde{\bx}_1\|_1\\
        &\leq \max_{\widetilde{\bx}_1, \widetilde{\bx}_2 \in \tildecalX_\ell} \|\widetilde{\bx}_2\|_1 + \|\widetilde{\bx}_1\|_1 \\
        &\leq 2\ell,
    \end{align*}
    where for the last inequality we used that $\|\widetilde{\bx}\|_1 \leq \ell,\ \forall \widetilde{\bx} \in \tildecalX_\ell$.
\end{proof}

We are now ready to prove Lemma~\ref{lem:alg-oco-no-regret}, which we restate below for convenience.

\algOCOnoRegret*

\begin{proof}[Proof of Lemma~\ref{lem:alg-oco-no-regret}]
    The fractional reward functions $\widetilde{F}$ are concave (Lemma~\ref{lemma: F concavity})
    and Lipschitz (Lemma~\ref{lem:F-Lipschitz}) and the set of actions, $\tildecalX_\ell$
    has diameter at most $2\ell$ (Lemma~\ref{lem:X_ell_diameter}). Therefore, by the standard regret analysis (Theorem~\ref{thm:oco-no-regret})
    we get that
    $$
    \max_{\widetilde{\bx} \in \tildecalX_\ell} \sum_{t = 1}^T \widetilde{F}_t(\widetilde{\bx}) - \sum_{t = 1}^T \widetilde{F}(\widetilde{\bx}_t)
    = O(\ell \indexsetbound \parameterbound^2 \sqrt{n} \sqrt{T}).
    $$

    The proof of the lemma is completed by noticing that
    $$
    \max_{\widetilde{\bx} \in \calX_\ell} \sum_{t = 1}^T F_t(\widetilde{\bx}) \leq \max_{\widetilde{\bx} \in \calX_\ell} \sum_{t = 1}^T \widetilde{F}_t(\widetilde{\bx}) \leq \max_{\widetilde{\bx} \in \tildecalX_\ell} \sum_{t = 1}^T \widetilde{F}_t(\widetilde{\bx}),
    $$
    because $\widetilde{F}(\widetilde{\bx}) = F(\widetilde{\bx}),\ \forall \widetilde{\bx} \in \calX_\ell$ and $\calX_\ell \subset \tildecalX_\ell$.
\end{proof}
\section{Supergradient Computation}
\label{app: supergradient}
In this section, we show how to efficiently compute the supergradient of $\widetilde{F}(\widetilde{\bx})$ for any $\widetilde{\bx}$.
See Algorithm~\ref{alg: supergradient} for the pseudocode.
In particular, we are going to prove Lemma~\ref{lem:supergradients-poly-time}, which we restate below for convenience.

\supergradientsPolyTime*
\begin{proof}[Proof of Lemma~\ref{lem:supergradients-poly-time}]
Consider the concave program:
\begin{align*}
\widetilde{F}(\widetilde{\bx}) = \max_{\widetilde{\by} \in \tildecalY_\calM(\widetilde{\bx})} \sum_{j \in \mathcal{C}} c_j \min \left\{ b_j,\ \widetilde{\by} \cdot \bw_j \right\},
\end{align*}
where the feasible region is given by $ \tildecalY_\calM(\widetilde{\bx}) = \left\{ \widetilde{\by} \in [0,1]^n: \widetilde{\by} \in \calP(\calM),\ \widetilde{\by} \leq \widetilde{\bx} \right\}$. Let $\widetilde{\by}^\star$ be an optimal solution to this program. 
We will show that the dual variables $\boldsymbol{\lambda} = (\lambda_1, \dots, \lambda_n)$ associated with the constraints $\widetilde{y}_i \leq \widetilde{x}_i$ are a valid supergradient of $\widetilde{F} $ at $\widetilde{\bx}$.

First, we linearize the concave objective. For each $ j \in \mathcal{C} $, we introduce an auxiliary variable $ z_j $ and enforce $ z_j \leq b_j $ and $ z_j \leq \widetilde{\by} \cdot \bw_j $. This yields the equivalent linear program:
\[
\begin{aligned}
\max_{\widetilde{\by}, \bz} \quad & \sum_{j \in \mathcal{C}} c_j z_j \\
\text{s.t.} \quad & z_j \leq b_j, \quad \forall j \in \mathcal{C}, \\
& z_j \leq \widetilde{\by} \cdot \bw_j, \quad \forall j \in \mathcal{C}, \\
& \widetilde{\by} \in \calP(\calM), \\
& \widetilde{\by} \leq \widetilde{\bx}.
\end{aligned}
\]

Next, we show how to solve the above linear program using the ellipsoid method.
The main requirement is to construct a separation oracle for the feasible set.
Crucially, for any $ \widetilde{\by} \in [0,1]^n $, we can check membership in $ \calP(\calM) = \{\widetilde{\by} \in [0,1]^n: \sum_{i \in S} \widetilde{y}_i \leq \rank(S),\ \forall S\subseteq \calV\}$ by solving the submodular maximization problem
\[
\max_{S \subseteq [n]} \left( \sum_{i \in S} \widetilde{y}_i - r(S) \right),
\]
where $ r(\cdot) $ is the matroid rank function. This can be done efficiently via the greedy algorithm~\citep{edmonds2003submodular,schrijver2003combinatorial}.
If $\widetilde{\by} \notin \calP(\calM)$, we return the most violated constraint.
The rest of the constraints $z_j \leq b_j, z_j \leq \widetilde{\by} \cdot \bw_j, \widetilde{\by} \leq \widetilde{\bx}$ can be explicitly checked in polynomial time.

We solve the linearized primal LP using the ellipsoid method. 
During the process, the ellipsoid method generates only a polynomial number of constraints from the exponential family that are necessary to define the optimal face of the feasible region. 
We then reconstruct a reduced LP consisting of these constraints along with the explicit ones (i.e., $ z_j \leq b_j $, $ z_j \leq \widetilde{\by} \cdot \bw_j $, and $ \widetilde{\by} \leq \widetilde{\bx} $).
This approach follows a standard technique for solving LPs with exponentially many constraints via the ellipsoid method and extracting a reduced LP from the active constraints~\citep{grotschel2012geometric,schrijver1998theory}.

This reduced LP \emph{has polynomial size and can be solved in polynomial time} using an interior-point method \citep{nesterov1994interior, boyd2004convex}, or efficiently in practice using the simplex method. 
Both types of solvers return the primal and dual optimal solutions. 
In particular, we obtain the vector of dual variables $ \boldsymbol{\lambda} = (\lambda_1, \dots, \lambda_n) $ corresponding to the constraints $ \widetilde{y}_i \leq \widetilde{x}_i $.

These dual variables constitute a valid supergradient of $ \widetilde{F} $ at $ \widetilde{\bx}$~\citep{boyd2004convex}. 
That is, for any $ \widetilde{\bx}' \in \mathbb{R}^n$,
\[
\widetilde{F}(\widetilde{\bx}') \leq \widetilde{F}(\widetilde{\bx}) + \left\langle \boldsymbol{\lambda},\ \widetilde{\bx}' - \widetilde{\bx} \right\rangle.
\]
We give the pseudocode of the above procedure in Algorithm~\ref{alg: supergradient}.
\end{proof}

\begin{algorithm}[!t]
\caption{Compute Supergradient of $\widetilde{F}(\widetilde{\bx})$}
\label{alg: supergradient}
\begin{algorithmic}[1]
\Require Point $\widetilde{\bx} \in [0,1]^n$, matroid $\mathcal{M}$

\State Define the LP:
\[
\begin{aligned}
\max_{\widetilde{\by}, \mathbf{z}} \quad & \sum_{j \in \mathcal{C}} c_j z_j \\
\text{s.t.} \quad & z_j \leq b_j,\quad z_j \leq \widetilde{\by} \cdot \bw_j \quad \forall j \in \mathcal{C}, \\
& \widetilde{\by} \in \calP(\mathcal{M}), \quad \widetilde{\by} \leq \widetilde{\bx}
\end{aligned}
\]

\If{$\calP(\mathcal{M})$ has polynomially many constraints}
    \State Solve the LP using a solver that returns both primal and dual solutions (e.g., interior-point)
\Else
    \State Solve the LP using the ellipsoid method with a separation oracle for $\calP(\mathcal{M})$
    \State Record the violated constraints identified by the oracle during optimization
    \State Construct a reduced LP with:
        \begin{itemize}
            \item the recorded matroid constraints,
            \item the explicit constraints: $z_j \leq b_j$, $z_j \leq \widetilde{\by} \cdot \bw_j$, and $\widetilde{\by} \leq \widetilde{\bx}$
        \end{itemize}
    \State Solve the reduced LP using a solver that provides access to the dual variables
\EndIf
\State \Return dual variables $\boldsymbol{\lambda} = (\lambda_1, \dots, \lambda_n)$ corresponding to constraints $\widetilde{y}_i \leq \widetilde{x}_i$
\end{algorithmic}
\end{algorithm}

For completeness, we show that the vector of dual variables $\boldsymbol{\lambda}$ is a valid supergradient. 
This result is proven in Chapter 5.6 of~\citet{boyd2004convex}, which establishes the correspondence between dual variables and subgradients in parametric convex optimization.
Consider the following optimization problem:
\begin{align*}
\begin{aligned}
    \max_{\widetilde{\by}} \quad & f(\widetilde{\by}) \\
    \text{s.t.} \quad 
          & \widetilde{\by} \in \calP(\calM) \\
          & \widetilde{\by} \leq \widetilde{\bx}\\
          & \mathbf{0} \leq \widetilde{\by} \leq \mathbf{1}.
\end{aligned}
\end{align*}

The Lagrangian is:
\begin{align*}
\calL(\widetilde{\by}, \blambda, \bnu, \bmu) = f(\widetilde{\by}) + \sum_i \lambda_i (x_i - y_i) + \sum_i \nu_i (1 - y_i) + \sum_{\mathcal{S} \subseteq [n]} \mu_\mathcal{S} \mathbf{1}_\calS(\widetilde{\by}),
\end{align*}
where $\mathbf{1}_\calS(\widetilde{\by})$ is an indicator function that is $1$ if $\widetilde{\by} \in \calS$ and $0$ otherwise.

The dual function is:
\begin{align*}
g(\blambda, \bnu, \bmu) = \min_{\widetilde{\by}} \calL(\widetilde{\by}, \blambda, \bnu, \bmu).
\end{align*}
The optimal value of the primal problem is:
\begin{align*}
\widetilde{F}(\widetilde{\bx}) = \max_{\blambda \geq 0, \bnu \geq 0, \bmu \geq 0} g(\blambda, \bnu, \bmu).
\end{align*}

Now, suppose that we perturb $\widetilde{\bx}$ to $\widetilde{\bx}^\prime = \widetilde{\bx} + \bu$. The perturbed optimization problem becomes:
\begin{align*}
\begin{aligned}
    \max_{\widetilde{\by}} \quad & f(\widetilde{\by}) \\
    \text{s.t.} \quad 
          & \widetilde{\by} \in \calP(\calM) \\
          & \widetilde{\by} \leq \widetilde{\bx}^\prime = \widetilde{\bx} + \bu\\
          & \mathbf{0} \leq \widetilde{\by} \leq \mathbf{1}.
\end{aligned}
\end{align*}

The Lagrangian for the perturbed problem is:
\begin{align*}
\mathcal{L}'(\widetilde{\by}, \blambda, \bnu, \bmu) = f(\widetilde{\by}) + \sum_i \lambda_i \left( (x_i + u_i) - y_i \right) + \sum_i \nu_i (1 - y_i) + \sum_{\mathcal{S} \in [n]} \mu_\mathcal{S} \mathbf{1}_\mathcal{S}(y),
\end{align*}
which simplifies to:
\begin{align*}
\calL'(\widetilde{\by}, \blambda, \bnu, \bmu) = \calL(\widetilde{\by}, \blambda, \bnu, \bmu) + \sum_i \lambda_i u_i.
\end{align*}

Minimizing over $\widetilde{\by}$, the perturbed dual function is:
\begin{align*}
g'(\blambda, \bnu, \bmu) = g(\blambda, \bnu, \bmu) + \sum_i \lambda_i u_i.
\end{align*}

The optimal value of the perturbed problem is:
\[
\widetilde{F}(\widetilde{\bx}^\prime) = \max_{\blambda \geq 0, \bnu \geq 0, \bmu \geq 0} g'(\blambda, \bnu, \bmu) = \max_{\blambda \geq 0, \bnu \geq 0, \bmu \geq 0} \left( g(\blambda, \bnu, \bmu) + \sum_i \lambda_i u_i \right).
\]
Thus, we get the inequality:
\begin{align*}
\widetilde{F}(\widetilde{\bx}^\prime) \leq \widetilde{F}(\widetilde{\bx}) + \sum_i \lambda_i u_i.
\end{align*}

The dual variables $\blambda = (\lambda_1, \lambda_2, \dots, \lambda_n)$ are supergradients of the function  $\widetilde{F}(\widetilde{\bx})$ since they satisfy the following inequality for any $\widetilde{\bx}$ and $\widetilde{\bx}^\prime$:
\begin{align*}
\widetilde{F}(\widetilde{\bx}^\prime) \leq \widetilde{F}(\widetilde{\bx}) + \lambda^\intercal (\widetilde{\bx}^\prime - \widetilde{\bx}),
\end{align*}
where we substituted $\bu = \widetilde{\bx}^\prime - \widetilde{\bx}$.

\section{Experiments} \label{app: experiments}

\subsection{Datasets and Problem Instances}\label{subsec:datasets}
We run experiments on six datasets: datasets \Wikipedia, \Images, and \MovieRec\ were taken from the experimental setups of~\citet{balkanski2016learning, krause2017}; an influence maximization dataset based on the \emph{Zachary's Karate Club} graph \cite{zachary1977information}, \Influence; and two synthetic datasets: a team formation dataset, \Teamformation, and a weighted coverage dataset, \Coverage.

As in {\problemname}, the objective is to select, in an online fashion, a set of $\ell$ items corresponding to articles, images, movies, nodes, experts, or sets and receive a reward equal to that of the best
$k$ among them.
For {\Wikipedia}, {\Images}, and {\MovieRec}, $\ell$ and $k$ were selected based on the experiments of prior work~\citep{balkanski2016learning, krause2017}.
For {\Influence} and {\Teamformation} the choice was arbitrary, while for the {\Coverage} dataset the choice was adversarial to the \Onestagelong algorithm.

In all cases, we use $n$ to denote the dimension of the decision space, $m$ the number of distinct objective functions, and $T$ the time horizon.
For \Wikipedia, \Images, and \MovieRec, we construct a sequence of $T$ objective functions by sampling $T = 4 \cdot m$ functions uniformly at random from the $m$ available functions.
For the remaining datasets (\Influence, \Teamformation, and \Coverage) we set $T = m$ and use each of the $m$ functions exactly once, in sequence.

We summarize the basic information for each dataset in Table~\ref{tab:datasets}.

\spara{Wikipedia.} In the \texttt{Wikipedia} dataset, the goal is to maintain a set of articles for Machine Learning that are highly relevant for a diverse set of subtopics (e.g., Markov Models, Neural Networks).
The function $f_i(S)$ measures how relevant the set of articles $S$ is to the subtopic $i$.
In total, we have $n = 407$ articles and $m = 22$ subtopics.
We define $f_i(S)$ as the number of Wikipedia pages in the subtopic $i$ that have a link to at least one article in $S$.
Note that $f_i$ is a \emph{coverage} function.
We set $f_t = f_i$ by sampling a subtopic $i$ uniformly at random. 
We consider a total of $T = 4 \cdot m = 88$ time steps. 

\spara{Images.} In the \texttt{Images} dataset, the goal is to maintain a set of images that contain a diverse set of visual elements.
We used the VOC2012 dataset~\cite{everingham2011pascal} where each image contains a set of categories (e.g., chair, car, bird, etc.). 
For each image $i$ we define $f_i(S)$ as the maximum similarity (minimum $\ell_2$ distance) between the image and an image in $S$.
This objective is named Clustered Facility Location in~\citet{balkanski2016learning} and Exemplar-Based Clustering in~\citet{krause2017}.
We have a total of $m = 10$ different functions $f_i$.
We set $f_t = f_i$ by sampling an image $i \in [n]$ uniformly at random. We consider a total of $T = 4 \cdot m = 40$ time steps.

\spara{MovieRec.} In the {\MovieRec} dataset~\cite{harper2015movielens}, we consider a movie recommender system.
We use the \emph{MovieLens} dataset that contains movies from different genres and user ratings. 
The goal is to maintain a small set of movies so that every user can find enjoyable movies in this set.
For each user $i$, we define the function $f_i(S)$ as follows. 
Let $r_{ik}$ be the rating given by user $i$ to movie $k$.
Let $R_{ij}(S) = \{\max_{k \in S} r_{ik} \mid k \text{ in genre } j\}$ be the highest rating that user $i$ has given to a movie in genre $j$ that belongs in $S$.
Let $w_{ij}$ be the percent of movies in genre $j$ that user $i$ has rated out of all their rated movies.
Then,
\begin{align*}
    f_i(S) = \sum_{j \in \text{Genres}} w_{ij} R_{ij}(S).
\end{align*}
Note that $f_i$ is a \emph{facility location} function ~\citep{salem2024online, karimi2017stochastic}.
We have a total of $m = 100$ different functions $f_i$.
We set $f_t = f_i$ by sampling a user $i \in [n]$ uniformly at random. 
We consider a total of $T = 4 \cdot m = 400$ time steps.

\spara{Team Formation.} In the {\Teamformation} dataset, we consider a roster selection problem. The goal is to maintain a set of $\ell$ individuals (i.e., a roster), so that when a new task $f_t$ arrives, there exists a subset of $k$ individuals within the roster that can handle the task.
In order to model the performance of a team when faced with a task $f_t$, we use a quadratic function of the form
\begin{align*}
    f_t(\by) = \bh_t^\intercal \by + \frac{1}{2} \by^\intercal \bH_t \by.
\end{align*}
The linear/modular part of the function captures the strength of each team member, and the quadratic
part captures the complementarity/overlaps between their skills.
Note that $f_t$ is monotone and submodular, iff $\bH \leq \bm{0}$ and $\bh + \bH \by \geq 0$, for all $\by \in \calY_\calM(\bx)$. 
We generate $h_t$ by sampling each coordinate from a Gaussian distribution with mean $\mu = 30$ and standard deviation $\sigma = 20$.
Then, we generate $H_t$ by sampling each entry from a Gaussian distribution with mean $\mu = -20$ and standard deviation $\sigma = 10$, while ensuring the all entries are non-positive.
If the constraint $\bh_t + \bH_t \by \geq 0, \forall \by \in \calY_\calM(\bx)$ is violated, we shrink the entries of $H_t$ until it is satisfied.
We construct a total of $m=50$ functions and consider a total of $T = m = 50$ time steps.

\spara{Influence.} In the {\Influence} dataset, we address a two-stage influence maximization problem using \emph{Zachary's Karate Club} as the underlying graph. 
The dataset assumes the presence of $5$ distinct topics (e.g., sports, politics, etc.), with each node in the graph being assigned a subset of these topics. 
The objective is to maintain a set of $\ell$ nodes such that, at any time step $t$, it is possible to select a subset of $k$ nodes from this set to maximize the influence on a specific topic.
We generate a total of $T = 100$ random cascades by sampling each edge of the graph with probability $p = 0.25$. 
In addition, a topic $i \in [5]$ is selected uniformly at random. 
A node can influence other nodes only within its connected component in the sampled graph, provided that it has been assigned the selected topic. 
Note that the influence function is a weighted coverage function, since every node covers all the nodes in its connected component \citep{kempe2003maximizing}.

\spara{HotpotQA.} The \HotpotQA\footnote{\url{https://huggingface.co/datasets/BeIR/hotpotqa}} \citep{hotpotqa} dataset consists of approximately 97K questions along with ground-truth
answers and Wikipedia articles that support the ground-truth answer. We create an \problemname instance
with the goal of maintaining a small collection of Wikipedia articles that can be used to answer questions
from the \HotpotQA dataset. The universe of elements is chosen to be the union of all Wikipedia articles that
are supporting any question and its size is approximately $n = 111\text{K}$ articles. For every question $q$ in the \HotpotQA dataset, let $S(q) \subseteq [n]$ be the set of its supporting articles and $e_{S(q)} \in \{0,1\}^n$ be its characteristic vector.
We associate $q$ with a coverage-like \WTP function $f_q$ defined as:
$$
f_q(\bx) =\min(1, e_{S(q)}\cdot x).
$$
For our experiment, we create an arbitrary ordering of the above \WTP functions and sample
$T = 500$ functions in the following way: with probability $0.5$ we pick a uniformly random function (with replacement) from the first 30 functions of the order, otherwise we pick a uniformly random function (with replacement) from the rest.

\spara{Coverage.} 
\label{app: coverage} In the {\Coverage} dataset, we consider a weighted coverage problem with $n = 100$ sets $S_1, \dots, S_n$. 
We set $\ell=10$ and $k=1$.
Consider the $\ell$ sets $S_1, S_2, \dots, S_\ell$ defined as follows. 
The first set covers the first $\ell$ elements of the ground set and receives
reward $M$ for every covered element:
\begin{align*}
    f_1(\bx) &= \sum_{i=1}^{\ell} M \min(1, \be_i \cdot \bx),
\end{align*}
where $M = 100$ is the value of covering an element of the groundset and $e_i$ is an indicator vector with $1$ at coordinate $i$ and $0$ everywhere else.

The rest of the sets $i = 2, \dots, \ell$, cover one, different, element each:
\begin{align*}
    f_2(\bx) &= M \min(1, \be_{\ell+1} \cdot \bx), \\ 
             &\quad \vdots \\
    f_\ell(\bx) &= M \min(1, \be_{2\ell} \cdot \bx), 
\end{align*}

 The online sequence of functions follows the pattern
 $f_1, f_2, ..., f_\ell$ and then cycles back to $f_1$. 
 We consider a total of $50$ cycles, which results in $T = \ell \cdot 50 = 500$ iterations.
 
 This dataset is constructed in such a way so that \Onestagelong\ algorithm performs poorly.
 Specifically, when \Onestagelong receives $f_1$ it aggressively updates all coordinates $x_1, ..., x_\ell$ without taking into account that only one element can eventually be used for $f_1$
 due to the second-stage cardinality constraint of $k=1$.
 Eventually, \Onestagelong converges to a vector having equal weight on the first $2\ell$ coordinates and all other coordinates equal to $0$, i.e.,
 \begin{align*}
 \widetilde{\bx}_T \approx ( \underbrace{\sfrac{1}{2}, \sfrac{1}{2}, \dots, \sfrac{1}{2}}_{2\ell}, 0, \dots, 0 ).
\end{align*}
 After rounding the above vector with Randomized Pipage Rounding we get a restricted dataset
 that has an expected reward of $M/2$ for the functions $f_2, \dots, f_\ell$ and approximately
 $M$ for $f_1$.

 The optimal solution in hindsight, however, only puts weight $\ell$ coordinates, using exactly one element for each second-stage objective, due to the inner constraint $k=1$.
 That is, the optimal solution in hindsight is
 \begin{align*}
 \widetilde{\bx}^\star = ( \underbrace{0, \dots, 0}_{\ell-1}, 
 \underbrace{1, \dots, 1}_{\ell},
 0, \dots, 0 )
\end{align*}
and achieves value $M$ for every function.

On the other hand, {\RAOCO} eventually converges to a vector that has a uniform mass of $1$ element spread across the elements of function $f_1$ and also takes $1$ more element for
every other function $f_2, \dots, f_\ell$:
 \begin{align*}
 \widetilde{\bx}_T \approx \left( \underbrace{\frac{1}{\ell}, \dots, \frac{1}{\ell}}_{\ell}, 
 \underbrace{1, \dots, 1}_{\ell-1},
 0, \dots, 0 \right).
\end{align*}
Rounding the above vector with Randomized Pipage Rounding yields a restricted ground set that
has expected value $M$ for every function.

The above also explain the result we observe in the \Coverage experiment in Figure~\ref{fig: experiments}. As the time evolves, {\RAOCO} improves and converges to the offline optimum in hindsight achieving cumulative average reward approximately $M = 100$, while the cumulative average of \Onestagelong remains constant at approximately $M/2 = 50$.

\subsection{Algorithms}
\label{app: algorithms}
\spara{Online Algorithms.}
We provide several implementations of the \texttt{RAOCO} algorithm (Algorithm~\ref{alg: raoco}). Implementations differ only in the {\OCO} policy used in the fractional domain. 
In particular, we have implemented {\RAOCO[OGA]}, {\RAOCO[FTRL-L2]}, {\RAOCO[FTRL-H]}, where we use \emph{Online Gradient Ascent (OGA)}, \emph{Follow the Regularized Leader (FTRL)} with $L_2$ regularization and entropy regularization, respectively.
For each algorithm and dataset, we tested different values for the learning rate $\eta \in \{0.0001, 0.001, 0.01, 0.1, 1, 10\}$ and reported the best results. 
In Table~\ref{tab: learning_rates}, we report the $\eta$ we used for each algorithm and dataset.

\begin{table}[!t]
\centering
\begin{tabular}{lcccc}
\toprule
\textbf{Dataset}  & \RAOCO[OGA] & \RAOCO[FTRL-L2] & \RAOCO[FTRL-H] & \Onestage \\
\midrule
\Wikipedia        & 0.1     & 0.1     & 1      & 1          \\
\Images           & 10      & 10      & 10     & 0.1        \\
\MovieRec         & 0.001   & 0.001   & 0.01   & 1          \\
\Teamformation    & 0.01    & 1       & 1      & 0.1        \\
\Influence        & 0.01    & 0.01    & 0.01   & 0.01       \\
\Coverage         & 0.01    & 0.001   & 0.01   & 0.1        \\
\bottomrule
\end{tabular}
\caption{Optimal learning rates for each combination of dataset and algorithm.}
\label{tab: learning_rates}
\end{table}

\spara{Online Competitors.} We implement the following online competitors:
{\Random} selects a set of $\ell$ elements uniformly at random at each timestep;
{\Onestagelong} (\Onestage) runs {\RAOCO[OGA]} as if the second stage cardinality constraint was $k = \ell$, i.e., (single stage) online submodular maximization. 
This is the algorithm presented by~\citet{salem2024online} for (one-stage) online submodular maximization.

\spara{Offline Benchmarks.}
As offline benchmarks, we implemented the {\ContOptlong} (\ContOpt) algorithm  proposed by ~\citet{balkanski2016learning}, and the {\RepGreedylong} (\RepGreedy) algorithm proposed by~\citet{krause2017}.
For completeness, we present the algorithms here.

{\ContOptlong}~\citep{balkanski2016learning} follows the relax-then-round paradigm. First, they solve the following program:
\begin{align*}
\begin{aligned}
    \max_{\widetilde{\bx}, \widetilde{\by}_1, \dots, \widetilde{\by}_T} \quad & \sum_{t=1}^T \widetilde{f}_t(\widetilde{\by}_t) \\
    \text{s.t.} \quad 
          & \widetilde{\by}_t \in \calP(\calM), \forall t \in [T] \\
          & \widetilde{\by}_t \leq \widetilde{\bx}\\
          & \mathbf{0} \leq \widetilde{\by}_t \leq \mathbf{1}, \forall t \in [T].
\end{aligned}
\end{align*}
Let $(\widetilde{\bx}^\star, \widetilde{\by}_1^\star, \dots, \widetilde{\by}_T^\star)$ be the solution of the program.
Then, they reduce $\widetilde{\bx}^\star$ by a factor $(1 - \epsilon^\prime)$, where $\epsilon^\prime = 1 / k^{1/2-\epsilon}$, for $0 < \epsilon < 1/2$.
We set $\epsilon = 1/4$.
Next, they round each coordinate of $(1 - \epsilon^\prime) \widetilde{\bx}^\star$ independently to get an integral solution $\bx$:
\begin{align*}
    x_i = \begin{cases}
        1, \text{w.p. } \tilde{x}_i^\star \\
        0, \text{otherwise}
    \end{cases}
    , \forall i \in [n].
\end{align*}
If $\bx$ satisfies the cardinality constraint, we return $\bx$.
Otherwise, if $\bx$ violates the cardinality constraint ($\sum_{i=1}^n x_i \leq \ell$), we return a uniformly random solution.

{\RepGreedylong} \citep{krause2017} is based on a local search approach. First, we start with a series of useful definitions.
Let $\Delta_t(e, A) = f_t(A\cup \{e\}) - f_t(A)$ denote the marginal gain of adding $e$ to the set $A$ if we consider the function $f_t$.
Let $I(e, A) = \{ e^\prime \in A \mid A \cup \{e\} \setminus \{e^\prime\} \in \mathcal{I} \}$ be the set of all elements in $A$ that can be replaced by $e$ without violating the matroid constraint.
Let
\begin{align*}
\nabla_i(e, A) =
    \begin{cases}
        \Delta_t(e, A), & \text{if } A \cup \{e\} \in \mathcal{I}\\
        \max\left(0, \max_{e^\prime \in I(e, A)} f_i(A \cup \{e\} \setminus \{e^\prime\}) - f_i(A) \right), &\text{otherwise}
    \end{cases}
\end{align*}
be the gain of either inserting $e$ into $A$ or replacing $e$ with one element of $A$ while keeping $A$ an independent set.
Let 
\begin{align*}
    \text{Rep}_t(e, A) =
    \begin{cases}
        \emptyset & \text{if } A \cup \{e\} \in \mathcal{I} \\
        \arg\max_{e^\prime \in I(e, A)} f_t(A \cup \{e\} \setminus \{e^\prime\}) - f_t(A) & \text{otherwise}
    \end{cases}
\end{align*}
be the element that should be replaced by $e$ to maximize the gain and stay independent.
Using the above definitions, we give the pseudocode of {\RepGreedylong} in Algorithm~\ref{app: rep-greedy}.

\begin{algorithm}[!t]
\caption{\RepGreedylong}
\label{app: rep-greedy}
\begin{algorithmic}[1]
\State $S \gets \emptyset$
\State $B_t \gets \emptyset$ for all $1 \leq t \leq T$
\For{$j = 1$ to $\ell$}
    \State $e^\star \gets \argmax_{e \in 
    \calV} \sum_{t=1}^T \nabla_t(e, B_t)$
    \State $S \gets S \cup \{e^\star\}$
    \For{$t = 1$ to $T$}
        \If{$\nabla_t(e^\star, B_t) > 0$}
            \State $B_t \gets B_t \cup \{e^\star\} \setminus \text{Rep}_t(e^\star, B_t)$
        \EndIf
    \EndFor
\EndFor
\State \textbf{return} $S$
\end{algorithmic}
\end{algorithm}

\spara{Metrics.}
For the online algorithms we measure the cumulative average reward.
\begin{align*}
    C_t = \frac{1}{t} \sum_{\tau = 1}^t F_t(\bx_t).
\end{align*}
In order to calculate $F_t(\bx_t)$ we use Gurobi's  integer linear program solver. 
We also compute the optimal (integral) solution in hindsight ({\Opt}):
\begin{align*}
    F^\star = \frac{1}{T}\max_{\bx \in \calX_\ell} \sum_{t=1}^T F_t(\bx).
\end{align*}
We repeat each experiment $5$ times and report the average $C_t$ along with one standard deviation above and below the average depicted via the shaded regions.
The sources of randomness are: (i) Randomized Pipage rounding, and (ii) sampling of the objective functions $f_t$ for datasets \Wikipedia, \Images, and \MovieRec.

\spara{Setup.}
All the experiments were ran on an Apple M2 Macbook with 16GB RAM.
The code is written in Python 3.11.2.
We make the code and data used for the experiments publicly available on GitHub.
Each experiment runs efficiently on a personal laptop: every problem instance could be computed until the time horizon $T$ in less than one minute.


\section{Additional Experimental Results} \label{app: additional}
\subsection{Exploring the $k$ vs $C_T$ tradeoff}
In this section, we explore the tradeoff between $k$ and $C_T$, where $T$ is defined
for each dataset at Table~\ref{tab:datasets}. Recall that when $\ell = k$, the problem becomes equivalent to one-stage online submodular maximization.
We focus on the following algorithms: {\RAOCO[OGA]}, {\Onestagelong}.
We also report the value of the optimal (integral) solution in hindsight {\Opt}.
In Figure~\ref{fig: k vs C_t}, for each dataset, we show the cumulative average reward $C_T = \frac{1}{T} \sum_{t=1}^T F_t(\bx_t)$ of each algorithm for each value of $k$.
For each dataset and each $k$, we run each experiment $5$ times and report the average $C_T$ as well as the standard deviation.
For each dataset, we fix $\ell$ to the value reported in Table~\ref{tab:datasets}.

We observe that for most datasets, varying $k$ does not significantly affect the performance gap between the two algorithms. In many cases, the two algorithms perform similarly, with the \MovieRec  dataset being a notable exception: {\RAOCO} consistently outperforms \Onestagelong and its performance remains close to that of the optimal solution in hindsight.

However, in the \Coverage dataset, we observe that as $k$ increases, the gap between the two algorithms narrows. This is expected due to the dataset's construction: \Onestagelong tends to greedily select multiple elements for the first coverage function $f_1$, whereas {\RAOCO} learns to pick one element per function. Consequently, for small values of $k$, many of the elements selected by \Onestagelong cannot be used in the second stage for $f_1$. As $k$ grows, these elements become usable, and the performance of \Onestagelong begins to recover.

\begin{figure*}[!t]
    \centering
    \begin{subfigure}{0.3\textwidth}
        \centering        \includegraphics[width=\textwidth]{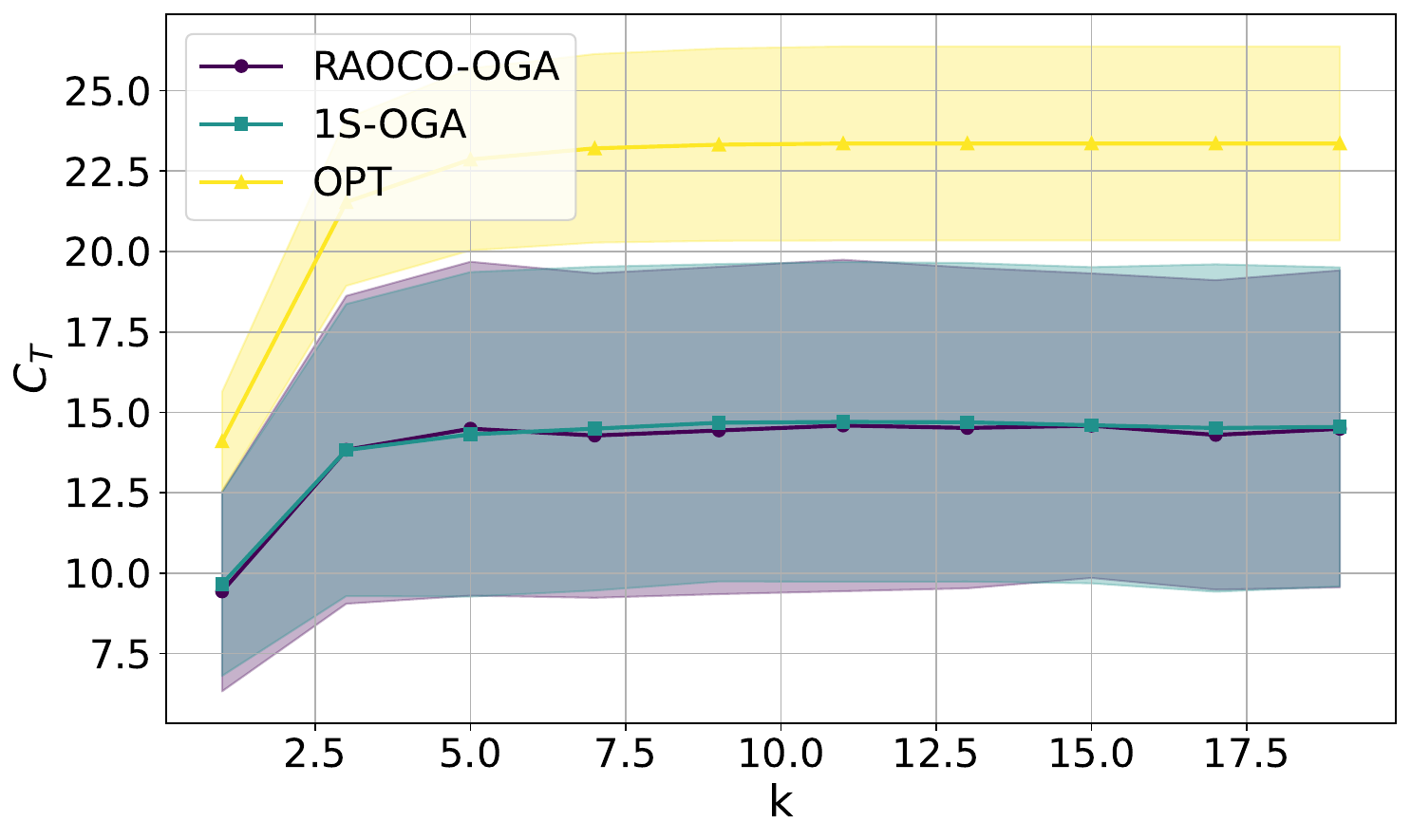}
        \caption{{\Wikipedia}}
    \end{subfigure}
    \hfill
    \begin{subfigure}{0.3\textwidth}
        \centering
    \includegraphics[width=\textwidth]{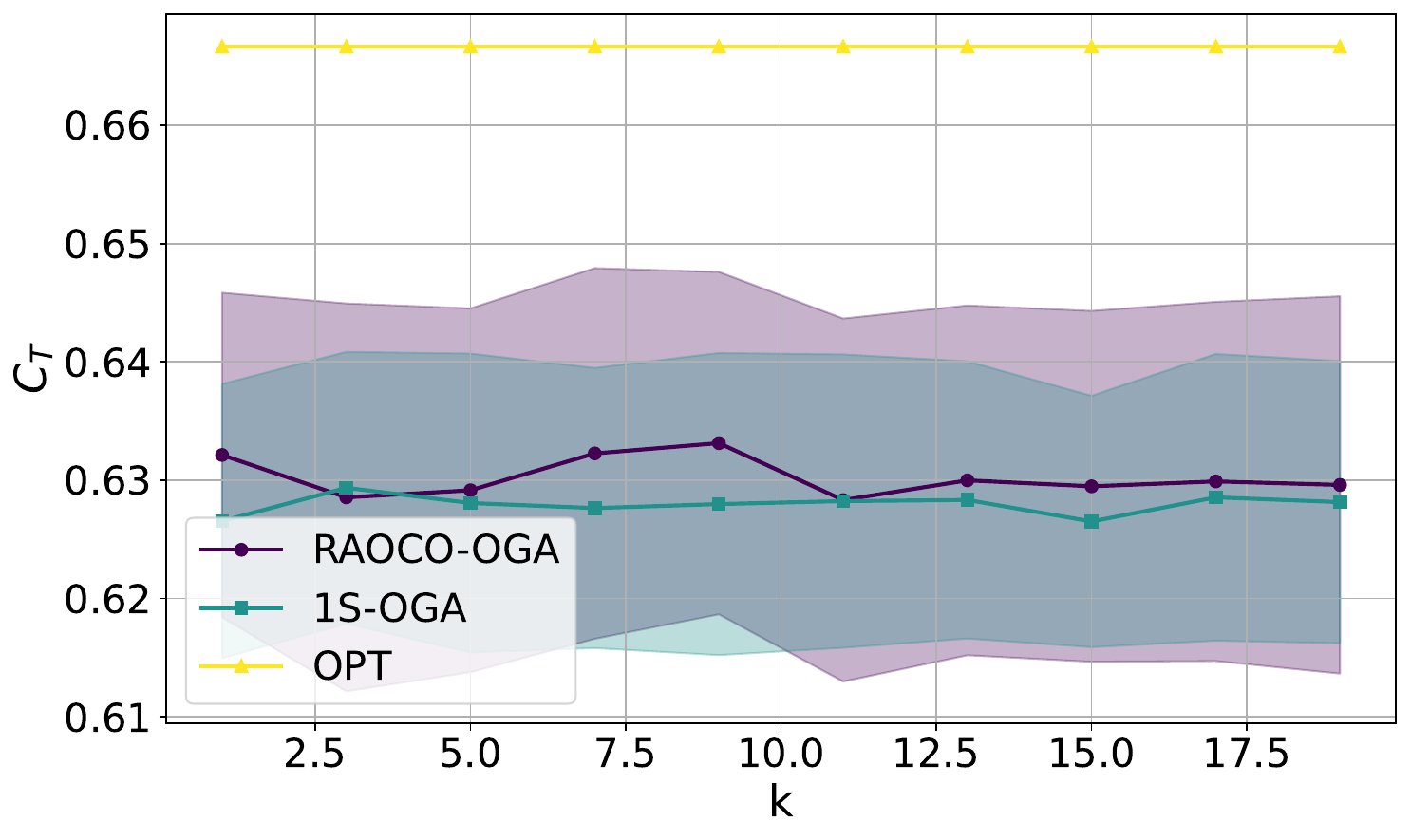}
        \caption{{\Images}}
    \end{subfigure}
    \hfill
    \begin{subfigure}{0.3\textwidth}
        \centering
        \includegraphics[width=\textwidth]{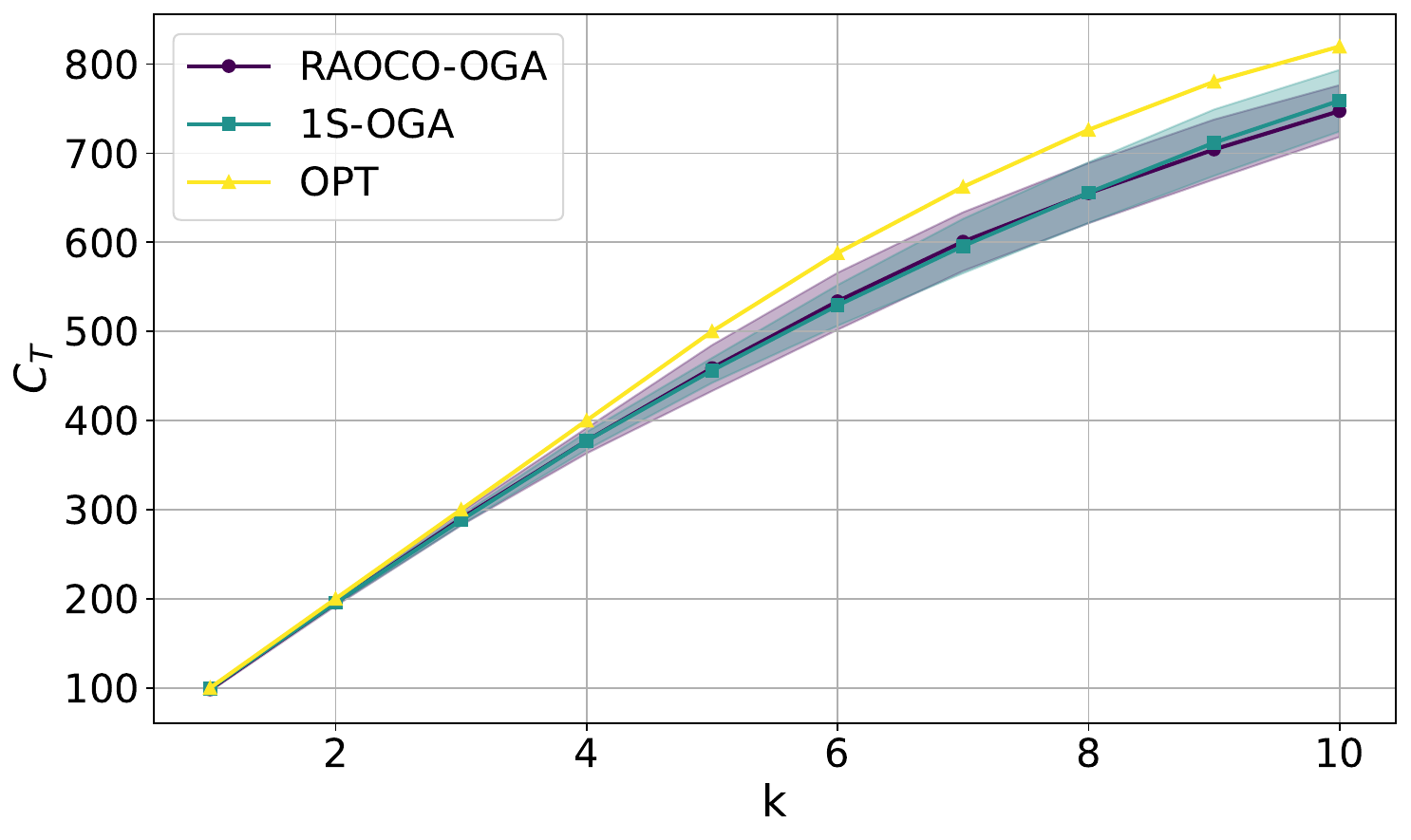}
        \caption{{\Teamformation}}
    \end{subfigure}
    \hfill
    \begin{subfigure}{0.3\textwidth}
        \centering
        \includegraphics[width=\textwidth]{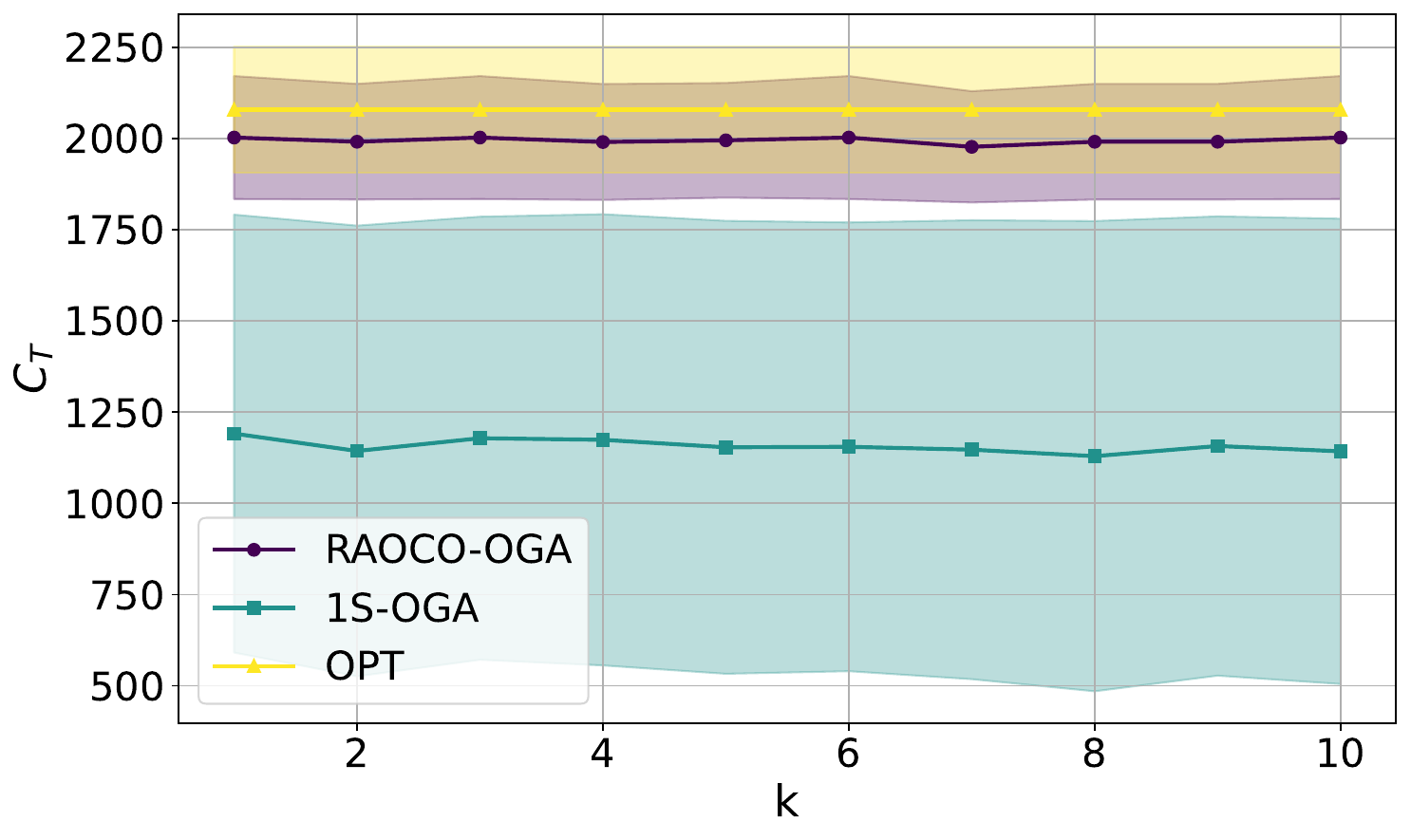}
        \caption{{\MovieRec}}
    \end{subfigure}
    \hfill
    \begin{subfigure}{0.3\textwidth}
        \centering
        \includegraphics[width=\textwidth]{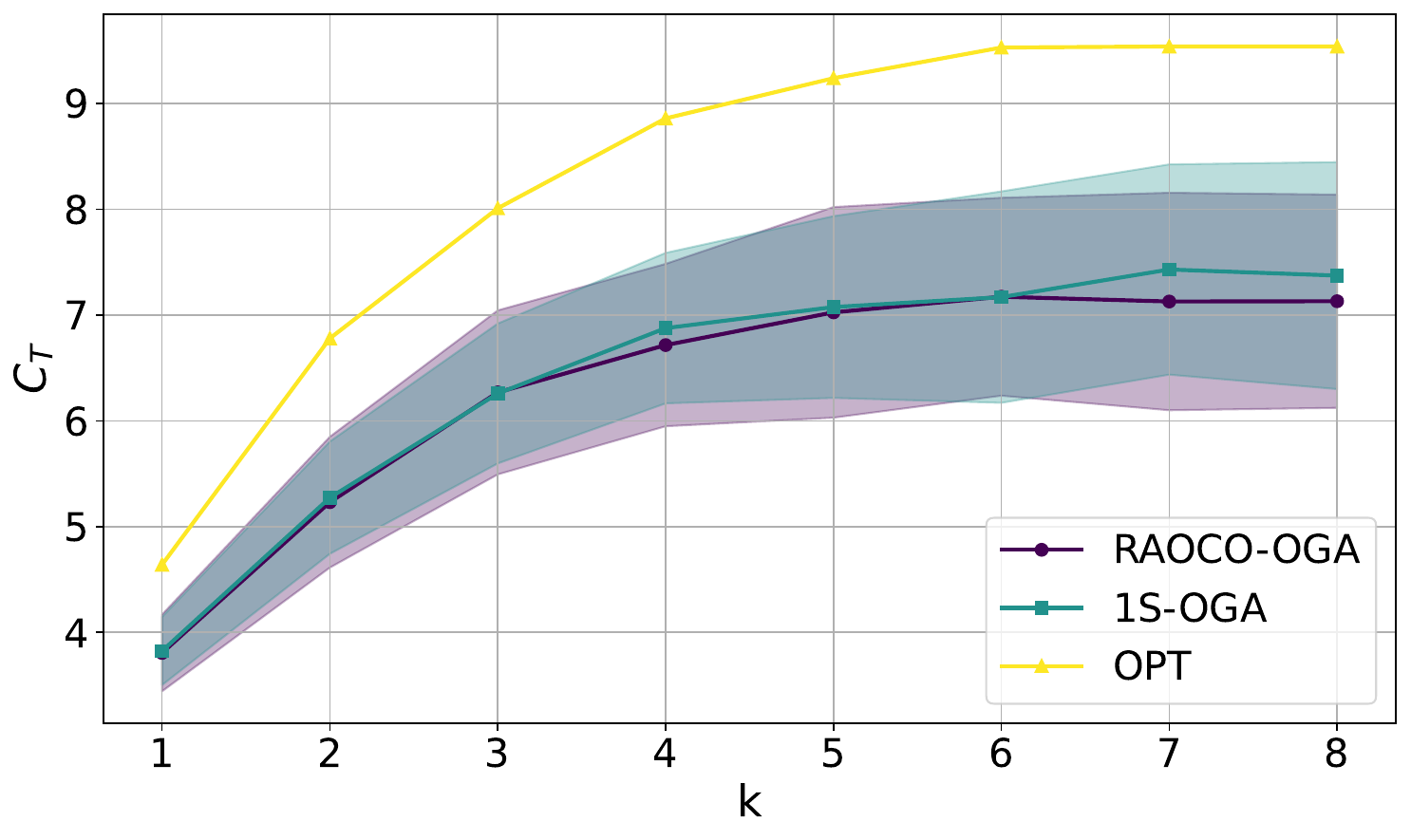}
        \caption{{\Influence}}
    \end{subfigure}
    \hfill
    \begin{subfigure}{0.3\textwidth}
        \centering
        \includegraphics[width=\textwidth]{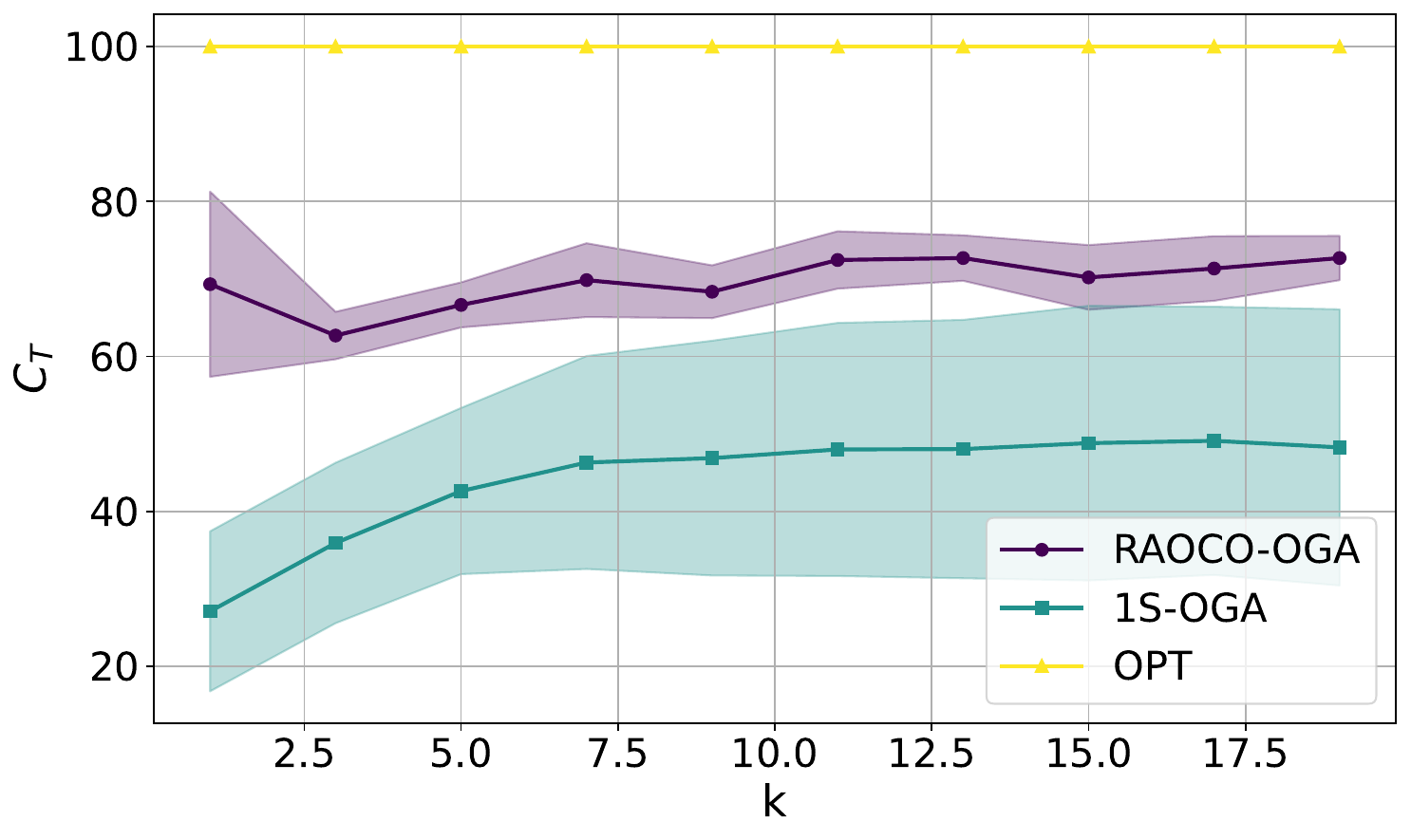}
        \caption{{\Coverage}}
    \end{subfigure}
    
    \caption{$k$ vs $C_T$ tradeoff for \RAOCO[OGA] and \Onestagelong, where $T$ is defined for each dataset at Table~\ref{tab:datasets}.
    We report the average value of $C_T$ over $5$ runs.
    The shaded regions correspond to one standard deviation above and below the average.
    }
    \label{fig: k vs C_t}
\end{figure*}

\subsection{Exploring the $\ell$ vs $C_T$ tradeoff}
In this section, we explore the tradeoff between $\ell$ and $C_T$, where $T$ is defined for each dataset at Table~\ref{tab:datasets}. We focus on the following algorithms: {\RAOCO[OGA]}, {\Onestagelong}.
We also report the value of the optimal (integral) solution in hindsight {\Opt}.
In Figure~\ref{fig: ell vs C_t}, for each dataset, we show the cumulative average reward $C_T = \frac{1}{T} \sum_{t=1}^T F_t(\bx_t)$ of each algorithm for each value of $\ell$.
For each dataset and each $\ell$, we run each experiment $5$ times and report the average $C_T$ as well as the standard deviation.
For each dataset, we fix $k$ to the value reported in Table~\ref{tab:datasets} and we vary $\ell$
as a multiple of $k$, that is $\ell = k , 2k, \dots, 5k$.
We observe that varying $\ell$ while keeping the second-stage cardinality constraint $k$ fixed does not significantly affect the performance gap between the two online algorithms in most datasets. Across these settings, both algorithms exhibit nearly overlapping curves, indicating similar behavior. A notable exception is the \MovieRec dataset, where {\RAOCO} consistently outperforms \Onestagelong and approaches the performance of the optimal fixed restricted ground set in hindsight.

In contrast, the \Coverage dataset highlights a clear divergence: for small values of $ \ell $, the two algorithms perform similarly, but as $ \ell $ increases, the performance of \Onestagelong deteriorates while that of {\RAOCO} improves. This behavior is expected given the construction of the dataset; \Onestagelong tends to allocate more mass to elements that benefit the first coverage function $ f_1 $, despite the second-stage constraint allowing only one element to be used. In contrast, {\RAOCO} learns to include elements that are useful for different functions, resulting in a higher cumulative reward.

\begin{figure*}[!t]
    \centering
    \begin{subfigure}{0.3\textwidth}
        \centering        \includegraphics[width=\textwidth]{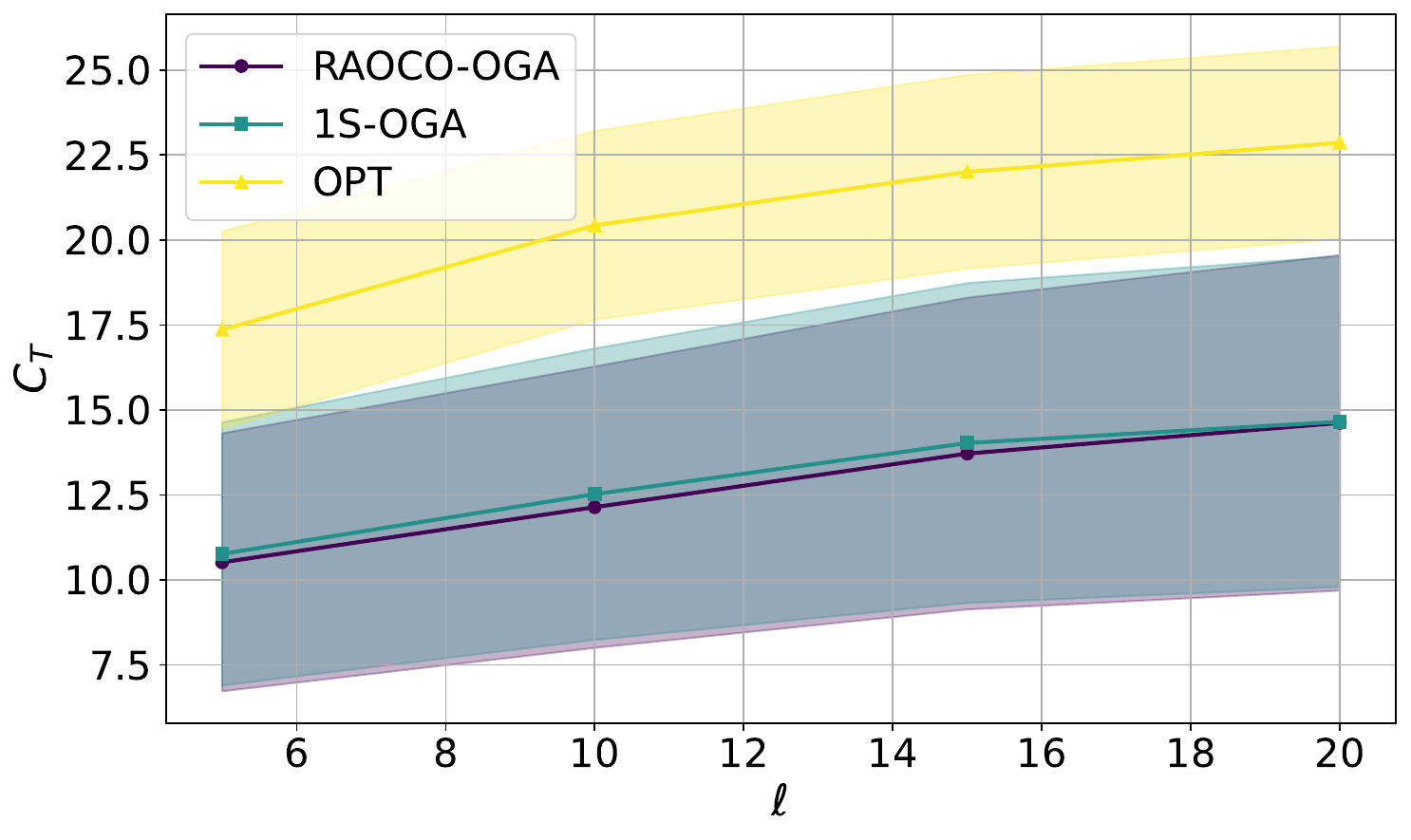}
        \caption{{\Wikipedia}}
    \end{subfigure}
    \hfill
    \begin{subfigure}{0.3\textwidth}
        \centering
    \includegraphics[width=\textwidth]{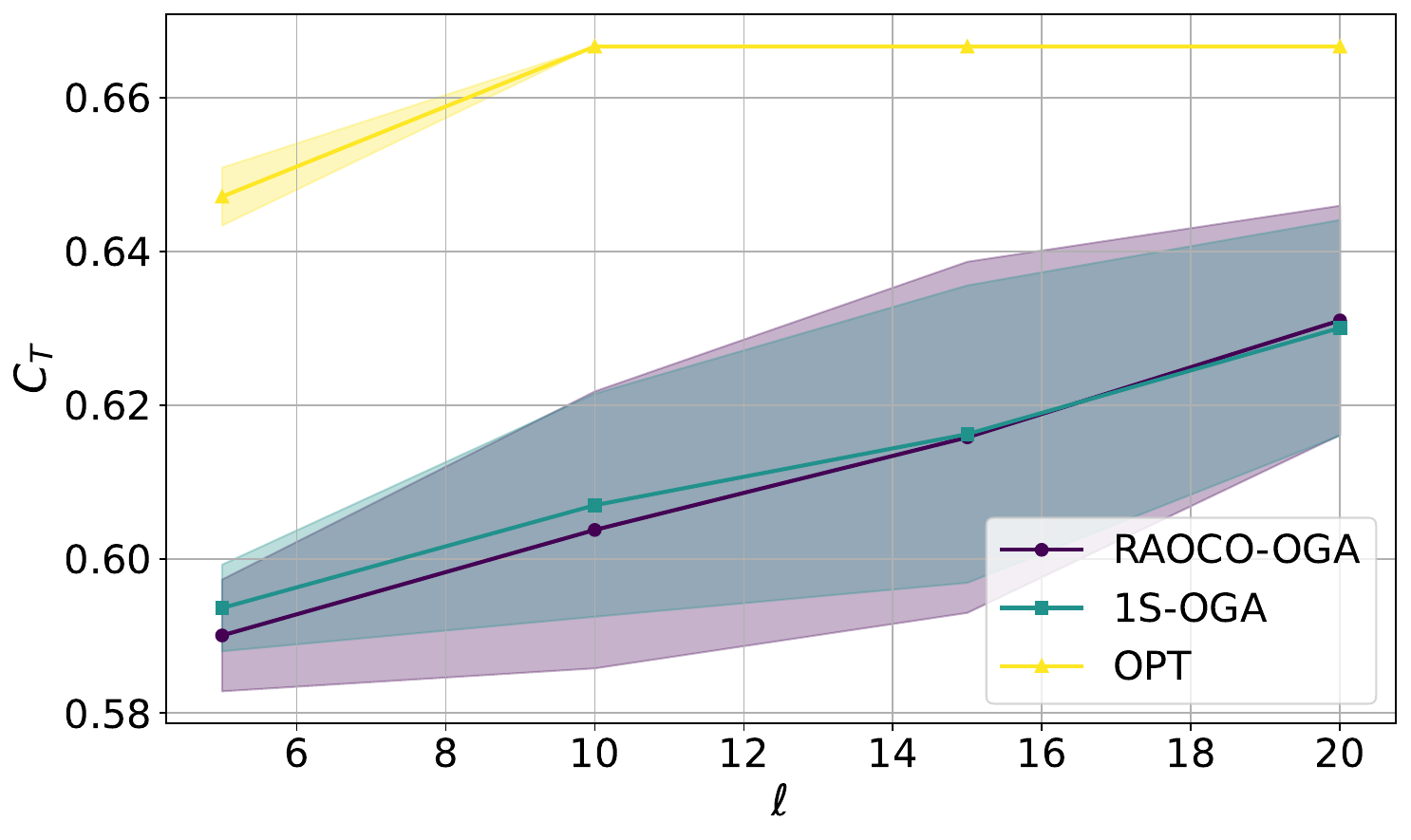}
        \caption{{\Images}}
    \end{subfigure}
    \hfill
    \begin{subfigure}{0.3\textwidth}
        \centering
        \includegraphics[width=\textwidth]{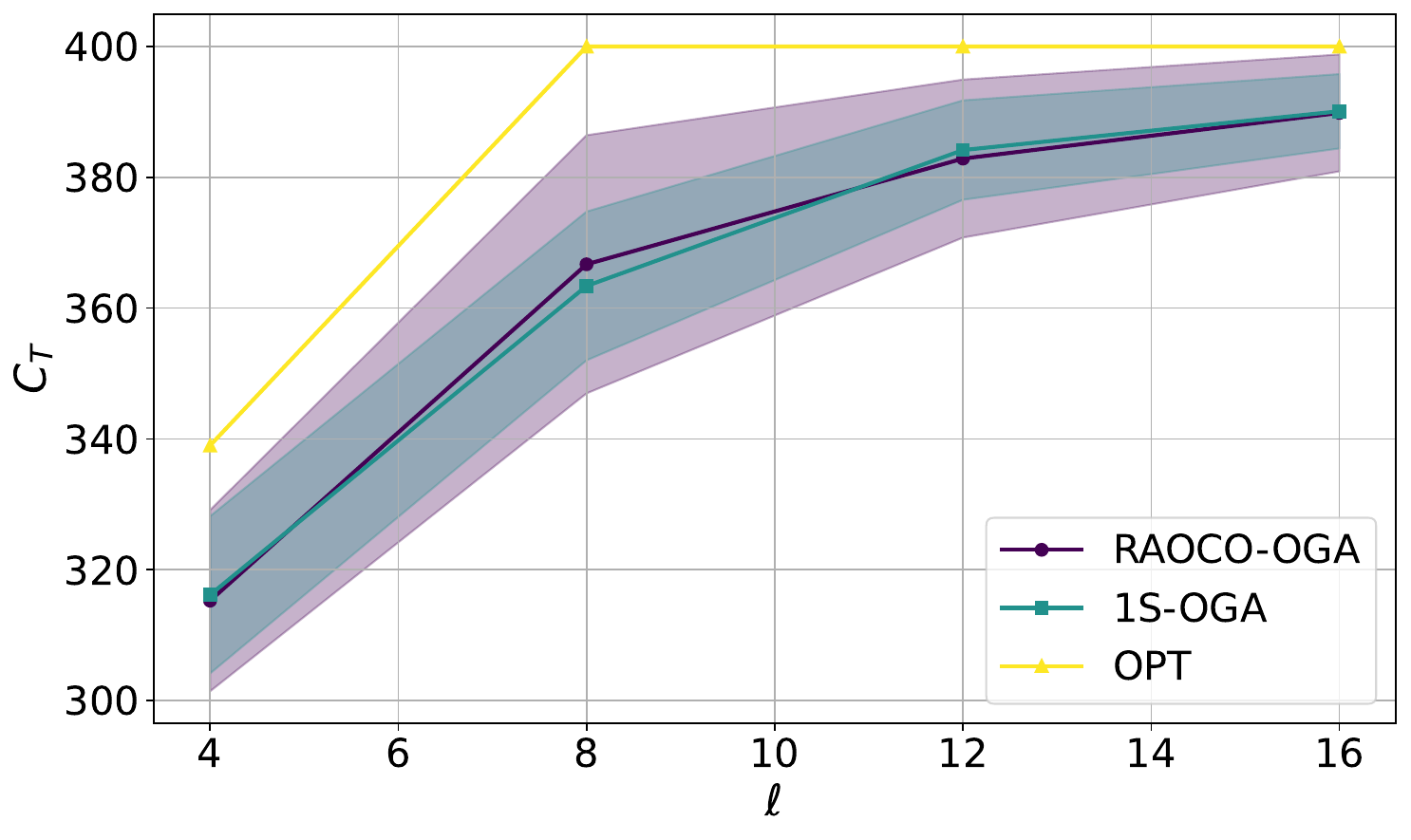}
        \caption{{\Teamformation}}
    \end{subfigure}
    \hfill
    \begin{subfigure}{0.3\textwidth}
        \centering
        \includegraphics[width=\textwidth]{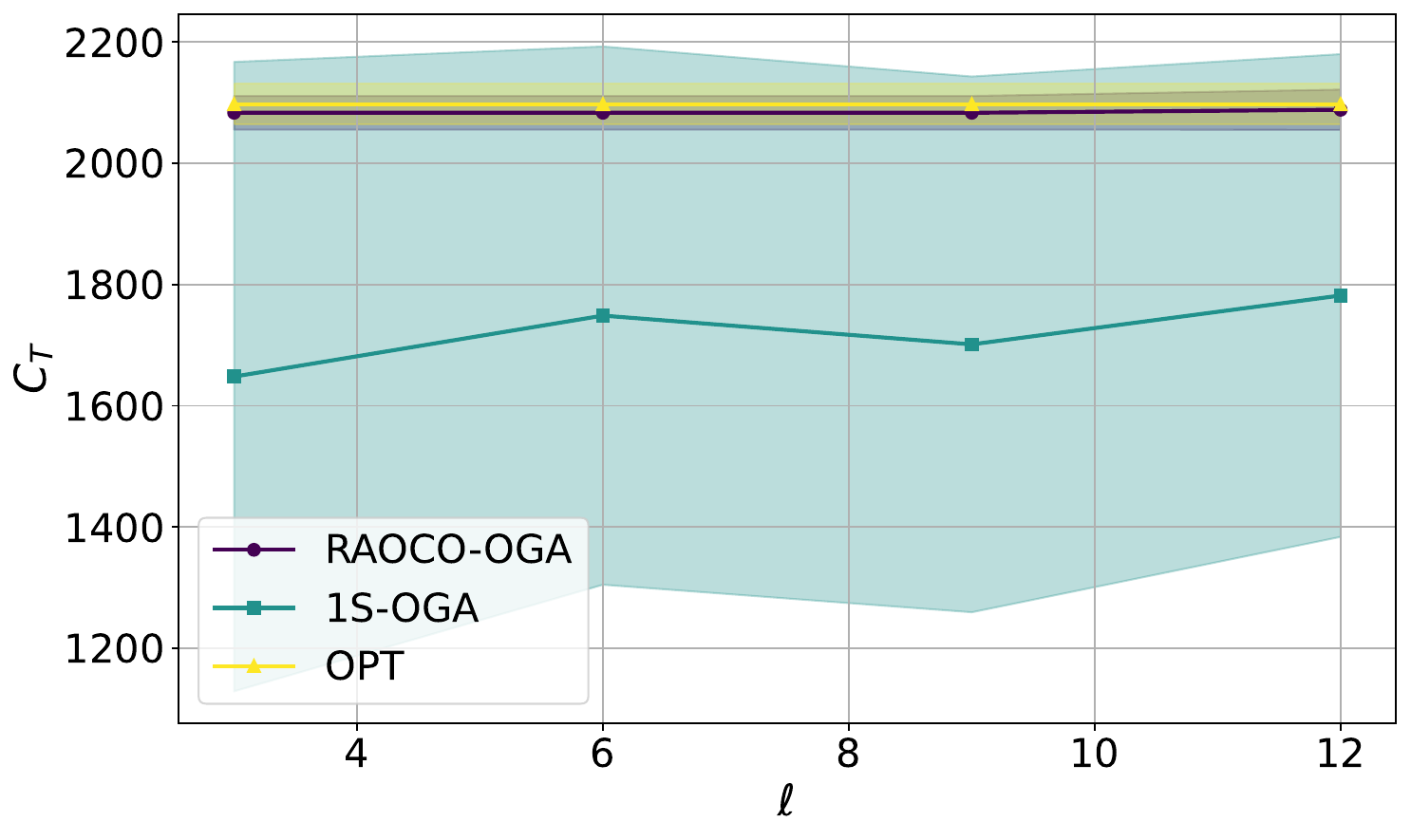}
        \caption{{\MovieRec}}
    \end{subfigure}
    \hfill
    \begin{subfigure}{0.3\textwidth}
        \centering
        \includegraphics[width=\textwidth]{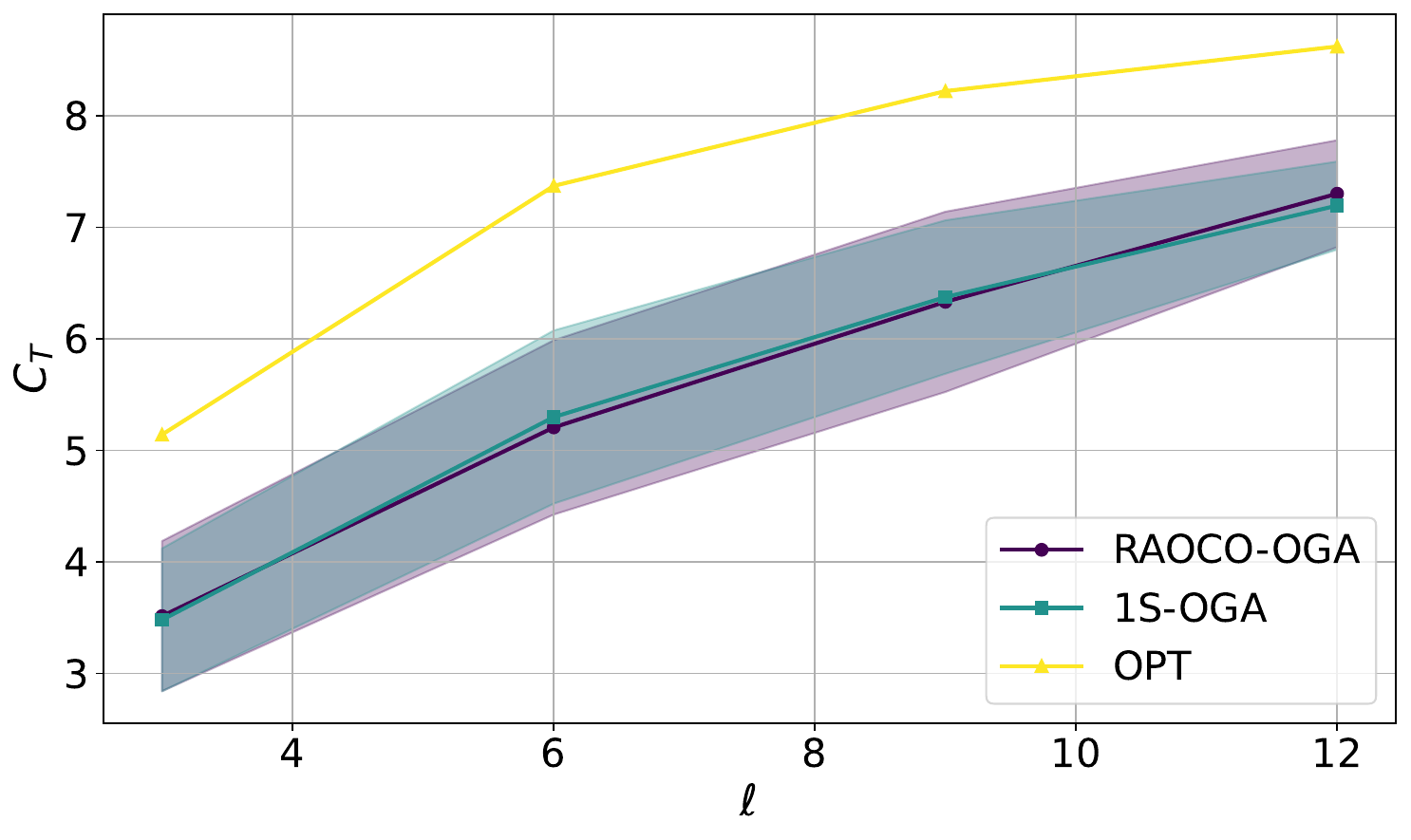}
        \caption{{\Influence}}
    \end{subfigure}
    \hfill
    \begin{subfigure}{0.3\textwidth}
        \centering
        \includegraphics[width=\textwidth]{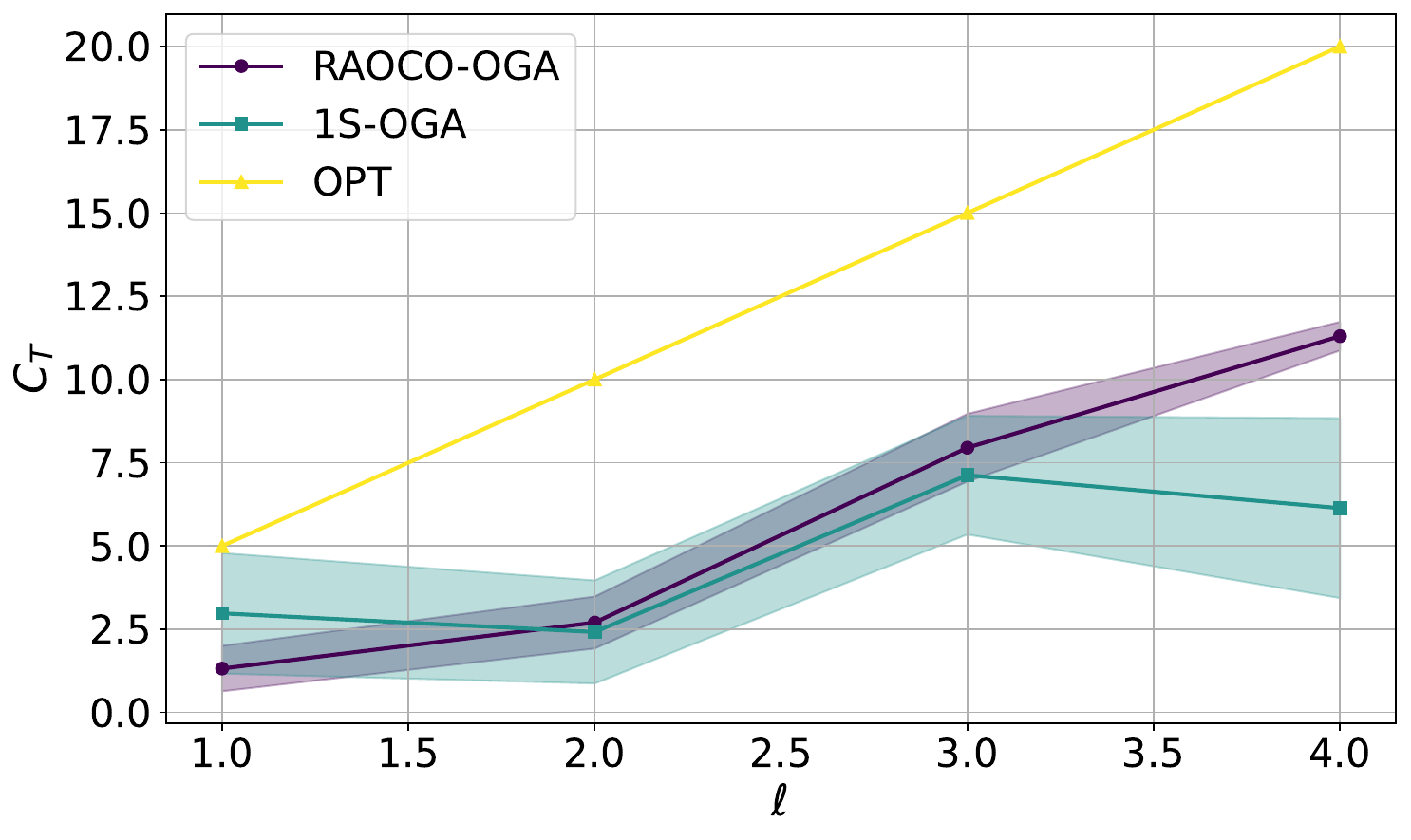}
        \caption{{\Coverage}}
    \end{subfigure}
    
    \caption{$\ell$ vs $C_T$ tradeoff for \RAOCO[OGA] and \Onestagelong, where $T$ is defined
for each dataset at Table~\ref{tab:datasets}.
    We report the average value of $C_T$ over $5$ runs.
    The shaded regions correspond to one standard deviation above and below the average.
    }
    \label{fig: ell vs C_t}
\end{figure*}

\subsection{Scalability of \RAOCO} \label{subsec:hotpotqa}

To further assess the scalability of our approach, we experiment on a substantially larger dataset, namely \HotpotQA\footnote{\url{https://huggingface.co/datasets/BeIR/hotpotqa}} \citep{hotpotqa}, which consists of $97{,}852$ questions with ground-truth answers and supporting Wikipedia articles. We construct an \problemname instance by associating each question with a coverage-like function as described in Appendix~\ref{subsec:datasets}. The union of all supporting articles forms a ground set of $111{,}140$ articles. Our goal is to identify a restricted ground set of size $\ell = 150$, from which $k = 4$ articles are selected in the second stage for each question.

We generate $T = 500$ objectives $f_1,\ldots,f_T$ using the following sampling scheme: with probability $0.5$ we select a random question from a fixed pool of $30$ questions, and with probability $0.5$ we select one uniformly at random from the remaining questions. In Figure~\ref{fig: hotpotqa} we compare the cumulative reward of \RAOCO[OGA] against the fractional offline optimum (OPT) and \Random.

Each iteration of \RAOCO[OGA] requires solving one linear program, which we do
via Gurobi (using a free academic license),
followed by randomized pipage rounding of the fractional solution. On an Apple M1 MacBook Pro (16GB RAM), each step completes in approximately $30$ seconds on average.

\begin{figure*}[!t]
    \centering
    \includegraphics[width=0.7\textwidth]{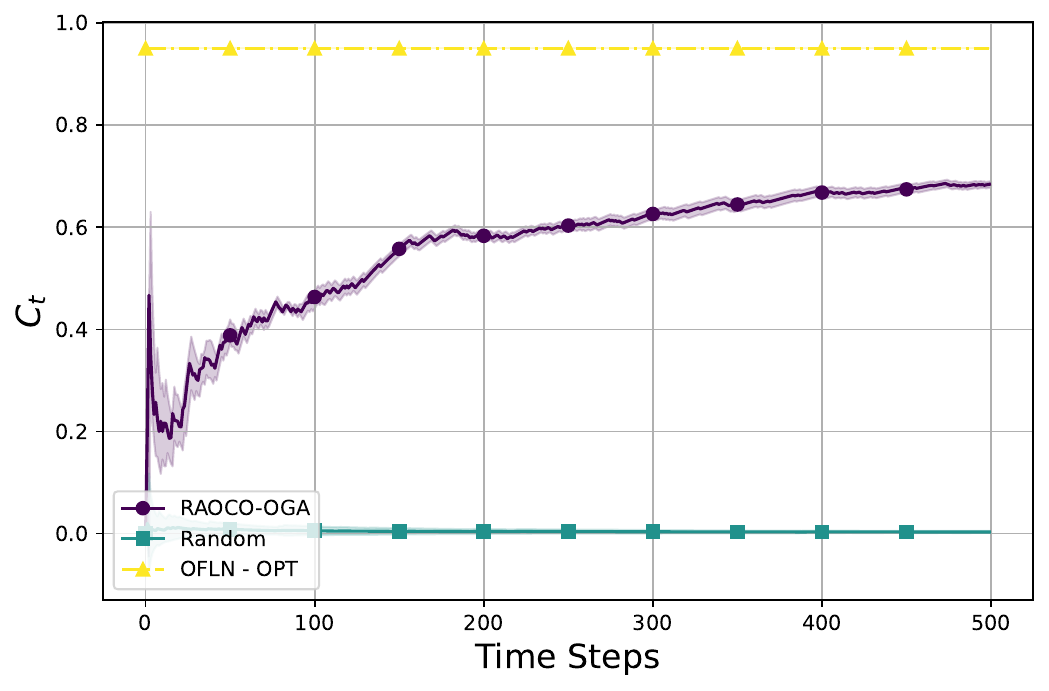}    
    \caption{Cummulative reward of \RAOCO[OGA], \Random and \Opt\ on HotpotQA}
    \label{fig: hotpotqa}
\end{figure*}

\section{Online Convex Optimization (OCO)}
\label{app: oco}
In this section, we describe the the {\OCO} policies $\calP_{\tildecalX}$ used by our algorithm, i.e., \emph{Online Gradient Ascent (OGA)} and
\emph{Follow the Regularized Leader (FTRL)}. 
Both algorithms enjoy sublinear regret guarantees ($\calO(\sqrt{T})$). We refer the interested reader to \cite{hazan2016introduction} for a more detailed treatment.

\subsection{Online projected (super-)Gradient Ascent (OGA)}
\emph{Online projected (super-)Gradient Ascent (OGA)} \citep{hazan2016introduction} comprises the following update rule:
\begin{align*}
    \widetilde{\bx}_{t+1} = \Pi_{\tildecalX_\ell}(\widetilde{\bx}_t + \eta \bg_t),
\end{align*}
where $\eta$ is the learning rate, $\bg_t$ is a supergradient of $\widetilde{F}_t$ at $\widetilde{\bx}_t$, and $\Pi_{\tildecalX_\ell}(\cdot)$ is the orthogonal projection on the set $\tildecalX_\ell$.

\subsection{Follow the Regularized Leader (FTRL)}
\emph{Follow-the-Regularized-Leader (FTRL)} \citep{hazan2016introduction} comprises the following update rule:
\begin{align*}
    \widetilde{\bx}_{t+1} = \argmax_{\widetilde{\bx} \in \tildecalX_\ell} \left\{ 
    \eta \sum_{\tau = 1}^t \langle
     \bg_\tau , \widetilde{\bx} \rangle - \mathcal{R}(\widetilde{\bx})
    \right\},
\end{align*}
where $\eta$ is the learning rate, $\bg_t$ is a supergradient of $\widetilde{F}_t$ at $\widetilde{\bx}_\tau$, and $\mathcal{R}$ is a strongly convex regularizer. 
\section{Randomized Pipage Rounding}
\label{app: pipage}
In this section, we describe the \emph{Randomized Pipage Rounding} algorithm for the case of cardinality constraints.
The algorithm proceeds by iteratively rounding the fractional entries of $\widetilde{\bx}$ until all components are either 0 or 1.
Let $\calH_{\widetilde{\bx}} = \{\widetilde{x}_i \mid 0< \widetilde{x}_i < 1\}$ be the set of fractional components.
At each iteration, two cases are considered:

If only one element in $\widetilde{\bx}$ is fractional (i.e., $|\calH_{\widetilde{\bx}}| = 1$), we simply round this element to $1$. This operation is guaranteed to be feasible.

If multiple fractional elements remain (i.e. $|\calH_{\widetilde{\bx}}| > 1$) , two components $\widetilde{x}_i$ and $\widetilde{x}_j$ are arbitrarily selected from $\calH_{\widetilde{\bx}}$. For these selected indices, we define two quantities $\epsilon_1$ and $\epsilon_2$ as follows:
\begin{align*}
\epsilon_1 = \min(1 - \widetilde{x}_i, \widetilde{x}_j), \quad \epsilon_2 = \min(\widetilde{x}_i, 1 - \widetilde{x}_j)
\end{align*}
These quantities represent the maximum possible adjustment to $\widetilde{x}_i$ and $\widetilde{x}_j$ while ensuring that the values remain within the range $[0, 1]$. 
Then a random decision is made based on the ratio of $\epsilon_2 / (\epsilon_1 + \epsilon_2)$. Specifically, the following actions are performed:

\begin{itemize}
    \item With probability \( \frac{\epsilon_2}{\epsilon_1 + \epsilon_2} \), we increase $\widetilde{x}_i$ by $\epsilon_1$ and decrease $\widetilde{x}_j$ by $\epsilon_1$.
    \item With the complementary probability, we decrease $\widetilde{x}_i$ by $\epsilon_2$ and increase $\widetilde{x}_j$ by $\epsilon_2$.
\end{itemize}

For a more detailed analysis, we defer the interested reader to \cite{chekuri2009randomized_pipage}.
We state two lemmas that we use during our analysis in Appendix~\ref{app:rounding}.

\begin{lemma}\citep{chekuri2009randomized_pipage}\label{lem:appendix-pipage-marginals}
    Let $\widetilde{\bx} \in \tildecalX$ be a fractional vector and $\bx = \Xi(\widetilde{\bx})$
    be resulting vector after running Randomized Pipage Rounding on $\widetilde{\bx}$.
    Then, the following is true
    $$
    \mathbf{E}_{\Xi}[\bx] = \widetilde{\bx}.
    $$
\end{lemma}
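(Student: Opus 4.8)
The plan is to view Randomized Pipage Rounding as a finite sequence of elementary steps and to show that each step preserves the conditional expectation of the current fractional vector; the claim then follows by the law of total expectation.

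First I would set up the martingale structure. Write $\widetilde{\bx}^{(0)} = \widetilde{\bx}$ for the input, and $\widetilde{\bx}^{(s)}$ for the vector after $s$ elementary steps, so that $\bx = \Xi(\widetilde{\bx})$ equals $\widetilde{\bx}^{(S)}$ for the (random, but almost surely bounded) number of steps $S$. The key step is to verify that $\mathbf{E}\bigl[\widetilde{\bx}^{(s+1)} \mid \widetilde{\bx}^{(s)}\bigr] = \widetilde{\bx}^{(s)}$. In an elementary step acting on fractional coordinates $i,j$ with $\epsilon_1 = \min(1-\widetilde{x}_i,\widetilde{x}_j)$ and $\epsilon_2 = \min(\widetilde{x}_i,1-\widetilde{x}_j)$, every coordinate other than $i,j$ is left unchanged, while coordinate $i$ changes by $+\epsilon_1$ with probability $\tfrac{\epsilon_2}{\epsilon_1+\epsilon_2}$ and by $-\epsilon_2$ with the complementary probability; hence its expected change is $\tfrac{\epsilon_2}{\epsilon_1+\epsilon_2}\,\epsilon_1 - \tfrac{\epsilon_1}{\epsilon_1+\epsilon_2}\,\epsilon_2 = 0$, and symmetrically for coordinate $j$. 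This establishes the per-step identity.

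Next I would argue termination and then conclude. In the branch with more than one fractional coordinate, the definitions of $\epsilon_1,\epsilon_2$ guarantee that either choice of sign drives at least one of $\widetilde{x}_i,\widetilde{x}_j$ into $\{0,1\}$, so the number of fractional coordinates strictly decreases; in the branch with a single fractional coordinate the step finishes the process. Hence $S \le n$ almost surely, and iterating the per-step identity via the law of total expectation yields $\mathbf{E}_\Xi[\bx] = \mathbf{E}[\widetilde{\bx}^{(S)}] = \widetilde{\bx}^{(0)} = \widetilde{\bx}$.

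The only delicate point — and the main obstacle — is the terminal step in which a single fractional coordinate remains, since rounding it deterministically to $1$ would not preserve its marginal. This case does not arise when $\sum_i \widetilde{x}_i \in \nats$ (which we may assume w.l.o.g.\ in our setting, e.g.\ $\sum_i \widetilde{x}_i = \ell$): every elementary step preserves the coordinate sum, and an integral sum together with $n-1$ integral coordinates forces the last coordinate to be integral as well, so the singleton branch never triggers. In full generality one simply replaces that branch by the mean‑preserving randomized rounding (set the coordinate to $1$ with probability equal to its current value), which remains feasible in $\tildecalX_\ell$ and leaves the computation above intact.
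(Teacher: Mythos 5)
Your martingale argument is correct and is essentially the standard proof of this fact; the paper itself does not prove this lemma but imports it from \citet{chekuri2009randomized_pipage}, so there is no in-paper proof to compare against beyond the algorithm description in Appendix~\ref{app: pipage}. The per-step computation is right: in the pairwise branch each coordinate's expected change is $\tfrac{\epsilon_2}{\epsilon_1+\epsilon_2}\epsilon_1-\tfrac{\epsilon_1}{\epsilon_1+\epsilon_2}\epsilon_2=0$, each step kills at least one fractional coordinate (since one of $\epsilon_1,\epsilon_2$ is attained at a boundary), so the process is a bounded martingale and optional stopping (or just iterated total expectation over at most $n$ steps) gives $\mathbf{E}_\Xi[\bx]=\widetilde{\bx}$.

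Your ``delicate point'' is a genuine catch rather than a pedantic one: the algorithm as literally described in Appendix~\ref{app: pipage} rounds a lone remaining fractional coordinate up to $1$ deterministically, which does \emph{not} preserve that coordinate's marginal, so Lemma~\ref{lem:appendix-pipage-marginals} would fail whenever $\sum_i\widetilde{x}_i\notin\nats$ (which the inequality constraint $\sum_i\widetilde{x}_i\le\ell$ permits). Both of your resolutions are sound: when the OCO iterate saturates the budget the pairwise steps preserve $\sum_i\widetilde{x}_i=\ell$ and the singleton branch never fires; in general, replacing that branch with a $\Ber(\widetilde{x}_i)$ coin keeps the output in $\calX_\ell$ (rounding up is feasible since the integral coordinates sum to at most $\ell-1$) and restores mean preservation. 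This is exactly how the cited construction handles it, and it is the version of the terminal step one should read into the paper's description; note also that the deterministic round-up would only \emph{help} Lemma~\ref{lem:appendix-pipage-submodular-dominance} and the downstream regret bounds by monotonicity, but it is needed as you state it for the marginal identity used in Lemma~\ref{lem:appendix-hat-y-marginals}.
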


\begin{lemma}\citep{chekuri2009randomized_pipage}\label{lem:appendix-pipage-submodular-dominance}
    Let $\widetilde{\bx} \in \tildecalX$ be a fractional vector and $\bx = \Xi(\widetilde{\bx})$
    be resulting vector after running Randomized Pipage Rounding on $\widetilde{\bx}$.
    Then, for any submdular function $f$, the following holds
    $$
    \mathbf{E}_{\bx \sim \Xi(\widetilde{\bx})}[f(\bx)] \geq \mathbf{E}_{\bx \sim \textbf{Ind}(\widetilde{\bx})}[f(\bx)].
    $$
\end{lemma}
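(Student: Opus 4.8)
The plan is to reduce the statement to a one-step submartingale property of the multilinear extension of $f$ along the trajectory traced by Randomized Pipage Rounding. Write $\bar f\colon[0,1]^n\to\reals$ for the multilinear extension of $f$, i.e. $\bar f(\bz)=\mathbf{E}_{\bw\sim\textbf{Ind}(\bz)}[f(\bw)]$. The right-hand side of the claimed inequality is exactly $\bar f(\widetilde\bx)$; and because $\bar f$ agrees with $f$ on $\{0,1\}^n$ and $\Xi(\widetilde\bx)$ is supported on integral points, the left-hand side equals $\mathbf{E}_{\bx\sim\Xi(\widetilde\bx)}[\bar f(\bx)]$. So it suffices to prove $\mathbf{E}_{\bx\sim\Xi(\widetilde\bx)}[\bar f(\bx)]\ge\bar f(\widetilde\bx)$.

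The first ingredient is that $\bar f$ is convex along every ``swap'' direction $e_i-e_j$ with $i\ne j$. Since $\bar f$ is multilinear, $\partial^2\bar f/\partial z_i^2=0$ for all $i$; and expanding $\bar f$ shows that $\partial^2\bar f/\partial z_i\partial z_j$ equals the expectation, over a random $S\subseteq[n]\setminus\{i,j\}$ that includes each $\ell$ independently with probability $z_\ell$, of the discrete second difference $f(S\cup\{i,j\})-f(S\cup\{i\})-f(S\cup\{j\})+f(S)$, which is $\le 0$ by submodularity of $f$. Consequently the second derivative of $t\mapsto\bar f(\bz+t(e_i-e_j))$ equals $-2\,\partial^2\bar f/\partial z_i\partial z_j\ge 0$, so this univariate restriction of $\bar f$ is convex.

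The second ingredient is to track the random iterates $\bx^{(0)}=\widetilde\bx,\bx^{(1)},\dots,\bx^{(m)}=\bx$ generated by the rounding and to show $\mathbf{E}[\bar f(\bx^{(r+1)})\mid\bx^{(r)}]\ge\bar f(\bx^{(r)})$ for every $r$. At a generic step, two fractional coordinates $i,j$ are selected and $\bx^{(r+1)}$ becomes $\bx^{(r)}+\epsilon_1(e_i-e_j)$ with probability $\epsilon_2/(\epsilon_1+\epsilon_2)$ or $\bx^{(r)}-\epsilon_2(e_i-e_j)$ with probability $\epsilon_1/(\epsilon_1+\epsilon_2)$; a one-line check gives that these two outcomes have barycenter $\bx^{(r)}$ (this is the single-step content of Lemma~\ref{lem:appendix-pipage-marginals}), and since both lie on the line through $\bx^{(r)}$ in direction $e_i-e_j$, convexity from the previous paragraph plus Jensen's inequality delivers the one-step inequality. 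The only other move is the terminal rounding of the last remaining fractional coordinate up to $1$, which does not decrease $\bar f$ either --- by monotonicity of $\bar f$, which is enough in our analysis since the relevant $f$ are monotone, and for general submodular $f$ one reduces to the base-polytope formulation of the rounding in which this degenerate case does not occur. Telescoping the one-step inequalities gives $\mathbf{E}[\bar f(\bx^{(m)})]\ge\bar f(\bx^{(0)})=\bar f(\widetilde\bx)$; since $\bx^{(m)}$ is integral, $\bar f(\bx^{(m)})=f(\bx^{(m)})$, yielding $\mathbf{E}_{\bx\sim\Xi(\widetilde\bx)}[f(\bx)]\ge\mathbf{E}_{\bx\sim\textbf{Ind}(\widetilde\bx)}[f(\bx)]$.

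I expect the main obstacle to be the first ingredient: computing the Hessian of the multilinear extension and seeing that its restriction to a swap direction is nonnegative \emph{exactly} because of submodularity. A secondary point of care is matching the precise probabilities used in a single pipage step to the barycenter condition and disposing cleanly of the terminal single-coordinate rounding; the remaining bookkeeping on the rounding trajectory is routine, and the argument overall follows the standard analysis of pipage/swap rounding due to \citet{chekuri2009randomized_pipage}.
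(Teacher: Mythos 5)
The paper does not prove this lemma at all: it is imported verbatim from \citet{chekuri2009randomized_pipage}, so there is no in-paper argument to compare against. Your proof is a correct, self-contained reconstruction of the standard argument from that reference. The two ingredients are exactly right: (i) the multilinear extension $\bar f$ has $\partial^2\bar f/\partial z_i\partial z_j=\E_S\bigl[f(S\cup\{i,j\})-f(S\cup\{i\})-f(S\cup\{j\})+f(S)\bigr]\le 0$ and vanishing diagonal second derivatives, so its restriction to any swap direction $e_i-e_j$ has second derivative $-2\,\partial^2\bar f/\partial z_i\partial z_j\ge 0$ and is convex; (ii) each pipage step is a two-point lottery on the line $\bx^{(r)}+t(e_i-e_j)$ whose barycenter is $\bx^{(r)}$ (the probabilities $\epsilon_2/(\epsilon_1+\epsilon_2)$ and $\epsilon_1/(\epsilon_1+\epsilon_2)$ are chosen precisely so that $\frac{\epsilon_2}{\epsilon_1+\epsilon_2}\epsilon_1-\frac{\epsilon_1}{\epsilon_1+\epsilon_2}\epsilon_2=0$), so Jensen's inequality gives $\E[\bar f(\bx^{(r+1)})\mid \bx^{(r)}]\ge\bar f(\bx^{(r)})$, and the tower property telescopes this to the claim since $\bar f=f$ on integral points.

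The one point that deserves the care you give it is the terminal move: the version of the rounding described in Appendix~\ref{app: pipage} deterministically rounds the last remaining fractional coordinate up to $1$, which is not a barycenter-preserving swap and is only harmless for $\bar f$ when $f$ is monotone (which suffices for every use of the lemma in this paper, since \WTP{} functions are monotone). You are right that the general-submodular statement requires either formulating the rounding on a face/base polytope so that every step is a swap, or randomizing the final coordinate with probability equal to its value; note that the latter fix is in fact also needed for the exact-marginals property of Lemma~\ref{lem:appendix-pipage-marginals} to hold when $\sum_i\widetilde{x}_i$ is not an integer, so flagging it is not pedantry. With that caveat handled as you describe, the proof is complete and matches the argument in the cited source.
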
 
\section{{\RAOCO} in the offline \offlineproblemname problem} \label{appendix-offline}
For the sake of completeness, we give a small proof that the ingredients of {\RAOCO} can be used
to obtain a $c_\calM(1-1/e)$-approximation for the offline \offlineproblemname problem
when the input functions are {\WTP}, where $c_\calM = (1-e^{-k}k^k/k!)$ if $\calM$ is a uniform matroid of rank $k$ and $c_\calM = (1-1/e)$ otherwise. 

The aforementioned guarantee improves upon the state-of-the-art for the case cadinality constraints
and {\WTP} functions. Specifically, while \citet{balkanski2016learning} achieve an approximation guarantee of $1 - 1/e - \Omega(1/\sqrt{k})$, our bound converges to the optimal value of $1 - 1/e$ more sharply as $k$ increases, and yields strictly better guarantees for small values of $k$. We illustrate this comparison in Figure~\ref{fig:approx-comparison}.

\begin{figure}[!t]
    \centering
    \includegraphics[width=0.8\textwidth]{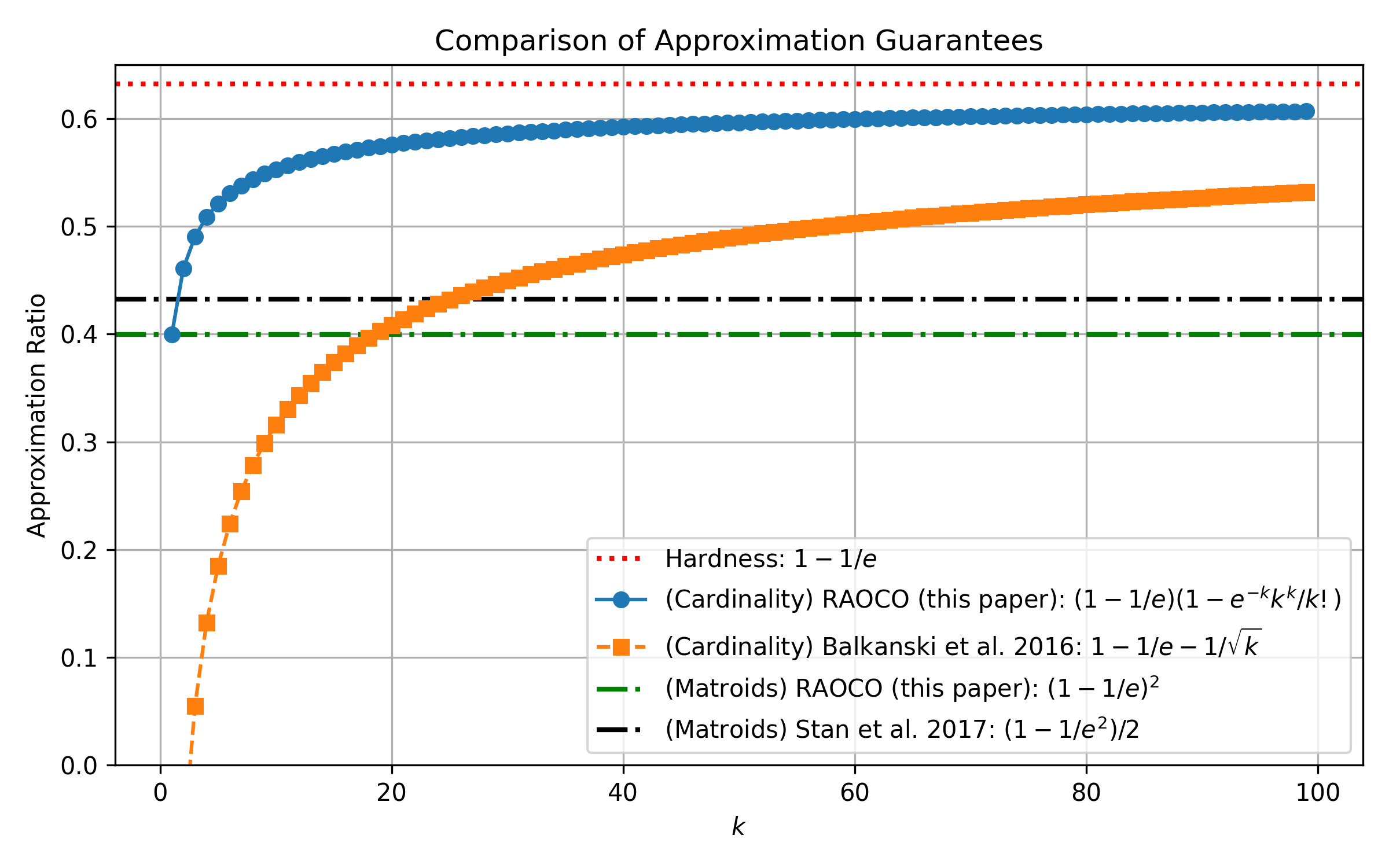}
    \caption{Comparison of approximation guarantees as a function of $k$ for the (offline) \offlineproblemname.
The guarantee of both our algorithm and the one by \citet{balkanski2016learning} converge to the optimal value $1 - 1/e$ as $k \to \infty$, but our algorithm achieves strictly better guarantees even for finite $k\in \naturals$. Notably, our guarantee for cardinality constraints also outperforms the state-of-the-art bound for matroid constraints from \citet{krause2017} for all $k \geq 2$. For general matroids, our guarantee for \offlineproblemname ($(1-1/e)^2\approx0.40$) is slightly worse than the one by \cite{krause2017}   ($\sfrac{1}{2}(1-1/e^2)\approx 0.43$).}
\label{fig:approx-comparison}
\end{figure}

We now prove our offline guarantee formally in the following lemma. 

\begin{lemma}\label{lem:raoco-offline}
    Let $[n]$ be a universe of elements and $\calM$ be a matroid over $[n]$.
    Let also $\ell, k \in \nats$ and $f_1, \dots, f_m \in \boundedWTP$ be the input
    of the offline \offlineproblemname problem. There exists a randomized algorithm that runs in 
    polynomial time in $n, m, \indexsetbound$ and $\log \parameterbound$ and outputs
    a ground set $\bx \in \calX_\ell$ for which
    $$
    \E\left[\sum_{t = 1}^m F_t(\bx) \right] \geq c_\calM(1-1/e) \cdot \max_{\bx \in \calX_\ell} \sum_{t = 1}^m F_t(\bx),
    $$
    where the expectation is taken over the randomness of the algorithm and $c_\calM = (1-e^{-k}k^k/k!)$ if $\calM$ is a uniform matroid of rank $k$ and $c_\calM = (1-1/e)$ otherwise.
\end{lemma}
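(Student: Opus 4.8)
The plan is to follow the same ``relax-then-round'' template as {\RAOCO}, but with the online fractional subproblem replaced by its offline analogue. First I would compute a fractional ground set $\widetilde{\bx}^\star \in \tildecalX_\ell$ maximizing the \emph{aggregated} concave relaxation $\sum_{t=1}^m \widetilde{F}_t(\cdot)$, and then round it to an integral ground set $\bx = \Xi(\widetilde{\bx}^\star)$ using Randomized Pipage Rounding~--~exactly the rounding subroutine already used inside {\RAOCO} (Algorithm~\ref{alg: raoco}).

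For the first step, by Lemma~\ref{lemma: F concavity} each $\widetilde{F}_t$ is concave over $\tildecalX_\ell$, hence so is $\sum_t \widetilde{F}_t$. Moreover, as in the proof of Lemma~\ref{lem:supergradients-poly-time}, the problem $\max_{\widetilde{\bx} \in \tildecalX_\ell} \sum_{t=1}^m \widetilde{F}_t(\widetilde{\bx})$ can be written as a single linear program: one introduces second-stage vectors $\widetilde{\by}_t \in \calP(\calM)$ with $\widetilde{\by}_t \le \widetilde{\bx}$ and auxiliary variables $z_{t,j}$ with $z_{t,j} \le b_j$ and $z_{t,j} \le \widetilde{\by}_t \cdot \bw_{t,j}$ that linearize the $\min\{b_j,\widetilde{\by}_t\cdot\bw_{t,j}\}$ terms, so the objective $\sum_{t,j} c_{t,j} z_{t,j}$ is jointly linear; the exponentially many matroid-polytope constraints are handled by the greedy separation oracle for the matroid rank function. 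Thus $\widetilde{\bx}^\star$ is computable in time polynomial in $n, m, \indexsetbound$ and $\log\parameterbound$ via the ellipsoid method followed by a reduced LP (this is, incidentally, the same fractional program solved by \ContOpt), and Randomized Pipage Rounding adds only $\calO(n^2)$ overhead.

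For the approximation ratio, I would apply Lemma~\ref{lem:alg-rounding-c-approx} to each $f_t$ individually at the point $\widetilde{\bx}^\star$, giving $\E_\Xi[F_t(\Xi(\widetilde{\bx}^\star))] \ge c_\calM(1-1/e)\,\widetilde{F}_t(\widetilde{\bx}^\star)$, and sum over $t$. It remains to show $\sum_t \widetilde{F}_t(\widetilde{\bx}^\star) \ge \max_{\bx \in \calX_\ell} \sum_t F_t(\bx)$, which holds because $\calX_\ell \subseteq \tildecalX_\ell$ and, as noted in the proof of Lemma~\ref{lem:alg-oco-no-regret}, $\widetilde{F}_t(\bx) = F_t(\bx)$ for every integral $\bx \in \calX_\ell$ (for integral $\bx$ the relaxed second-stage polytope $\tildecalY_\calM(\bx)$ has integral extreme points coinciding with $\calY_\calM(\bx)$, and $\widetilde{f}_t = f_t$ on $\{0,1\}^n$); hence $\sum_t\widetilde{F}_t(\widetilde{\bx}^\star)=\max_{\widetilde{\bx}\in\tildecalX_\ell}\sum_t\widetilde{F}_t(\widetilde{\bx})\ge \max_{\bx\in\calX_\ell}\sum_t\widetilde{F}_t(\bx)=\max_{\bx\in\calX_\ell}\sum_t F_t(\bx)$. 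Chaining the two inequalities gives the claimed bound.

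The only delicate point~--~and the main obstacle~--~is ensuring the fractional LP is genuinely solved in polynomial time despite its exponentially many matroid constraints and the fact that each $\widetilde{F}_t$ is itself defined through an inner maximization over $\tildecalY_\calM(\widetilde{\bx})$. This is resolved by the observation that the inner maximizations collapse into the \emph{same} LP as the outer one, so a single ellipsoid run with the matroid separation oracle suffices, mirroring the argument already carried out for Lemma~\ref{lem:supergradients-poly-time} and Algorithm~\ref{alg: supergradient}. Everything else is a direct per-function invocation of Lemma~\ref{lem:alg-rounding-c-approx} followed by a summation, so no new technical machinery is needed.
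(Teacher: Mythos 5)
Your proposal is correct and follows essentially the same route as the paper: maximize the aggregated concave relaxation $\sum_t \widetilde{F}_t$ over $\tildecalX_\ell$, apply Randomized Pipage Rounding, invoke Lemma~\ref{lem:alg-rounding-c-approx} per function and sum, and close the argument via $\calX_\ell \subset \tildecalX_\ell$ and $\widetilde{F}_t = F_t$ on integral points. The only (immaterial) difference is that you solve the fractional step as one joint LP via the ellipsoid method, whereas the paper solves the concave program with first-order or interior-point methods using the supergradient oracle of Lemma~\ref{lem:supergradients-poly-time}; both yield the claimed polynomial running time.
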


\begin{proof}[Proof of Lemma~\ref{lem:raoco-offline}]
    \spara{Description of the algorithm.} The offline algorithm is the following: first, construct the fractional reward functions
    $\widetilde{F}_1, \dots, \widetilde{F}_m$, as defined in Eq.~\eqref{eq:fractional-reward},
    of the given {\WTP} functions. Then, solve the following concave program
    $$
    \max_{\widetilde{\bx} \in \tildecalX_\ell} \sum_{i = 1}^m \widetilde{F}_i(\widetilde{\bx}).
    $$
    Finally, round the solution of the above program using Randomized Pipage Rounding (see Appendix~\ref{app: pipage}).

    \spara{Analysis}. We know that each fractional reward function $\widetilde{F}_t$ is concave over $\tildecalX_\ell$ (Lemma~\ref{lemma: F concavity}) and their supergradient, at any point in $\tildecalX_\ell$,
    can be computed in polynomial time in $n, \indexsetbound$ and $\log \parameterbound$ (Lemma~\ref{lem:supergradients-poly-time}).

    Therefore, the above concave program can be solved in polynomial time in $n, m, \indexsetbound$ and $\log \parameterbound$ using standard first-order or interior point methods \citep{bubeck2015convex}.
    Let $\widetilde{\bx}^*$ be a maximizer of the above program.  In the last step, the algorithm rounds
    $\widetilde{\bx}^*$ using Randomized Pipage Rounding.
    Let $\bx = \Xi\left(\widetilde{\bx}^*\right)$ be the resulting integral vector.

    The analysis is concluded as follows:
    \allowdisplaybreaks
    \begin{align*}
        \mathbf{E}_\Xi \left[\sum_{t = 1}^m F_t(\bx) \right] &= \sum_{t = 1}^m \mathbf{E}_\Xi \left[F_t(\Xi(\widetilde{\bx}^*))\right]\displaybreak[0]\\
        &\stackrel{\text{Lem. }\ref{lem:alg-rounding-c-approx}}{\geq} c_\calM (1-1/e)\cdot \sum_{t = 1}^m \widetilde{F}_i(\widetilde{\bx}^*)\displaybreak[0]\\
        &= c_\calM (1-1/e)\cdot \max_{\widetilde{\bx} \in \tildecalX_\ell} \sum_{t = 1}^m \widetilde{F}_t(\widetilde{\bx})\displaybreak[0]\\
        &\stackrel{\calX_\ell \subset \tildecalX_\ell}{\geq} c_\calM (1-1/e)\cdot \max_{\widetilde{\bx} \in \calX_\ell} \sum_{t = 1}^m \widetilde{F}_t(\widetilde{\bx})\displaybreak[0]\\
        &\stackrel{\substack{F_t(\bx) = \widetilde{F}_t(\bx)\\ \forall \bx \in \calX_\ell}}{=} c_\calM (1-1/e)\cdot \max_{\widetilde{\bx} \in \calX_\ell} \sum_{t = 1}^m F_t(\widetilde{\bx}). \qedhere
    \end{align*}
\end{proof}

\end{document}